\def\dOi{10(1:1)2014}
\subjclass{F.3.1, F.3.2, F.4.3}
\newcommand{\rotxc}[1]{\begin{sideways}#1\end{sideways}}
\newcommand{\invert}[1]{\rotxc{\rotxc{#1}}}
\newcommand{\multiarr}{\rotxc{\invert{$\curlyveeuparrow$}}}
\newcommand{\nt}{~}
\newcommand{\Alg}{\mathrm{Alg}}
\long\def\CUT#1{\marginpar{CUT}}
\long\def\CUT#1{\relax}
\def\fuc#1{\mathsf{#1}}
\def\imp{\mathbin{\Rightarrow}}
\def\impp{\mathbin{\rightarrow}}
\def\e{\mathbin{\wedge}}
\def\ou{\mathbin{\vee}}
\def\ie{{\em i.e.\/}}
\def\eg{{\em e.g.\/}}
\def\balance{ \mathsf{bal}}
\def\dep{ \mathsf{deposit}}
\def\withd{ \mathsf{withdraw}}
\def\max{ \mathsf{max}}
\def\transvalid{ \mathsf{val}}
\def\perm{ \mathsf{conf}}
\newcommand{\Kal}{K}
\newcommand{\ecr}{\models_{\Kal}}
\newcommand{\Va}{\mathrm{VAR}}
\newcommand{\theo}{\mathrm{Thm}}
\newcommand{\Ceq}{\mathrm{Ceq}}
\newcommand{\et}{t \approx t'}
\newcommand{\ceq}{t_1\approx t'_1 \wedge \cdots \wedge t_n \approx t'_n \rightarrow t \approx
t'}
\newcommand{\cM}{\mathcal{M}}
\newcommand{\taupe}{\tau_{\cl,\Kal}}
\newcommand{\thy}{\mathrm{Th}}
\def\imp{\mathbin{\Rightarrow}}
\def\impp{\mathbin{\rightarrow}}
\long\def\CUT#1{\marginpar{CUT}} \long\def\CUT#1{\relax}
\def\caixapeq#1{\medskip
  \begin{center}
  \fbox{\begin{minipage}{0.75\textwidth}\protect{#1}\end{minipage}}
  \end{center}}
\def\caixapeqg#1{\medskip
  \begin{center}
  \fbox{\begin{minipage}{0.8\textwidth}\protect{#1}\end{minipage}}
  \end{center}}
\newcommand{\Th}{\mathrm{Th}}
\newcommand{\Thm}{\mathrm{Thm}}
\newcommand{\Mod}{\mathrm{Mod}}
\newcommand{\Con}{\mathrm{Cn}}
\newcommand{\AX}{\mathrm{AX}}
\newcommand{\CPC}{\mathrm{CPC}}
\newcommand{\HA}{\mathrm{HA}}
\newcommand{\BA}{\mathrm{BA}}
\newcommand{\IR}{\mathrm{IR}}
\newcommand{\Cn}{\mathrm{Cn}}
\newcommand{\Te}{\mathrm{Te}}
\newcommand{\cl}{\mathcal{L}}
\newcommand{\ca}{\mathcal{A}}
\newcommand{\mc}{\mathcal}
\newcommand{\Fm}{\mathrm{Fm}}
\newcommand{\kfm}{\Te^k_\Sigma(X)} 
\newcommand{\true}{\mathit{true}}
\newcommand{\false}{\mathit{false}}
\newcommand{\Eq}{\mathrm{Eq}}
\newcommand{\Sig}{\mathrm{Sig}}
\newcommand{\FV}{\mathrm{FV}}
\newcommand{\Vav}{\mathrm{VAR}}
\newcounter{romcount}
\def\dlogic{deductive system}
\def\dlogics{deductive systems}
\def\fuc#1{\mathsf{#1}}
\def\imp{\mathbin{\Rightarrow}}
\def\impp{\mathbin{\rightarrow}}
\def\e{\mathbin{\wedge}}
\def\ou{\mathbin{\vee}}
\def\true{\fuc{tt}}
\def\false{\fuc{ff}}
\def\ie{{\em i.e.\/}}
\def\eg{{\em e.g.\/}}
\newcommand{\eqq}[1]{\Eq({\Sigma_{#1})}}
\newcommand{\fdec}[3]{#1: #2 \longrightarrow #3}
\newcommand{\mfdec}[3]{#1: #2\,  \mathbin{\multiarr}\, #3}
\def\mean#1{\mathopen{[\![}#1\mathclose{]\!]}}
\def\setcat{{\sf Set}}
\def\tuple#1{\langle #1\rangle}
\def\pair#1{\tuple{#1}}
\def\balance{ \mathsf{bal}}
\def\bal{ \mathsf{bal}}
\def\dep{ \mathsf{deposit}}
\def\withd{ \mathsf{withdraw}}
\def\max{ \mathsf{max}}
\def\transvalid{ \mathsf{valid}}
\def\perm{ \mathsf{cf}}
\begin{document}

\title[The role of logical interpretations in program development]{The role of logical interpretations \\ in program development}

\author[M.~A.~Martins]{Manuel A. Martins\rsuper a}
\address{{\lsuper a}CIDMA - Center for R\&D in Mathematics and Applications, Dep. Mathematics, Univ. Aveiro, Aveiro, Portugal}
\email{martins@ua.pt}

\author[A.~Madeira]{Alexandre Madeira\rsuper b}
\address{{\lsuper b}HASLab - INESC TEC,  Univ. Minho \& Dep. Mathematics, Univ. Aveiro, Portugal}
\email{madeira@ua.pt}

\author[L.~S.~Barbosa]{Lu\'{\i}s S. Barbosa\rsuper c}
\address{{\lsuper c}HASLab - INESC TEC, Univ. Minho, Braga, Portugal}
\email{lsb@di.uminho.pt}

\keywords{Refinement; algebraic specification; deductive system; logical interpretation.}

\begin{abstract}
\noindent Stepwise refinement of algebraic specifications is a well known formal methodology for program development.  However, traditional notions of refinement based on signature morphisms are often too rigid to capture a number of relevant transformations in the context of software design, reuse, and adaptation. This paper proposes a new approach to
refinement in which signature morphisms are replaced by \emph{logical interpretations} as a means to witness refinements. The approach is first presented in the context of equational logic, and later generalised to deductive systems of arbitrary dimension. This allows, for example,  refining sentential into equational specifications and the latter into modal ones.
\end{abstract}

\maketitle

\section{Introduction}\label{sc:in}

\subsection{Context.}
The industrial demand for high-assurance software
 opens a window of opportunity for mathematically based development methods,
 able to design complex systems at ever-increasing levels of reliability and security.

 This paper's contribution is placed at a specific corner of the broad landscape of
 formal methods for software development: that of \emph{algebraic specification}
  \cite{EM85,Wir90,ST97,AKK99}, a family of methods which, having played a pioneering role,
  constitutes  at present a large and mature body of knowledge and active
 research.

  Such methods have a double origin. On the one hand
 they can be traced back to early work on data abstraction and
 modular decomposition of programs \cite{Par72, Hoa72,LZ74,Gut75,GH78}. On the
 other hand, to research on semantics of program specifications building on results from
 algebraic logic and model theory. Especially relevant in this respect is the original work
 of the so-called ADJ group \cite{GTWW77,GTW78} whose initial algebra semantics was the first,
 full formal approach to software development put forward. This double origin, temporally located
  around mid seventies, is not surprising: \emph{compositionality} is both
  a basic requirement in program development and a major asset in algebraic semantics.

 The whole area flourished rapidly from the outset: not only different approaches to semantics (final, observational,
 loose) emerged, but also the initial tie to many-sorted equational logic was soon extended,
 first to conditional-equational logic, and later to order-sorted, partial and  full first-order among other variants.
 The emergence of the first effective algebraic specification languages ---
 \textsc{Obj} \cite{GWMFJ96} and \textsc{Clear} \cite{CLEAR79} --- overlaps another major development:
 the introduction of institutions by J. Goguen and R. Burstall \cite{instituicoes}. Institution theory, which develops
model theory independent of the underlying logical system, made possible to decouple
 specification methodology from the particularities of whatever semantics one may consider
 more suitable to a specific problem \cite{Dia08}.

  Moreover, although for a long time the impact of these methods in industry has been limited, a
  successful effort has been made in the last 15 years towards convergence on generic
  frameworks with suitable tool support. The \textsc{Compass} and, later,
  the \textsc{CoFI} initiative \cite{COFI02}, which lead to the development of   \textsc{Casl} \cite{MHST03},
  are relevant milestones in this process. Besides \textsc{Casl}, Cafe\textsc{OBJ} \cite{cafeobjref}
  and \textsc{Maude} \cite{mauderef} are currently used in several industrial applications and
  tool development.
  Actually,
 research in such methods, either at a foundational or methodological level, found applications
 in new, unsuspected areas --- for example, in documenting service interfaces  \cite{HRD08} , characterising contracts in
 contract-based programming \cite{BH08} or test generation for software composition \cite{YKZZ08}.

\subsection{Motivation.}

  For the working software engineer, a software component
 is documented by an \emph{interface}, which provides a language through which
 it interacts  with its environment, and
 a \emph{specification} of the intended meaning of the services provided. This specification
 is implemented  by a concrete piece of software respecting the specified semantics.

Algebraic specification methods build on the observation that these somehow vague concepts
from Software Engineering can be framed rigorously in terms of well-known mathematical
notions. Thus, an \emph{interface} corresponds to a \emph{signature}, i.e., a set of names for the relevant types, called
\emph{sorts}, and a family of service or operation names, classified by their arity and input-output
sorts. A signature generates a formal language, giving a rigorous meaning to what we have called before
the component's interaction language. Once fixed a signature, a \emph{specification} describes
a \emph{class of models} for that signature, and an \emph{implementation} identifies a specific model
within such a class.
If functions provide suitable abstractions of the services offered by a software component,
this analogy can be made even more concrete by identifying  \emph{interfaces} with
\emph{algebraic signatures}, (denotations of) \emph{specifications} with \emph{classes of algebras}, and
\emph{implementations} with specific \emph{algebras}.

The analogy extends to the entire software development process along which components are \emph{refined}
 by incrementally adding detail and reducing under-specification.
Formally, this is a process of structural transformations witnessed by
\emph{signature morphisms}, which map functionally sorts and operations from a signature
to another respecting the sort translation of functional types.

In such a context this paper raises and discusses the following question: \emph{can more flexible
notions of refinement emerge from replacing signature morphisms  by some  weaker notion of transformation?}

The quest for \emph{weaker notions of transformation} lead us to a different setting, that of  Algebraic Logic \cite{FJP_survey}.
The key conceptual tool is that of a \emph{deductive system},  i.e., a formal language generated by a \emph{signature},
and a \emph{consequence relation}.
Interrelating such systems, through maps connecting logical properties, has been studied from early in the last century.
Such maps were called \emph{translations} and investigated as part of an ambitious programme
addressing tools to handle the multiplicity of logics.
As a result, several intuitive notions of translation are scattered in the literature.
Many logicians tailored the notion, for their own purposes, to relate specific logics and to
obtain specific results. In general, however, a translation is regarded as a map between
sets of formulas of different logics such that the image of a theorem is still a theorem.
They were used originally to clarify the relationship between classical and constructive logics.

Our starting point is the observation that specifications describe classes of models and those can be naturally associated to
deductive systems. Then, translations that  both reflect and preserve consequence relations seem  interesting
candidates to witness weaker forms of refinement. In this paper we will single out a specific sort of  translations
based on \emph{multifunctions},  \ie,  functions mapping an element to a set of
elements.  Such translations are called \emph{interpretations} and constitute a central tool in the study of
equivalent algebraic semantics (see, \eg, \cite{Woj,memoirs,pigozzi,blok,proto}).
A paradigmatic example is the interpretation of the \emph{classical propositional calculus} into the
\emph{equational theory of Boolean algebras} (cf. \cite[Example 4.1.2]{pigozzi}).
This paper explores interpretations
between the deductive systems corresponding to classes of models of specifications
as possible witnesses of refinement steps.
The notion seems able to capture a number of transformations which are difficult to deal with in classical terms.
Examples include data encapsulation and the decomposition of
operations into atomic transactions. It  also seems promising in the context of new, emerging computing paradigms which
entail the need for more flexible approaches to what counts as a valid transformation
along the development process  (see, for example, \cite{batory}).

\subsection{Contribution.}
 In this context, the contribution of this paper, which combines and extends previous results by the authors reported
 in \cite{MMB09} and \cite{MMB09h}, is twofold. On the one hand it puts forward a detailed  characterisation of
refinement witnessed by interpretations, referred in the sequel as \emph{refinement by interpretation},
 and exemplifies its potentialities in a number of small, yet illustrative examples.

 On the other hand, it renders the whole approach at a sufficiently abstract setting to be applicable
 over  \emph{arbitrary}, technically \emph{$k$-dimensional}, deductive systems. The dimension
 fixes the kind of relationship between terms one is interested in. Dimension 2, for example, encompasses
 equations, regarded as  instances of a  binary predicate asserting, for example, term equality, bisimilarity,
 or observational equivalence. Similarly, a unary predicate asserting the validity of a formula is enough to represent a proposition,
 leading to 1-dimensional deductive systems.
 Refinement by interpretation in  a general,  $k$-dimensional setting provides a  suitable context
to deal simultaneously with deductive systems arising from classes of models presented
in different logics, for example, as a set of equations, propositions or modal formulas.

\subsection{Scope.}\label{sc:scope}
Once stated the paper's contributions, it is important to clearly delimit its scope.
First of all it should be stressed that the focus of this paper is not placed on  \emph{speci\-fications}, understood as syntactic
entities which  describe in a structured, modular way classes of models, but rather  on the \emph{classes of models} themselves,
which constitute their denotations.  Deductive systems, the basic tool in our approach, correspond to such classes.
This means  that the whole area of \emph{specification structuring} \cite{ST06} is, for the moment,  left out.
Our approach is not concerned with the fact that specifications describing the relevant classes of models are  \emph{flat},
i.e. given by a finite set of sentences, or \emph{structured}, \emph{i.e.}, built by systematic application
of a number of operators, such as \emph{union},  \emph{translation} or \emph{hiding},
all of them well characterised in the literature and implemented in a number of computer-supported modelling tools.

This does not deny the fundamental importance of specification structuring.
 Research on this topic started with the introduction of \textsc{Clear}
 \cite{CLEAR79}, by the end of the seventies, and its role  cannot be underestimated.
Actually, the recursive definition of structured specifications provides
basic modular procedures for software composition and architecture.  Moreover structuring operations allows one to go
beyond the specification power of simple, unstructured specifications \cite{Bo02}.

Clearly, the approach proposed in this paper can be tuned to specification refinement in a strict sense.  In a recent publication
\cite{RMMB11} we showed how  \emph{refinement by interpretation} can be lifted to the level of structured specifications
with the usual operators mentioned above.
We believe, however, that by focussing on  \emph{classes of models}  this piece of research
acquires a broader scope of application and is worth on its own. In particular, it pays off when dealing with requirements that cannot
be properly formalised in a specification (for example, the property that a controller has a finite number of states).
Note this does not entail any  loss of expressivity:  for each specification, one may recursively compute its denotation
(a signature and a class of models) and work directly with them.

On the other hand, the discussion on which operators should be considered in a specification calculus is still
active. For example, very recently, reference \cite{RT11} introduced two new operators for specification
composition in order to deal with non-protecting importation modes. This further justifies the relevance of a semantic approach
as proposed here.

Another concept to make precise is \emph{refinement}. The word  is taken here in the broad sense of a
transformation mapping an abstract to a more concrete class of models. As such classes are represented by deductive systems,
a refinement will map a  deductive system into another, while preserving the consequence relation. This is in line with the usual
meaning of refinement: all requirements stated at the original level are still valid after refinement.
Moreover, it will be shown in the paper that  \emph{refinement by interpretation}
 between classes of models boils down to the standard notion of refinement as inclusion of classes of models
  whenever the witnessing interpretation is simply an identity.

Finally, a note on the expressivity of deductive systems. Actually, deductive systems can play a double role
representing both logics, on top of which all the specification
machinery can be developed, and classes of models as discussed above. Clearly, any institution induces, for each signature, a deductive system
through its satisfaction relation, and conversely, deductive systems may be viewed as special cases of institutions
as discussed in \cite{Vou2003}. For example, a deductive system may
represent the class of Boolean algebras; but the latter can also be specified as a theory in a suitable institution.
Another  example is provided by modal logics which can be regarded as both a deductive system  or
a theory in the first-order (FOL) institution through the standard translation.

This provides a uniform view of seemingly different settings, enabling us to
 discuss how essentially the same conceptual tool, that of \emph{logical interpretation}, can be used to
 interrelate logics and refine classes of  models both regarded as deductive systems.
 Reciprocally, deductive system can be endowed with an algebraic semantics, as
 discussed in section \ref{sc:algsem}. Note that the quest for such a uniform representation of logics and
 logical theories pops up in other contexts, namely on the design of logical frameworks. A prime example
 is provided by the logics-as-theories approach proposed by F. Rabe in \cite{RabePhD08}, resorting to a
 type theoretical framework, and further developed in the context of the \textsc{Latin} project \cite{latin11}.

 \subsection{Paper structure.}
Section \ref{sc:pre} introduces \emph{$k$-dimensional deductive systems} and their semantics following \cite{pigozzi}.
This paves the way to the formulation of refinement by interpretation in  a general setting in sections \ref{sc:gen}
 and \ref{sc:rbi2}. Before that, however, in section  \ref{sc:rbi1}, the approach is instantiated for the  case
 of algebraic specifications over the institution of Horn clause logic. This is a
 popular framework for algebraic specifications which not only deserves attention on its own, but also provides a
 simpler setting to build up intuitions.
  Finally, section \ref{sc:conc} concludes and suggests some problems
 deserving further attention.

\section{Preliminaries}\label{sc:pre}
\begin{center}
\begin{minipage}{0.8\textwidth}
\emph{Specifications of complex systems resort to different logics, and even to their
combination. Consequently a cha\-ra\-cterisation of
\emph{refinement by interpretation} needs to be orthogonal to whatever logic is used in specifications.
This is achieved through the notion of $k$-\emph{dimensional deductive systems}, of which the equational case is just an instance for $k=2$.
This section reviews such systems and their semantics, following \cite{pigozzi}, to provide the background for the sections to follow.}
\end{minipage}
\end{center}


\subsection{Deductive systems and translations}
Roughly speaking, a deductive system is a general mathematical
  tool to reason about formulas in a language generated by a signature.
 Formally, it is defined as a   pair $\mc S=\langle \Sigma,\vdash \rangle$, where $\Sigma$ is a
 signature  and $\vdash$ is a substitution-invariant consequence relation between sets of formulas and individual formulas. Clearly, any standard sentential logical system, defined in the usual way by a set of axioms and a set of inference rules (for instance,
 classical and intuitionist propositional calculus, referred to in the sequel as CPC and IPC, respectively),
 is a deductive system.
   First order logic can also be formulated as a deductive system \cite{memoirs}, which shows
 how broad  the concept is.  The formal notion of a deductive system, in this abstract perspective,
  was originally considered by {\L}ukasiewicz  and  Tarski  \cite{Tarski} and intensely studied, from
  an  algebraic point of view, by many logicians.
 This gave rise  to a new, extremely relevant area of Mathematics, that of \emph{abstract algebraic logic} \cite{FJP_survey}.

Although in some  literature on algebraic logic this substitution-invariant consequence relation has been called a \emph{logic} (cf. \cite{proto}),
we adopt along the paper  the designation of \emph{deductive system} used by Blok and Pigozzi \cite{memoirs}.
 This terminology allows us to distinguish this concept from the habitual meaning  logic has in Computer Science,
typically understood  as an abstract  framework to express specifications and often
 abstracted as an \emph{institution} \cite{instituicoes,Dia08} or a $\pi$-institution \cite{FS88}.

As mentioned above, translation maps were introduced in the early 20th century as a means to interrelate
deductive  systems.
 They were first used  to understand the relationship between classical and constructive logics.   The well-known G\"odel translation of classical logic into intuitionistic logic has inspired disperse works on comparing different logics by means of translations. Illustrative examples include the works of Kolmogorov \cite{Kol77}, Glivenko \cite{G29}, and G{\"{o}}del \cite{go33} involving classical, intuitionist, and modal logics.

To the best of our knowledge the first general definition of translation between
deductive systems is due to Prawitz and Malmn\"as \cite{PM68}. More recently, W\'{o}jcicki \cite{Woj} presented a systematic study of translations between logics,  focussing on inter-relations between sentential logics.
And the quest goes on (cf. \cite{LT,CR,CCO09}).
At the turn of the century, Silva, D`Ottaviano and Sette \cite{SOF99}  proposed a general definition
 of translation between logics as maps preserving consequence
relations. Then, Feitosa and D`Ottaviano studied intensively the subclass of translations that preserve and reflect
consequence relations and coined the name \emph{conservative translation} \cite{traducoesconservativas}.

  Conservative translations  which are able to relate a formula to a set of formulas, and are therefore defined as
   \emph{multifunctions}, are  called \emph{interpretations}. Those which commute with substitutions
   were originally used in abstract algebraic logic to define a very important class of deductive systems ---
   referred to as  \emph{algebraisable} \cite{memoirs}.
  In particular, they  abstract the strict relationship between classic propositional logic and the class of Boolean algebras. A deductive system is said to be \emph{algebraisable} whenever there exists a class $K$ of algebras such that the consequence relation induced by $K$ is equivalent to
   the consequence in the deductive system. Such an equivalence was originally defined by means of two mutually inverse interpretations. Since then, this link between logic and universal algebra has been successfully explored. In particular, for an algebraisable \dlogic\ $\mc S$,
 properties of $\mc S$ can be related to algebraic properties of its equivalent algebraic semantics. This kind of results, of which many examples exist,  are often called \emph{bridge theorems}.

\subsection{$k$-dimensional deductive systems}
 In order to broaden the spectrum of application
 of deductive systems, Blok and Pigozzi et al. \cite{pigozzi}
introduced consequence relations over $k$-tuples of formulas, for $k$ a non-zero natural number.
   \emph{$k$-deductive systems},
   the result of this generalisation, are the higher dimensional version of the well known sentential logics.
   Their theory provides a unified treatment for several \dlogics\ such as the ones corresponding to
   assertional,  equational, and inequational logics. This generalisation also allows to regard interpretations
   witnessing algebraisability as a special kind of translations between $k$-deductive systems.

An \emph{equation},  represented in this paper by a formal
expression $\et$, can be regarded as a pair of terms (or formulas) $\langle t,t' \rangle$. This, in turn,
is an instance of a binary predicate standing for the equality of two terms. Similarly, a
\emph{unary predicate} asserting the validity of a formula is enough to represent a proposition.
The first observation leads to what will be characterised in the sequel as
a $2$-dimensional deductive system, of which the equational case is a particular instance.
The second corresponds, roughly speaking, to sentential logics  in a quite broad sense (to include,
for example, first-order predicate logic when suitably formalised).

In general, adding a $k$-ary predicate to a strict universal Horn theory without equality,
gives rise to a representation of a $k$-dimensional \dlogic, thus providing a suitable context
to deal simultaneously with different specification logics.

 We go even a step further considering $k$-deductive systems over \emph{many sorted} languages, because, in general,
 software systems manipulate several sorts of data. Almost all notions can be formulated in this broader setting
 as discussed later.
Note there are other generalisations that allow the reuse of
 arguments and tools from abstract algebraic logic in computer science contexts. Hidden logics,
 introduced by Pigozzi and Martins in \cite{conditional_prof} (see also \cite{TCS_mart} and \cite{Bm13}) are a
 prime example. They have been efficiently used  to develop specification and verification methodologies for  object oriented software systems. Examples include the Boolean logics, \ie, 1-dimensional multi-sorted logics with Bool
 as the only visible sort, and equality-test operations for some of the hidden sorts in place of equality predicates.

The syntactic support for  $k$-dimensional deductive systems is that of a  $k$-term.
Let $\Sigma = \pair{S,\Omega}$ be a signature and $X = (X_s)_{s \in S}$ a $S$-sorted set of variables.
A \emph{$k$-term}
of sort $s$ over  signature $\Sigma$
 is a sequence of $k$ $\Sigma$-terms, all of the same sort $s$,
 $\bar \varphi\:s=\langle \varphi_0\:s,\dots,\varphi_{k-1}\:s\rangle$, abbreviated to $\bar\varphi$  whenever
references to sorts can be omitted. A \emph{$k$-variable of sort $s$} is a sequence of $k$ variables all of the same sort $s$. $\kfm$ is the sorted set of all $k$-terms over $\Sigma$ with
variables in $X$, i.e.,
\[\Te^k_\Sigma(X) =  \langle (\Te_\Sigma(X)_{s})^k|s\in S \rangle\]

\noindent where $\Te_\Sigma(X)_{s}$ is the set of all terms over $\Sigma$, of sort $s$, with
variables in $X$.
Whenever each $\Te_\Sigma(X)_s$, for each sort $s$ in $\Sigma$, is non empty, their union acts as the carrier of  the $\Sigma$-\emph{term algebra} freely generated from $X$, which
we denote by $\Te_\Sigma(X)$. A \emph{substitution on} $\Te_\Sigma(X)$ is just an endomorphism over $\Te_\Sigma(X)$.

Let us fix some  notation and terminology: if
$\bar\varphi(x_0\:s_0,\dots,x_{n-1}\:s_{n-1})$ is a $k$-term over $\Sigma$, $A$ is a $\Sigma$-algebra,
 and $a_0\in
A_{s_0},\dots,a_{n-1}\in A_{s_{n-1}}$,  we denote by $\bar\varphi^A(a_0,\dots,a_{n-1})$
the value $\bar\varphi$ takes in $A$ when variables $x_0,\dots,x_{n-1}$ are instantiated
respectively by $a_0,\dots,a_{n-1}$. More precisely, if
$$\bar\varphi(x_0,\dots,x_{n-1})
=\langle\varphi_0(x_0,\dots,x_{n-1}),\dots,\varphi_{k-1}(x_0,\dots,x_{n-1})\rangle,$$ then
$\bar\varphi^A(a_0,\dots,a_{n-1})= h(\bar\varphi):=\langle
h(\varphi_0),\dots,h(\varphi_{k-1})\rangle$, where $h$ is any homomorphism from
$\Te_\Sigma(X)$ to $A$ such that $h(x_i)=a_i$ for all $i<n$.

Let $\Va=\langle \Va_s\rangle_{s\in S}$ be an arbitrary but fixed family of countably infinite disjoint sets
$\Va_s$ of variables of sort $s\in S$. Following a typical procedure in similar contexts (\eg, \cite{lew00}),
 we assume in the sequel $\Va$ fixed for
each set of sorts $S$ and large enough to contain all variables needed. Symbols of variables are obviously disjoint
of any other symbol in the signature. As usual in sentential logic frameworks, we will refer to
formulas ($k$-formulas) as  synonymous to terms ($k$-terms respectively). Accordingly, we will denote $\Te_\Sigma(\Va)$ by $\form{}{\Sigma}$.
Moreover, for each nonzero
natural number $k$, given a sorted signature $\Sigma$, a \emph{$k$-formula} of sort $s$ over
$\Sigma$ is any element of $(\Te_\Sigma^k(\Va))_s$. The set of all $k$-formulas will be
denoted by $\form{k}{\Sigma} $. Also note that an $S$-sorted subset $\Gamma$ of $k$-formulas
is identified with the unsorted set $\bigcup_{s\in S} \Gamma_s$, which allows writing  $\bar
\varphi\in \Gamma$ to mean that $\bar \varphi\in \Gamma_s$, for some sort $s$.
A set $\Gamma \subseteq \form{k}{\Sigma}$ is said to be \emph{globally finite} when $\Gamma_s$ is a finite set for each
sort $s$ of $\Sigma$, equal to  $\emptyset$ except for a finite number of them, \ie,  $\bigcup_{s\in S} \Gamma_s$ is finite.
In this setting, a $k$-dimensional deductive system
is defined as a
substitution-invariant consequence relation on the set of $k$-formulas.
The following definition generalises  the one due to W. Blok and D. Pigozzi  \cite{pigozzi} for the
one-sorted case.

\begin{defi}\label{df:klogic}
A \emph{$k$-dimensional deductive system}  is a pair $\cl=\left\langle \Sigma,\vdash_\cl
\right\rangle $, where $\Sigma$ is a sorted signature and $\vdash_\cl\,\subseteq\mathcal{P}(\form{k}{\Sigma})\times\form{k}{\Sigma} $ is a relation
such that, for all
$\Gamma\cup\Delta \cup\{\bar\gamma,\bar\varphi\}\subseteq \form{k}{\Sigma}$, the following
conditions hold:

\begin{enumerate}[label=(\roman*)]
\item $\Gamma\vdash_\cl\bar\gamma$ for each $\bar\gamma\in\Gamma$;
\item if $\Gamma\vdash_\cl\bar\varphi$, and $\Delta\vdash_\cl\bar\gamma$ for each $\bar\gamma\in\Gamma$,
then $\Delta\vdash_\cl\bar\varphi$;
\item if $\Gamma\vdash_\cl\bar\varphi$, then
$\sigma(\Gamma)\vdash_\cl \sigma(\bar\varphi)$ for every substitution $\sigma$.
\end{enumerate}

 A $k$-\dlogic\ is
\emph{specifiable} if $\vdash_\cl$ is \emph{compact} (or \emph{finitary} in the terminology
of abstract algebraic logic), \ie, if, whenever  $\Gamma\vdash_\cl\bar\varphi$, there exists
 a globally finite subset $\Delta$ of $\Gamma$
such that
$\Delta\vdash_\cl\bar\varphi$.
The relation $\vdash_\cl$, abbreviated to $\vdash$ whenever  $\cl$ is clear
from the context,  is called \emph{the consequence relation of} $\cl$.
\end{defi}

It is easy to see that, for any
$\Gamma\cup\Delta \cup\{\bar\gamma,\bar\varphi\}\subseteq \form{k}{\Sigma}$,
$\Gamma\vdash\bar\gamma$ and $\Gamma\subseteq \Delta$ imply
 $\Delta\vdash\bar\gamma$.

\medskip

Every consequence relation $\vdash$ has a natural extension to a relation between sets of  $k$-formulas, also denoted by
$\vdash$,  defined by $\Gamma\vdash\Delta$ if
$\Gamma\vdash\bar\varphi$ for each $\bar\varphi\in\Delta$. Finally, the relation of
\emph{interderivability}  between sorted sets is defined by $\Gamma\dashv\vdash
\Delta$ if $\Gamma\vdash \Delta$ and $\Delta\vdash \Gamma$. We  abbreviate
$\Gamma\cup\{\bar\varphi_0,\dots,\bar\varphi_{n-1}\}\vdash\bar\varphi$ and $\Gamma_0\cup
\dots\cup\Gamma_{n-1}\vdash\bar\varphi$ by
$\Gamma,\bar\varphi_0,\dots,\bar\varphi_{n-1}\vdash\bar\varphi$ and $\Gamma_0,
\dots,\Gamma_{n-1}\vdash\bar\varphi$, respectively.

Let $\cl$ be a (not necessarily specifiable) $k$-\dlogic. A \emph{thm} of $\cl$ is a
 $k$-formula $\bar\varphi$ such that $\vdash_\cl\bar\varphi$, \ie,
$\emptyset\vdash_\cl\bar\varphi$. The set of all theorems is denoted by $\Thm(\cl)$. An \emph{inference rule}
 is a pair $\langle
\Gamma,\bar\varphi\rangle$ where $\Gamma =\{\bar\varphi_{0},\dots,\bar\varphi_{n-1}\}$
a globally finite set of $k$-formulas and $\bar\varphi$
a $k$-formula, usually  represented as

\vspace{-0.05 cm}
\begin{equation}
\label{sequent} \displaystyle
\frac{\bar{\varphi}_{0},\dots,\bar{\varphi}%
_{n-1}}{\bar{\varphi}_{n}}
\end{equation}
\vspace{-0.05 cm}

A rule
such as (\ref{sequent}) is said to be a \emph{derivable rule} of $\cl$ if
$\{\bar\varphi_0,\dots,\bar\varphi_{n-1}\}\vdash_\cl \bar\varphi_n$. A set of $k$-formulas $T$
closed under the consequence relation, \ie, such that $T\vdash_\cl\bar\varphi$ implies $\bar\varphi \in
T$, is called a \emph{theory} of $\cl$. The set of all theories is denoted by $\Th(\cl)$; it
forms a complete lattice under set-theoretic inclusion, which is algebraic if $\cl$ is
specifiable. Given any set of $k$-formulas $\Gamma$, the set of all consequences of $\Gamma$,
in symbols $\Con_\cl(\Gamma)$, is the smallest theory that contains $\Gamma$. It is easy to
see that $\Con_\cl(\Gamma)=\{\,\bar\varphi\in\form{k}{\Sigma}: \Gamma\vdash_\cl
\bar\varphi\}$.
Often,  a specifiable $k$-\dlogic\ is presented in the so-called Hilbert style, \ie, by a
set of axioms ($k$-formulas) and inference rules.
 We say that a $k$-formula $\bar\psi$ is
\emph{directly derivable} from a set $\Gamma$ of $k$-formulas by a rule such as
(\ref{sequent}) if there is a substitution $h:\form{}{\Sigma} \rightarrow \form{}{\Sigma}$ such that
$h(\bar\varphi_n)=\bar\psi$ and $h(\bar\varphi_0),\dots,h(\bar\varphi_{n-1})\in\Gamma$.

Given a set $\AX$ of $k$-formulas and a set $\IR$ of inference rules, we say that $\bar\psi$
is \emph{derivable} from $\Gamma$ by $\AX$ and $\IR$, in symbols
$\Gamma\vdash_{\AX,\IR} \bar \psi$, if there is a \emph{proof}, \ie, a finite sequence of $k$-formulas,
$\bar\psi_0,\dots,\bar\psi_{n-1}$ such that $\bar \psi_{n-1}=\bar\psi$, and for each $i<n$, one
of the following conditions holds:

\begin{enumerate}[label=(\roman*)]
\item  $\bar\psi_i\in \Gamma$,
\item $\bar\psi_i$ is a substitution instance of a
$k$-formula in $\AX$,
 \item $\bar\psi_i$ is directly derivable from $\{\bar\psi_j:j<i\}$ by
one of the inference rules in $\IR$.
\end{enumerate}

It is clear that $\langle \Sigma,\vdash_{\AX,\IR}\rangle$ is a specifiable $k$-\dlogic.
Moreover, a $k$-deductive system $\cl$ is specifiable iff there exist possibly infinite sets $\AX$ and
$\IR$,  of axioms
 and inference rules respectively,  such that, for any $k$-formulas $\bar\psi$ and any set
$\Gamma$ of $k$-formulas, $\Gamma\vdash_\cl\bar\psi$ iff $\Gamma\vdash_{\AX,\IR} \bar
\psi$ (see \cite{proto} for the one sorted case). This justifies that all the examples of specifiable
deductive systems  introduced in this paper are presented by a  set of axioms and inference rules.

 If a deductive system $\cl$ is equal to $\langle \Sigma,\vdash_{\AX,\IR}\rangle$, for some sets $\AX$ and
$\IR$ with $|\AX\cup \IR|<\omega$, $\cl$ is said to be \emph{finitely axiomatisable}. A
$k$-\dlogic\   $\cl'=\langle \Sigma, \vdash_{\cl'} \rangle$ is an \emph{extension} of the
$k$-\dlogic\  $\cl=\langle \Sigma, \vdash_\cl \rangle$ if $\Gamma\vdash_{\cl'}\bar \varphi$
whenever $\Gamma\vdash_{\cl}\bar \varphi$ for all $\Gamma\cup \{\varphi\}\subseteq
\form{k}{\Sigma}$ (\ie, $\vdash_\cl\;\subseteq \; \vdash_{\cl'}$).
A $k$-\dlogic\  $\cl'$ is
\emph{an extension by axioms and rules of} a specifiable $k$-\dlogic\  $\cl$ if it can be
axiomatised by adding axioms and inference rules to the axioms and rules of some
axiomatisation of $\cl$.

 \subsection{The equational case}\label{sc:eqqq}
Typical examples of $k$-deductive systems are the ones induced by algebraic specifications (see Section \ref{sc:rbi1}). Given a signature
$\Sigma$, they are defined over pairs of $\Sigma$-terms $\langle t,t' \rangle$, representing equations  $t \approx t'$, and
have therefore dimension $k=2$.  Similarly, $\Sigma$-conditional equations can be taken as pairs
  $\langle \Gamma, e \rangle$ where $\Gamma$ is a globally
finite subset of $\form{2}{\Sigma}$ and $e\in\form{2}{\Sigma}$.
As a particular case, an  equation $\et$ is  a conditional equation without premisses, $\langle \emptyset, \et \rangle$.
In general, a conditional equation $\langle
\{ t_1\approx t'_1 ,\dots, t_n \approx t'_n\},\et \rangle$ will be written as $\ceq$.
In the sequel we will often use $\Eq(\Sigma)$, instead of  $\form{2}{\Sigma}$, for the set of all equations over $\Va$,
and, similarly, $\Ceq(\Sigma)$, for the set of  all $\Sigma$-conditional equations over $\Va$.

Let $\Gamma \cup \{t\approx t'\}\subseteq \form{2}{\Sigma}$ and $A$ be a $\Sigma$-algebra. We write
$\Gamma\models_At\approx t'$ if, for every homomorphism $h:\form{}{\Sigma}\rightarrow A$,
\begin{equation*}
h(\xi)=h(\eta), \text{ for every }\xi\approx\eta\in\Gamma, \text{ implies }h(t)=h(t').
\end{equation*}
For $\Gamma=\emptyset$, $\,\emptyset\models_A
t\approx t'$ is abbreviated to  $\models_At\approx t'$.
An equation $t\approx t'$ is an \emph{identity} \index{identity} of $A$ if $\models_A t\approx
t'$. Similarly, a  conditional equation
$\xi=t_1\approx t'_1 ,\dots, t_n \approx t'_n \rightarrow\et $
is a \emph{quasi}-\emph{identity} \index{quasi! identity} of $A$ if
$\{t_1\approx t'_1 ,\dots, t_n \approx t'_n\}\models_A \et$, which is simply written as $A \models \xi$ when
clear from the context.

These definitions extend, as expected,  to classes of algebras.
Given a class of $\Sigma$-algebras $K$, the (\emph{semantic}) \emph{equational}
\emph{consequence} \emph{relation} \index{equational! consequence relation} $\models_K$
\index{$\ecr$} determined by $K$ is  defined by
\begin{equation*}
\Gamma\models_K t\approx t'\text{ iff, for every } A\in K,\; \; \Gamma\models_A
 t\approx t'.
 \end{equation*}
In this case  $t\approx t'$ is  said to be a $K$-\emph{consequence} \index{consequence!
$\Kal$-} of $\Gamma$. When clear from the context,   we simply write
$K\models \xi$, where $\xi=t_1\approx t'_1 ,\dots, t_n \approx t'_n \rightarrow\et$,
 for $\{t_1\approx t'_1 ,\dots, t_n \approx t'_n\}\models_K t\approx t'$. Both $\models_A$ and  $\models_K$ extend to sets of
 equations $C$: $\Gamma \models_A C$ iff $\Gamma \models_A \xi$ for each $\xi \in C$, and
 respectively for $\models_K$. For a set $\Phi$ of quasi-equations, adopting the notational convention explained above and
 rather standard in Universal Algebra, $A \models \Phi$ stands for $A \models \xi$ for each $\xi \in \Phi$ (analogously for a class $K$ of $\Sigma$-algebras).

The equational consequence relation $\models_K$ satisfies the conditions
of Definition \ref{df:klogic}.
Hence it constitutes an example of a 2-\dlogic\  (perhaps the most important one!)
often simply denoted  by $K$.

 A class $K$ of $\Sigma$-algebras is axiomatised by a set $\Phi$ of conditional equations if

  \medskip

  \centerline{ $K=\Big\{A \; \mid\; \{t_1\approx t'_1 ,\dots, t_n \approx t'_n\} \models_A t\approx t' \; \; \mbox{ for all }
   t_1\approx t'_1 ,\dots, t_n \approx t'_n\rightarrow t\approx t'  \in \Phi\Big\}$.}

\medskip

\noindent It can be proved that, if $K$ is a class of $\Sigma$-algebras axiomatised
by a set $\Phi$ of conditional equations,
 then the relation $\models_{K}$ is specifiable (see \cite{blok} for the
one-sorted case). In this case it can be defined in the Hilbert style, taking
the set of $\Sigma$-equations  in $\Phi$, together with reflexivity, as the set of
axioms, and the set of $\Sigma$-conditional equations in $\Phi$, along with symmetry,
transitivity and congruence rules, as inference rules. Any specifiable equational
deductive system over $\Sigma$ is the natural extension (by axioms and rules) of the (2-dimensional) free
deductive system over  $\Sigma$ denoted by $\mathrm{EQ}_\Sigma$ and  defined in Figure \ref{fig:3}.
 Note that  the consequence relation associated to $\mathrm{EQ}_\Sigma $ coincides with  $ \models_{\mathrm \Alg(\Sigma)}$, where $\Alg(\Sigma)$
is the class of all $\Sigma$-algebras.

\medskip
\begin{figure}
\caixapeqg{\smallskip\begin{center}
\begin{minipage}{0.9\textwidth}
\begin{DSpecDefn}{$\mathrm{EQ}_\Sigma$}
\item[\Axioms]     ~
\\            $ \qquad \langle x\:s, x\:s\rangle$  \\
\item[\textbf{inference rules}]  ~ \\
\\           $  \qquad\displaystyle \frac{\langle x\:s,y\:s\rangle}{\langle y\:s,
x\:s\rangle} \qquad$ \\
\\           $   \qquad \displaystyle \frac{\langle x\:s, y\:s\rangle,\langle y\:s,
z\:s\rangle}{\langle x\:s, z\:s \rangle} \qquad$  \\
\\            $ \qquad\displaystyle \frac{\langle x_0\:s_0, y_0\:s_0\rangle,\dots,
\langle x_{n-1}\:s_{n-1},
y_{n-1}\:s_{n-1}\rangle}{\langle O(x_0,\dots,x_{n-1}), O(y_0,\dots,y_{n-1})\rangle}$\\ ~\\ ~\\
\footnotesize{(for each sort $s$, operation symbol $O: s_0,\dots,s_{n-1}\rightarrow s$ in $\Sigma$)}
\end{DSpecDefn}
 \end{minipage}
\end{center}
\smallskip
}
\caption{ The 2-dimensional free deductive system over $\Sigma$.}\label{fig:3}
\end{figure}

Recall the discussion of the double role played by deductive systems in subsection \ref{sc:scope}.
Actually, it is well known that any deductive system can be seen as a $\pi$-institution \cite{FS88}, and
taking care of some
 extra technicalities, as an institution.
 However some special finitary ones are, in  a more natural way,  viewed as theories of institutions, \emph{i.e.}, as sets of sentences
 equipped with extra features ---  the deductive apparatus  they induce. For instance,  given a quasivariety of algebras, we can define a
 (2-dimensional) deductive system  over the set of equations by using the identities and quasidentities that define the quasivariety, together with the axioms of reflexivity and the inference rules of symmetry, transitivity and congruence rules.
 This deductive system can naturally be seen as a theory of the institution of Horn clause logic.

\subsection{$k$-structures}

As discussed in \cite{pigozzi}, a semantics for arbitrary $k$-\dlogics\  needs to go beyond
the usual algebraic structures, resorting to algebras endowed with a set of
$k$-tuples.
 Formally,

 \begin{defi}\label{df:kstruct}
 A \emph{$k$-structure} over a signature
$\Sigma = (S, \Omega)$ is a pair $\ca=\langle A, F\rangle$ where $A$ is a $\Sigma$-algebra and
$F$ is a sorted set $\pair{F_s}_{s \in S}$ such that $F_s \subseteq A_s^k$
\end{defi}

In this definition, the sorted set $F$ of designated elements of $A$, can be regarded
as a set of truth values on $A$: a formula holds if its interpretation is one of these elements.
This is why $F$ is called a \emph{filter}: for a deductive system representing the constructive
propositional calculus on a Boolean algebra the notion boils down to the familiar, Boolean filter.

Let $\mathcal{A}=\langle A,F\rangle$ be a $k$-structure. A $k$-formula $\bar \varphi\:v$
is said to be a \emph{semantic consequence} in $\mathcal{A}$ of a set of $k$-formulas
$\Gamma$, in symbols $\Gamma\models_{\mathcal{A}} \bar \varphi$, if, for every assignment
$h:\Va\rightarrow A$, $h(\bar \varphi)\in F_v$ whenever $h(\bar \psi)\in F_{w}$ for every $\bar
\psi\:w \in \Gamma$, where   $F_s$ is the component $s$ of the sorted set $F$.
  Notice that the same notation is used  for the assignment and its natural extension to formulas.
 A $k$-formula $\bar \varphi$ is \emph{valid} in $\mathcal{A}$, and
conversely $\mathcal{A}$ is a \emph{model} of $\bar\varphi$, if $\emptyset
\models_{\mathcal{A}}\bar\varphi$. A rule such as (\ref{sequent}) is a \emph{validity}, or a
\emph{valid rule}, of $\mathcal{A}$, and conversely $\mathcal{A}$ is a \emph{model} of the
rule, if $\{\bar\varphi_0,\dots,\bar\varphi_{n-1}\}\models_{\mathcal{A}} \bar\varphi_n$. A
formula $\bar\varphi$ is a \emph{semantic consequence} of a set of $k$-formulas $\Gamma$ for
an arbitrary class $\cM$ of $k$-structures over $\Sigma$, in symbols
$\Gamma\models_\cM\bar\varphi$, if $\Gamma\models_{\mathcal{A}}\bar\varphi$ for each
$\mathcal{A}\in \cM$. It can be proved that $\models_\cM$ is always a $k$-\dlogic, even if
 not always specifiable.

 Similarly, a $k$-formula or a rule is a \emph{validity of} $\cM$ if it is
a validity of each member of $\cM$. A $k$-structure $\mathcal{A}$ is a \emph{model} of a
$k$-\dlogic\  $\cl$ if $\Gamma\vdash_\cl\bar\varphi$ always implies $\Gamma\models_{\mathcal{A}}\bar\varphi$, \ie\
if every consequence of $\cl$ is a semantic consequence of $\mathcal{A}$.
The special models whose underlying algebra is the formula algebra, \emph{i.e.}, models of the form $\langle
\Fm^k(\Sigma), T\rangle$, for $T$ a theory, are called \emph{Lindenbaum-Tarski models}. The
class of all models of $\cl$ is denoted by $\Mod(\cl)$. If $\cl$ is a specifiable $k$-\dlogic,
then $\mathcal{A}$ is a model of $\cl$ iff every axiom and rule of inference is a validity of
$\mathcal{A}$.

In the equational case the semantics based on $2$-structures boils down to the traditional algebraic semantics. More precisely, given a quasi-equational class $K$ of $\Sigma$-algebras (\emph{i.e.} axiomatised by a set of quasi-equations over $\Sigma$), the algebra of any model of the 2-deductive system induced by $K$ with the identity as its filter belongs to $K$.

A class of $k$-structures $\cM$ is a \emph{$k$-structure semantics} of $\cl$ if
$\vdash_\cl\; = \;\models_\cM$. The class of all models of $\cl$ forms a $k$-structure
semantics of $\cl$. This fact is expressed in a specific completeness theorem \cite{conditional_prof},
stating that, for
any $k$-\dlogic\  $\cl$, $\Gamma\vdash_\cl\bar\varphi$ iff $\Gamma\models_{\Mod(\cl)}\bar\varphi$,
for every $\Gamma\cup\{\bar\varphi\}\subseteq \form{k}{\Sigma}$.

\section{Refinement by interpretation: The equational case}\label{sc:rbi1}

\begin{center}
\begin{minipage}{0.8\textwidth}
\emph{This section introduces, exemplifies and discusses the concept of  \emph{refinement by interpretation} ---
the core contribution of the paper --- framed in the specific, but popular, setting of algebraic specification over
the institution of Horn clause logic. }
\end{minipage}
\end{center}


\subsection{Algebraic specification and refinement}
This section introduces refinement by logical interpretation for
 the specific case of algebraic specifications.
 As usual (see \eg, \cite{ST11}), an \emph{algebraic specification} $SP$  is considered here
 as a structured specification over the institution of Horn clause logic (\textbf{HCL})  \cite{Dia08}.
 Each  $SP$ denotes a pair $\langle \Sigma,\spm{SP}\rangle$ where $\Sigma$ is a signature and $\spm{SP}$ is a
class of $\Sigma$-algebras. The class $\spm{SP}$ of $\Sigma$-algebras is called the \emph{model class of}
$SP$, and each  $\Sigma$-algebra in $\spm{SP}$  a \emph{model of} $SP$. If $\xi$ is
a conditional equation $\langle \Gamma,e\rangle$ (respectively, an equation $e$), we write  $SP\models \xi$ for
$\Gamma\models_{\spm{SP}} e$ (respectively, $\models_{\spm{SP}} e$). Both cases extend, as expected, to sets of
conditional equations (respectively, equations).

When an algebraic specification $SP$ is
\emph{flat}, or \emph{basic}, its model class
 $\spm{SP}$ of algebras is axiomatised by a set $\Phi$ of
conditional equations. In this case $SP$ is the pair  $\langle \Sigma, \Phi \rangle$.
If the definition is restricted to formulas over a specific set of $\Sigma$-variables  $X \subseteq \Va$,
$SP$ is said $X$-\emph{flat}.
When $\Phi$ is a set of equations the flat specification $SP=\langle \Sigma, \Phi\rangle$ is called a (flat)
\emph{equational specification}. Recall that a class $K$ of $\Sigma$-algebras axiomatisable by a set of equations is called a \emph{variety}.
 A variety can also be characterised as a nonempty class $K$ of $\Sigma$-algebras
 closed under homomorphic images, subalgebras and direct products (cf. \cite{Burris}, \cite{ST11}). This famous result, due to
Birkhoff, turns out to be very useful to show that a given algebraic specification is not an equational specification.
 In the sequel, for simplicity, when clear from the context, we refer to algebraic specifications simply as specifications.

In this context  \emph{stepwise refinement} \cite{ST97, refinamentos_prof}  of specifications refers to the process through which
 a complex design is produced by incrementally adding details and reducing under-specification.
This proceeds step-by-step until the class of models becomes restricted to such an extent that a
  program can be easily manufactured.
  Formally, given a specification $SP$,  the implementation process
builds a correct realisation from a concrete enough  class of $\Sigma$-models $K$ such that $K$ is a subset of the class of denotations of
$SP$. During this
process,  the specification is enriched according to specific design decisions, iteratively approaching
the intended meaning for the final program.

Starting from an initial abstract specification $SP_0$, refinement builds a chain of specifications
$$SP_0\rightsquigarrow SP_1\rightsquigarrow SP_2\rightsquigarrow\cdots\rightsquigarrow SP_{n-1}\rightsquigarrow SP_n,$$
where, for $1\leq i\leq n$, $SP_{i-1} \rightsquigarrow SP_i$  represents
reverse inclusion of the respective classes of models.
Transitivity of inclusion assures that $SP_0\rightsquigarrow SP_n$.
From $SP_1$ onwards each element in this chain is the result of a
\emph{refinement step}.

In order to deal with more complex requirements along the implementation process, for example to
enable the possibility of renaming, adding or grouping together different
signature components, refinement steps $SP' \rightsquigarrow SP$ are traditionally taken up to
\emph{signature morphisms}. Recall that a \emph{signature morphism} from $\Sigma=\pair{S,\Omega}$ to
$\Sigma'=\pair{S',\Omega'}$
is a pair
$\sigma=\pair{\sigma_{sort},\sigma_{op}}$, where $\sigma_{sort}:S\rightarrow
S'$ and $\sigma_{op}$ is a $(S^*\times S)$-family of
functions respecting the sorts of operation names in $\Omega$, \emph{i.e.},
$\sigma_{op}=(\sigma_{\omega,s}:\Omega_{\omega,s}\rightarrow
\Omega'_{\sigma^*_{sort}(\omega),\sigma_{sort}(s)})_{\omega \in S^*,s \in S}$ (where for
$\omega=s_1\dots s_n \in S^*,
\sigma^*_{sort}(\omega)=\sigma_{sort}(s_1)\dots\sigma_{sort}(s_n)).$

Given a signature morphism $\sigma:\Sigma\rightarrow \Sigma'$ and a $\Sigma'$-algebra $A$,
let $A\!\upharpoonright_\sigma$ denote the \emph{reduct} of $A$ along $\sigma$, \ie, for any $s \in S$,
$s^{({A\!\upharpoonright_\sigma})} = \sigma(s)^{A}$, and for all $f:s_1,\dots,s_n \rightarrow s \in \Sigma$,
$f^{A\upharpoonright_\sigma}=\sigma_{op}(f)^{A}$. The notation $s^A$ and $f^A$ refer, respectively, to the carrier of sort $s$  and
the interpretation of operation symbol $f$ in the algebra $A$.
In this context we say that $SP'$ is a refinement of $SP$ witnessed by $\sigma$,
or simply a $\sigma$-refinement,
 when $ \mean{SP'} \!\upharpoonright_\sigma\,\subseteq\,  \mean{SP}$, where  $ \mean{SP'}\! \upharpoonright_\sigma = \{A\! \upharpoonright_\sigma|A \in \mean{SP'} \}$.

Having fixed the notation and the basic concepts we may now jump to the kernel of this section: representing classes
of specification models as 2-deductive systems and introduce interpretations as possible witnesses to the refinement steps.

\subsection{Denotations of algebraic specifications as 2-deductive systems}
As clarified in the Introduction (subsection \ref{sc:scope}), our approach is based on the
representation of specifications' model classes as deductive systems.
Our focus is essentially semantic, \emph{i.e.}, on
what specifications denote, and, therefore, the whole theory of refinement by interpretation discussed
here is independent from any specification structuring mechanism.

 Actually, any specification $SP$ denotes a  class of algebras --- its \emph{model class}, $\mean{SP}$.
 This in turn induces a 2-\dlogic\ according to the procedure sketched in subsection \ref{sc:eqqq}. Its
  consequence relation is $\models_{\spm{SP}}$.
The possibility of this consequence relation being non finitary, for example if  arising from non flat specifications,  is also covered in this approach.

As we will see, it turns out that $k$-deductive systems are an efficient universal tool to develop, in this way,
a theory about all classes of models over a fixed signature. A most important particular case is that of \emph{flat} specifications:
each of them can be identified
  with the deductive system it induces. This observation will be assumed throughout the text and in the examples.
  Moreover, using interpretations between
the induced deductive systems, one may go from one model class to another.

 In a sense to become clearer below, we say
 that a specification $SP'$ \emph{refines} another specification $SP$ \emph{by interpretation} $\tau$,
if $\tau$ is an interpretation of $SP$ such that
 $SP\models \xi \; \imp\; SP'\models \tau(\xi)$ for any formula $\xi$.
 Before jumping to the technical definition,
 consider the following example which, although elementary,
 may help to build up some intuitions for this notion of refinement.

\begin{exa}\label{firstex}

The introduction of a two-element Boolean algebra together with equality tests for each sort
allows the software engineer to formulate arbitrary universal first-order axioms as equations in the Boolean sort.
The importance of this construction comes from the fact that, although many of the most natural specification conditions
take the form of universal first-order sentences, only equational and conditional-equational axioms are guaranteed to possess
 an initial model. This motivation, as well as equality-test algebras in general, are extensively discussed in \cite{Pig91}.

Consider, thus, two flat specifications  $\mathrm{S}$  and $\mathrm{T}$. The former has a  signature $\Sigma$ which declares
 a sort $s$ and
a function $f: s \longrightarrow s$. $\mathrm{S}$ has an empty set of axioms; thus,  the corresponding class of models consists of all  algebras over its signature.
On the other hand, specification  $\mathrm{T}$ is depicted in Figure \ref{fig:unica}.

\begin{figure}[h]
\caixapeq{
\smallskip
\begin{center}
\begin{minipage}{0.86\textwidth}
\begin{SpecDefn}{T}
			\item[\Sorts]   ~
			\\              $s$
			\item[\Ops]     ~
			\\              $ok:\longrightarrow s$
			\\              $f:s \longrightarrow s$
			\\				$test:s \* s \longrightarrow s$
	\item[\textbf{axioms}]  ~  \\
	$test(x,x)\approx ok$ \\
         $ (test(x,x')\approx ok \,,\, test(x',x'') \approx ok ) \;  \rightarrow \; test(x,x'')\approx ok$\\
         $test(x,x') \approx ok \; \rightarrow\; test(x',x) \approx ok$\\
        $test(x,x') \approx ok \; \rightarrow \; test(f(x),f(x')) \approx ok$\\
\end{SpecDefn}
 \end{minipage}
\end{center}
\smallskip
}
\caption{Specification \texorpdfstring{$\mathrm{T}$}{T}.}\label{fig:unica}
\end{figure}

\noindent
\medskip
It is not difficult to see, by induction on the structure of proofs, that the translation
\begin{center}
\begin{tabular}{cccc}
$\tau:$ & $\Eq(\Sig(\mathrm{S}))$ & $\rightarrow$ & $\Eq(\Sig(\mathrm{T}))$  \\
 & $x\approx x' $ & $\mapsto $ & $test(x,x')\approx ok$  \\
\end{tabular}
\end{center}
\noindent
interprets $\mathrm{S}$ in the sense that the consequence relation induced by $\mathrm{S}$ is preserved and reflected by $\tau$.

An inspection of the signatures of both specifications shows that there exists an unique
signature morphism definable between them: the inclusion  $\iota:\Sig(\mathrm{S}) \rightarrow \Sig(\mathrm{T})$.
This morphism induces the identity translation between formulas which, obviously, does not interpret the specification above.

\end{exa}
\begin{flushright}
\vspace{-18 pt}$\Diamond$
\end{flushright}

\medskip

This kind of anomalies can be circumvented by tailoring refinement to the specific situation in hands.
In the example above equality may be taken as an extralogical predicate interpreted, roughly speaking, in the second specification as the test operation. However, such solutions just hold for special cases and have to be redefined case by case.
The approach proposed in this paper aims at providing a uniform, general solution.

\subsection{Translations}
A number of notions of translation between logical systems have been proposed in the
literature (see, for example, \cite{feitosatese,traducoesconservativas,pigozzi,LT}).

\begin{defi}[Translation] \label{df:eqtrans}
Let $\Sigma=(S,\Omega)$, $\Sigma'=(S',\Omega')$
 be two signatures.
 A \emph{translation from $\Sigma$ to $\Sigma'$}
 is a globally finite sorted multifunction\footnote{In the sequel,  notation $\mfdec{m}{A}{B}$ is used
 for multifunctions $m$ from $A$ to $B$.}
 $\mfdec{\tau}{\Eq(\Sigma)}{\Eq(\Sigma')}$. More precisely, $\tau$ maps each equation in $\Eq(\Sigma)$ into
 a globally finite $S'$-sorted set of equations in $\Eq(\Sigma')$.
  \end{defi}

When  $\Sigma=\Sigma'$,
$\tau$ is called a \emph{self translation of $\Sigma$}.
In this case, we say that $\tau$\emph{ commutes with
substitutions} if, for every substitution $\sigma$, and every equation $e\in \Eq(\Sigma)$,
$\tau(\sigma(e))=\sigma(\tau(e))$, where, again, the same notation is used to denote the application of a
substitution to a formula or a set of formulas.

Any translation $\mfdec{\tau}{\Eq(\Sigma)}{\Eq(\Sigma')}$ can be extended to conditional equations
as the  multifunction $\mfdec{\tau^*}{\Ceq(\Sigma)}{\Ceq(\Sigma')}$  given by

\begin{equation*}
\tau^*(\xi)=\{\langle \bigcup_{t\approx t'\in \Gamma} \tau (t\approx t'),e'\rangle \mid e'\in \tau(e)\}
\end{equation*}

\noindent
for $\xi=\langle \Gamma, e \rangle$ a conditional equation.
In the sequel, we  identify $\tau^*$ with $\tau$. The reason for
requiring that the image of $\tau_s$, for each sort $s$, to be globally finite becomes
now clear from  the definition of  $\tau^*$.
The following lemma establishes an important result.

\begin{lem}\label{propcze}
Let $\Sigma=(S,\Omega)$ be a standard signature
and
$\tau$ a self translation of $\Sigma$.
Then the following conditions are equivalent:
\begin{enumerate}[label=(\roman*)]
 \item  \label{1} $\tau$ commutes with substitutions.
 \item \label{2} There exist variables $x,y\in \Va$ and an $S$-sorted set of equations $E(x,y)$ in these two variables
   such that, for any $t\approx t' \in \Eq(\Sigma)_s$, $\tau_s(t\approx t')=E_s(t,t')$.
\end{enumerate}
\end{lem}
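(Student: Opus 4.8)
The plan is to establish the two implications separately, disposing of \ref{2} $\Rightarrow$ \ref{1} quickly and concentrating on \ref{1} $\Rightarrow$ \ref{2}.

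For \ref{2} $\Rightarrow$ \ref{1}, I would assume that $\tau_s(t\approx t')=E_s(t,t')$ for a fixed $S$-sorted family $E(x,y)$ of equations mentioning only the variables $x,y$, and then just compare the two sides of the commutation equation. Fix a substitution $\sigma$ and an equation $t\approx t'$ of sort $s$. Since $\sigma(t\approx t')=\sigma(t)\approx\sigma(t')$ is again of sort $s$, the hypothesis gives $\tau_s(\sigma(t\approx t'))=E_s(\sigma(t),\sigma(t'))$, whereas $\sigma(\tau_s(t\approx t'))=\sigma(E_s(t,t'))$. Writing $\rho$ for the substitution $x\mapsto t,\,y\mapsto t'$, the two results coincide because $\sigma\circ\rho$ and the substitution $x\mapsto\sigma(t),\,y\mapsto\sigma(t')$ agree on $\{x,y\}$, and $E_s$ involves no other variable; since substitutions are homomorphisms of the term algebra, they therefore act identically on $E_s$. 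This direction is pure bookkeeping.

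For \ref{1} $\Rightarrow$ \ref{2}, the guiding idea is that, for each sort, every equation is a substitution instance of a single generic equation, so a substitution-commuting $\tau$ is completely determined by its value there. For each sort $s$ I would fix two distinct variables $x,y$ of sort $s$ (available because $\Va_s$ is countably infinite) and define $E_s:=\tau_s(x\approx y)$. Given any $t\approx t'$ of sort $s$, it equals $\sigma(x\approx y)$ for the substitution $\sigma$ with $\sigma(x)=t$ and $\sigma(y)=t'$; commutation then yields $\tau_s(t\approx t')=\sigma(\tau_s(x\approx y))=\sigma(E_s)$, which is precisely $E_s(t,t')$ once we know that $E_s$ mentions only $x$ and $y$.

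The hard part --- and the only place where commutation does real work rather than bookkeeping --- is exactly that last clause: proving that $E_s=\tau_s(x\approx y)$ is an equation set in the two variables $x,y$ alone. I would argue by contradiction. Suppose some variable $z\notin\{x,y\}$ occurs in $\tau_s(x\approx y)$; this set is globally finite, so only finitely many variables occur in it, and since $\Va$ is countably infinite I may choose a variable $w$ of the same sort as $z$, distinct from $x,y$, that does not occur in $\tau_s(x\approx y)$. Let $\sigma$ relabel $z\mapsto w$ and fix every other variable. Then $\sigma(x\approx y)=x\approx y$, so commutation forces $\sigma(\tau_s(x\approx y))=\tau_s(x\approx y)$; but the left-hand set contains an occurrence of the fresh variable $w$ while the right-hand one does not, a contradiction. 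Hence no such $z$ exists, and the reconstruction above goes through. The only genuinely many-sorted care required throughout is to keep the two chosen variables of sort $s$ matched to the component $E_s$ and to keep all relabelings sort-preserving.
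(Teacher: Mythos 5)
Your proof is correct and follows essentially the same route as the paper: define $E_s:=\tau_s(x\approx y)$ for fixed distinct variables $x,y$, use commutation with the substitution sending $x\mapsto t$, $y\mapsto t'$ to recover $\tau_s(t\approx t')=E_s(t,t')$, and verify the converse implication by the same direct computation. The only (immaterial) difference lies in how you show that $E_s$ involves no variables beyond $x,y$: you relabel a single offending variable to a fresh one and derive a contradiction, whereas the paper collapses all extra variables onto $x$ simultaneously and compares the two resulting presentations of $E$; both are valid instances of the same freshness argument.
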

\proof
Assume that $\tau$ commutes with arbitrary substitutions, fix  distinct variables $x,y$, and
define $E:=\tau(x\approx y)$. Let $\FV(E)$ denote the set of free variables occurring in $E$,
and suppose $\FV(E)\subseteq\{x,y,r_1,r_2,\dots\}$. Let
$e$ be a substitution such that $e(x)=x$, $e(y)=y$, and $e(r_i)=x$ for all $i\geq 1$. By
assumption, $E(x,y,x,\dots)=E(e(x),e(y),e(r_1),...)=e(E)=e\big(\tau(x\approx y)\big)=\tau\big(e(x)\approx
e(y)\big)=\tau (x\approx y)=E(x,y,r_1,r_2,\dots)$. Hence, $\{r_1,r_2,\dots\}\subseteq\{x,y\}$.
Thus $\FV(E)\subseteq\{x,y\}$. We will write $E(x,y)$ to denote that the variables which occur in $E$ are among $x$ and $y$.
Now, let $t \approx t' \in \Eq(\Sigma)$, and $e$ be a
substitution such that $e(x)=t$ and $e(y)=t'$. We have that
$\tau(t \approx t')=\tau\big(e(x)\approx e(y)\big)=e\big(\tau(x\approx
y)\big)=e\big(E(x,y)\big)=E\big(e(x),e(y)\big)=E(t,t')$.
Suppose now that \eqref{2} holds. Let $\alpha$ be a substitution in $\Sigma$.  Then, for any $t
\approx t' \in  \Eq(\Sigma)$,
\begin{equation*}
\alpha(\tau(t \approx t'))\; =\;  \alpha(E_s(t,t'))\; =\; E_s(\alpha(t),\alpha(t'))) \; =\;
\tau(\alpha(t)\approx\alpha(t')) \; =\;  \tau(\alpha(t \approx t')).\eqno{\qEd}
\end{equation*}

\subsection{Interpretations}
Defined as a multifunction, a translation maps a formula into
 a set of formulas, and this is what makes translations interesting to establish relationships between
specifications. Recall that a signature morphism maps a term into just another term.

Not all translations, however,  are suitable to capture  the meaning of  interpreting a
specification into another. The following definition singles out the relevant ones:

\begin{defi}[Interpretation]Let $\tau$ be a translation from $\Sigma$ to $\Sigma'$,
and $SP$ a specification over $\Sigma$. The translation $\tau$ \emph{interprets $SP$} if there
is a class of algebras $K$ over $\Sigma'$  such that, for any $\xi \in \Ceq(\Sigma)$,
\begin{equation*}
SP\models \xi\;  \text{if and only if}\;  K\models \tau(\xi)
\end{equation*}
In this case we say that \emph{$\tau$
interprets $SP$ in $K$}, and \emph{$K$ is a $\tau$-interpretation of $SP$}.
Moreover, when $K$ is the denotation of a specification $SP'$ we say  that \emph{$\tau$
interprets $SP$ in $SP'$}, and \emph{$SP'$ is a $\tau$-interpretation of $SP$}.
\end{defi}

\begin{exa}\label{HvsB1}
The interpretation of the specification of
  \emph{Boolean algebras}  into the class HA of \emph{Heyting algebras} is a  classical example of an interpretation.
Let
$\Sigma$ be the usual signature of Boolean algebras
(and of Heyting algebras). Consider the well known \emph{double negation} (propositional)
map:
$\iota (t)\; =\; \neg\neg t$.

\noindent
Let $\tau$ be a self translation of $\Sigma$
defined by

\begin{equation*}
\tau(t\approx t')=\{\iota(t)\approx \iota(t')\}
\end{equation*}

\noindent
It can be shown that $\tau$ interprets the specification of  Boolean algebras in the class
 HA  \cite{blok}.
This is known as Glivenko's interpretation \cite{G29}. It
 establishes a strict relationship between classical and intuitionistic
derivability:  if $\varphi$ is a theorem in CPC then $\neg\neg \varphi$ is a theorem
in IPC. The result builds a bridge between the algebraic
semantics of \emph{classical propositional calculus} and that of
\emph{intuitionistic propositional calculus}.
\begin{flushright}
\vspace{-18 pt}$\Diamond$
\end{flushright}
\end{exa}

\noindent
It is not difficult to see that,

\begin{thm}
 Let $SP$ and $SP'$ be two algebraic specifications over a signature $\Sigma$, and
$\tau$ a recursive self translation of $\Sigma$
  that commutes with arbitrary substitutions and interprets $SP$ in $SP'$.
  If $SP$ is decidable then $SP'$ is decidable.
\end{thm}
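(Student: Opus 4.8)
The plan is to convert the defining equivalence of an interpretation into an effective many--one reduction between the two consequence--decision problems and then to invoke closure of decidability under such reductions. Throughout I read ``$SP$ is decidable'' as decidability of the set $\{\xi\in\Ceq(\Sigma)\mid SP\models\xi\}$, and likewise for $SP'$. Since $\tau$ is a self translation of $\Sigma$ commuting with substitutions, Lemma~\ref{propcze} supplies a fixed globally finite $S$-sorted set $E(x,y)$ with $\tau(t\approx t')=E(t,t')$; combined with the prescription defining $\tau^*$ on conditional equations, and with the hypothesis that $\tau$ is recursive, this makes $\xi\mapsto\tau(\xi)$ a total computable map on $\Ceq(\Sigma)$.

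The heart of the argument is the equivalence witnessing that $\tau$ interprets $SP$ in $SP'$, namely
\[
SP\models\xi \;\Longleftrightarrow\; SP'\models\tau(\xi)\qquad\text{for all }\xi\in\Ceq(\Sigma).
\]
This exhibits $\xi\mapsto\tau(\xi)$ as a reduction of membership in the theory of $SP$ to membership in the theory of $SP'$. The step I would carry out explicitly is recursion--theoretic: composing the computable map $\tau$ with a decision procedure for the consequence relation on one side yields a decision procedure for the other, once one checks that $\tau^*$ is computed uniformly in the globally finite premise set of a conditional equation, so that the reduction is genuinely total rather than defined only on plain equations.

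The main obstacle is that the reduction furnished by an interpretation is inherently one--directional: composing with $\tau$ settles precisely the queries lying in the image of $\tau$. Hence the decision procedure is immediate only for targets of the form $\tau(\xi)$, and the substantial point is to show that deciding these suffices to decide the whole theory at stake --- equivalently, that every relevant query can be pulled back along $\tau$. This is where the hypotheses that $\tau$ commutes with substitutions and that the equational consequence relation is closed under the congruence rules of Figure~\ref{fig:3} must do real work; all remaining manipulations (computing $\tau^*$, handling conditional premises) are routine by comparison.
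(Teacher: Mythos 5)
The paper offers no argument for this statement---its ``proof'' is a pointer to Feitosa's thesis \cite{feitosatese}---so the only question is whether your reduction establishes the implication actually claimed, and it does not. The map $\xi\mapsto\tau(\xi)$, made computable via Lemma~\ref{propcze} and the definition of $\tau^*$, together with the equivalence $SP\models\xi\Leftrightarrow SP'\models\tau(\xi)$, is a many--one reduction of $\{\xi\mid SP\models\xi\}$ to $\{\eta\mid SP'\models\eta\}$. Decidability transfers along such a reduction from target to source, so what your construction yields is ``if $SP'$ is decidable then $SP$ is decidable''---the converse of the statement (and, incidentally, the form in which such transfer results appear in \cite{traducoesconservativas}). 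To decide $SP'$ from a decision procedure for $SP$ you must handle an arbitrary query $\eta\in\Ceq(\Sigma)$, and the interpretation equivalence says nothing about $SP'\models\eta$ unless $\eta$ lies in, or is $SP'$-interderivable with something in, the image of $\tau$.

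You flag exactly this obstacle in your final paragraph, but flagging it is not closing it: producing, uniformly and computably in $\eta$, a $\xi$ with $SP'\models\eta\Leftrightarrow SP'\models\tau(\xi)$ is the entire substance of the theorem in the stated direction, not a routine verification. The hypotheses you invoke do not deliver it: commutation with substitutions only forces $\tau$ into the schematic form $\tau(t\approx t')=E(t,t')$ of Lemma~\ref{propcze}, which typically makes the image of $\tau$ a thin fragment of $\Eq(\Sigma)$, and the congruence rules of Figure~\ref{fig:3} act inside $SP'$'s consequence relation rather than mapping queries back into that fragment. Concretely, for the Glivenko self-translation $\tau(t\approx t')=\{\neg\neg t\approx\neg\neg t'\}$ interpreting $\mathrm{BOOL}$ in $\mathrm{HEYTING}$ (Example~\ref{HvsB1}), the image of $\tau$ contains only equations between doubly negated terms, and there is no candidate pull-back in sight for a query such as $p\vee\neg p\approx\true$; your procedure simply never answers it. As written, the proposal proves only the reverse implication and leaves the claimed one unproved.
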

\begin{proof}
In \cite{feitosatese}.
\end{proof}

\begin{defi}\label{df:tau:model} Let $\tau$ be a translation from $\Sigma$ to $\Sigma'$
and  $SP$ a specification over $\Sigma$. A $\Sigma'$-algebra $A'$ is a \emph{$\tau$-model of}
$SP$ if, for any  $\xi \in \Ceq(\Sigma)$, $SP\models \xi $ implies $A' \models \tau(\xi)$.
The \emph{$\tau$-model class of $SP$}, denoted by $\Mod_\tau(SP)$, is the class of all
$\tau$-models of $SP$.
\end{defi}
\noindent
Observe that, for any conditional equation $\xi=\langle \Gamma,e\rangle$, $SP\models\xi$  implies $\Mod_\tau(SP)\models \tau(\xi)$.

\CUT{
\begin{lem}
Let $SP=\langle \Sigma,\Phi \rangle$ be a $X$-flat specification and $\tau$ a \tra{X}{X'}. A
$\Sigma'$-algebra $A'$ is a $\tau$-model of $SP$ if $A' \models \tau(\Phi)$.
\end{lem}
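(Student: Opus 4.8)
The plan is to reduce semantic consequence in $SP$ to syntactic derivability and then transport derivations through $\tau$ into $A'$. Since $SP=\langle \Sigma,\Phi\rangle$ is $X$-flat, its model class is exactly $\{A : A\models\Phi\}$, so $SP\models\xi$ means that $\xi$ holds in every algebra satisfying $\Phi$. By the completeness recalled in Subsection~\ref{sc:eqqq}, this relation coincides with the Hilbert-style consequence generated by taking the equations of $\Phi$ together with reflexivity as axioms, and the conditional equations of $\Phi$ together with symmetry, transitivity and the congruence rules as inference rules. Hence $SP\models\xi$ iff there is a finite derivation of $\xi$ from $\Phi$ in this calculus, and it suffices to prove that $A'\models\tau(\psi)$ for every equation or conditional equation $\psi$ occurring as a conclusion along such a derivation.

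First I would run the induction on the length of the derivation, using the extension $\tau^{*}$ to conditional equations introduced before Lemma~\ref{propcze}. The base case is immediate when $\xi$ is (a substitution instance of) an axiom drawn from $\Phi$: there $\tau(\xi)\subseteq\tau(\Phi)$ up to the instantiating substitution, and the hypothesis $A'\models\tau(\Phi)$ yields $A'\models\tau(\xi)$ --- provided $\tau$ commutes with substitutions, so that satisfaction of $\tau(\Phi)$ propagates to all instances. This is exactly the property characterised in Lemma~\ref{propcze}, which allows writing $\tau_{s}(t\approx t')=E_{s}(t,t')$ for a fixed sorted system $E(x,y)$ of $\Sigma'$-equations.

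The main obstacle --- and the genuine content of the statement --- is the inductive step for the structural rules of equality. To handle them one must show that $A'$-validity of $\tau$ is closed under reflexivity, symmetry, transitivity and congruence; concretely, that the sorted set $E(x,y)$ defines, in the algebra $A'$, a relation that is reflexive, symmetric, transitive and compatible with every operation of $\Sigma'$. None of these four facts follows from $A'\models\tau(\Phi)$ for an arbitrary translation: for instance, validating $\tau(t\approx t)=E(t,t)$ already requires the reflexivity of $E$ to be a consequence of $\tau(\Phi)$. The clean way to close the argument is therefore to assume that $\tau$ interprets $SP$ (or, at least, that $\tau(\Phi)$ entails the congruence conditions on $E$), so that each structural case reduces to checking that a congruence property is itself a consequence of $\tau(\Phi)$ and hence holds in $A'$. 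Absent such a hypothesis the result holds only for those $\xi$ whose derivation uses $\Phi$-instances and substitution alone; this dependence on the interpretation machinery is, I suspect, exactly why the statement was set aside here in favour of the interpretation-based development that follows.
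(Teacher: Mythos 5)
Your analysis is essentially the right response to this statement, and you should know that the paper offers no proof to compare it against: the lemma is wrapped in the \texttt{CUT} macro, which discards its argument, so it never appears in the compiled text at all. The authors evidently abandoned it, and your diagnosis identifies the likely reason. As stated, the claim is false for an arbitrary translation: take $\Phi=\emptyset$, so that every $\Sigma'$-algebra satisfies $\tau(\Phi)$; a $\tau$-model of $SP$ must nevertheless satisfy $\tau(x\approx x)$ and the $\tau$-images of all symmetry, transitivity and congruence consequences of $SP$, none of which is forced by $A'\models\tau(\Phi)$ (the specification $\mathrm{S}$ of Example \ref{firstex}, with $\tau(x\approx x')=\{test(x,x')\approx ok\}$, is a concrete counterexample). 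Your observation that the induction on derivation length breaks precisely at the structural rules of equality --- and that the base case already needs commutation with substitutions, via Lemma \ref{propcze} --- locates exactly where the intended argument fails.

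The surviving descendant of this lemma is the second inclusion in Theorem \ref{axiomatizacao}, which strengthens the hypotheses to a self translation commuting with substitutions and asserts $\mean{\langle\Sigma,\tau(\Phi)\rangle}\subseteq\Mod_\tau(SP)$, with a proof that merely announces the same induction you set up (``it can be proved by induction on the length of a proof of $e$ from $\Gamma$''). Your critique applies there as well: even with commutation with substitutions, validating the translated reflexivity, symmetry, transitivity and congruence rules in an algebra known only to satisfy $\tau(\Phi)$ requires the additional assumption you name --- that $\tau$ interprets $SP$, or at least that $\tau(\Phi)$ entails the congruence conditions on the defining set $E(x,y)$ of Lemma \ref{propcze}. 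So your proposal does not prove the statement, but it correctly shows that the statement cannot be proved without such a hypothesis, which is the more informative conclusion and is consistent with the authors' decision to cut the lemma in favour of the interpretation-based results that follow.
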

}

\begin{thm}\label{axi0} Let $SP$ be a specification over $\Sigma$, and $\tau$
a translation from $\Sigma$ to $\Sigma'$ that interprets $SP$. Then the class of models
$\Mod_\tau(SP)$ is the largest $\tau$-interpretation of $SP$.
\end{thm}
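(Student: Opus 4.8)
The plan is to establish two things: first, that $\Mod_\tau(SP)$ is itself a $\tau$-interpretation of $SP$, i.e. that $SP \models \xi$ if and only if $\Mod_\tau(SP) \models \tau(\xi)$ for every $\xi \in \Ceq(\Sigma)$; and second, that it contains every other $\tau$-interpretation. I would start from the key observation already recorded right after Definition \ref{df:tau:model}, namely that any $\tau$-interpretation $K$ of $SP$ must satisfy $K \subseteq \Mod_\tau(SP)$. Unfolding the definition of interpretation, the fact that $K$ is a $\tau$-interpretation gives in particular the implication $SP \models \xi \Rightarrow K \models \tau(\xi)$; this says precisely that every algebra of $K$ is a $\tau$-model of $SP$, hence lies in $\Mod_\tau(SP)$. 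This single inclusion is the engine of both parts, and since it applies to an \emph{arbitrary} $\tau$-interpretation $K$, it already settles the ``largest'' claim.

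For the first part, the forward implication $SP \models \xi \Rightarrow \Mod_\tau(SP) \models \tau(\xi)$ is immediate from the definition of $\tau$-model: by construction, every member of $\Mod_\tau(SP)$ validates $\tau(\xi)$ whenever $SP \models \xi$, which is exactly the observation noted after Definition \ref{df:tau:model}. The reverse implication is where I would invoke the hypothesis that $\tau$ genuinely interprets $SP$: this supplies a witnessing class $K_0$ over $\Sigma'$ with $SP \models \xi$ iff $K_0 \models \tau(\xi)$. By the inclusion above, $K_0 \subseteq \Mod_\tau(SP)$. Hence if $\Mod_\tau(SP) \models \tau(\xi)$, then \emph{a fortiori} $K_0 \models \tau(\xi)$, since satisfaction of a set of conditional equations by a class is inherited by every subclass, and now the reflecting direction of the witness $K_0$ yields $SP \models \xi$. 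This closes the biconditional and shows that $\Mod_\tau(SP)$ is a $\tau$-interpretation of $SP$.

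I do not expect a genuine obstacle here; the argument is a careful bookkeeping of definitions, pivoting on the observation that $\Mod_\tau(SP)$ is the union of all $\tau$-interpretations and hence itself one of them. The one step I would be most careful about is the direction of class-satisfaction under inclusion: one uses $K_0 \subseteq \Mod_\tau(SP)$ together with $\Mod_\tau(SP) \models \tau(\xi)$ to conclude $K_0 \models \tau(\xi)$, i.e. that a smaller class validates whatever a larger class validates. It is also worth emphasising that the existence of \emph{some} witnessing interpretation (the standing hypothesis) is essential precisely for this reflecting direction: without it one only obtains that $\Mod_\tau(SP)$ \emph{preserves} the consequence relation of $SP$, not that it \emph{reflects} it, and so the nontrivial half of the biconditional would fail.
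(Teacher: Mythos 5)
Your proposal is correct and follows essentially the same route as the paper's own proof: establish $K \subseteq \Mod_\tau(SP)$ for an arbitrary $\tau$-interpretation $K$, note that the preservation direction is immediate from the definition of $\tau$-model, and obtain the reflection direction by passing from $\Mod_\tau(SP) \models \tau(\xi)$ down to a witnessing interpretation $K_0 \subseteq \Mod_\tau(SP)$ and back to $SP \models \xi$. The only cosmetic difference is your aside describing $\Mod_\tau(SP)$ as the union of all $\tau$-interpretations, which is a consequence of the theorem rather than an ingredient of the argument, and nothing in your proof depends on it.
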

\begin{proof}
Suppose that $\tau$ interprets $SP$. Let $K$ be a class of models which is a
$\tau$-interpretation of $SP$. Then for any $\xi \in \Ceq(\Sigma)$, $SP\models \xi$ if and
only if $K \models \tau(\xi)$. Hence all models in $K$ are $\tau$-models of $SP$. Thus,
$K \subseteq \Mod_\tau(SP)$.

So, we only need to prove that $\Mod_\tau(SP)$ is a $\tau$-interpretation of $SP$. Let $\xi \in
\Ceq(\Sigma)$. It is clear that  $SP\models \xi$ implies $\Mod_\tau(SP) \models \tau(\xi)$.
Suppose now that $\Mod_\tau(SP) \models \tau(\xi)$. Let  $K$ be a class of models that is a
$\tau$-interpretation of $SP$ (it exists since $\tau$ interprets $SP$). As proved above, $K
\subseteq \Mod_\tau(SP) $, hence $K\models \tau(\xi)$. Finally, $K$ being a $\tau$-interpretation
of $SP$ entails $SP \models \xi$.
\end{proof}

The next theorem states that if a specification $SP$ is finitely
axiomatisable, so is $\Mod_\tau(SP)$. Therefore there is a flat specification $SP^\tau$ such that $\spm{SP^\tau}= \Mod_\tau(SP)$.

\begin{thm}\label{axiomatizacao} Let   $\tau$ be a self translation of $\Sigma$
and  $SP=\langle \Sigma, \Phi \rangle$ a $X$-flat specification  for a set  $X \subseteq \Va$ of variables.
If $\tau$
commutes with substitutions then the class of models denoted by the flat specification $SP^\tau=\langle
\Sigma, \tau(\Phi)\rangle$ coincides with $\Mod_\tau(SP)$.
 Moreover, if $\Phi$ is finite then $SP^\tau$ is finitely axiomatisable.
\end{thm}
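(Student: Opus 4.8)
The plan is to establish the set equality $\mean{SP^\tau} = \Mod_\tau(SP)$ by proving the two inclusions separately, and then to read finite axiomatisability off the globally finite character of $\tau$.

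For the inclusion $\Mod_\tau(SP) \subseteq \mean{SP^\tau}$ I would argue directly from the definitions. Since $SP = \langle\Sigma,\Phi\rangle$ is $X$-flat, its model class is axiomatised by $\Phi$, so every $\phi \in \Phi$ satisfies $SP \models \phi$. Hence, for any $A'$ in $\Mod_\tau(SP)$, Definition \ref{df:tau:model} forces $A' \models \tau(\phi)$ for each $\phi \in \Phi$, that is $A' \models \tau(\Phi)$, which is exactly the condition for membership in $\mean{SP^\tau}$. This direction uses nothing beyond the fact that $\Phi$ is contained in the set of conditional equations $\xi$ with $SP\models\xi$.

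The reverse inclusion $\mean{SP^\tau} \subseteq \Mod_\tau(SP)$ is the substance of the theorem. Here I would fix $A'$ with $A' \models \tau(\Phi)$ and an arbitrary $\xi = \langle\Gamma,e\rangle \in \Ceq(\Sigma)$ with $SP \models \xi$, aiming at $A' \models \tau(\xi)$. Because $SP$ is flat, the relation $\models_{\mean{SP}}$ is specifiable, so $SP\models\xi$ is witnessed by a finite derivation of $e$ from $\Gamma$ in the Hilbert-style presentation of the induced $2$-deductive system: the axioms are the equations of $\Phi$ together with reflexivity, and the rules are the conditional equations of $\Phi$ together with symmetry, transitivity and congruence. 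I would then transport this derivation through $\tau$ and prove, by induction on its length, that every translated line is a semantic consequence in $A'$ of $\tau(\Gamma)$, concluding with $A' \models \tau(\xi)$. The hypothesis that $\tau$ commutes with substitutions enters through Lemma \ref{propcze}: writing $\tau_s(t\approx t') = E_s(t,t')$ for the fixed two-variable family $E$, each translated instance of an axiom or of a rule becomes a substitution instance of $\tau$ applied to the corresponding base axiom or rule, so the inductive step reduces to a uniform, substitution-stable check rather than an ad hoc one line by line.

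I expect the main obstacle to lie precisely in the inductive step for the structural rules, above all congruence: validating the translated rule $\langle \tau(x\approx y), \tau(f(x)\approx f(y))\rangle = \langle E(x,y), E(f(x),f(y))\rangle$ in $A'$ does not follow from $A'\models\tau(\Phi)$ by bare inspection, and the delicate point is to show, using that $\tau$ commutes with substitutions together with the flatness of $SP$, that the $\tau$-images of reflexivity, symmetry, transitivity and congruence are themselves consequences of $\tau(\Phi)$, hence validities of $A'$. Once the set equality is in place, the final clause is immediate: a translation is globally finite by Definition \ref{df:eqtrans}, so when $\Phi$ is finite, $\tau(\Phi) = \bigcup_{\phi\in\Phi}\tau(\phi)$ is a finite union of globally finite sets of conditional equations; therefore $SP^\tau = \langle\Sigma,\tau(\Phi)\rangle$ is axiomatised by finitely many conditional equations and is finitely axiomatisable.
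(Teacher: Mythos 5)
Your proposal follows essentially the same route as the paper's own proof: the easy inclusion $\Mod_\tau(SP)\subseteq\mean{SP^\tau}$ directly from Definition~\ref{df:tau:model} applied to $\Phi$, the reverse inclusion by induction on the length of a Hilbert-style derivation of $e$ from $\Gamma$ in the deductive system induced by the flat $SP$, and finiteness of $\tau(\Phi)$ from the global finiteness of $\tau$. The obstacle you flag --- that the $\tau$-images of reflexivity, symmetry, transitivity and congruence must themselves be shown to hold in any model of $\tau(\Phi)$ --- is exactly the step the paper compresses into ``it can be proved by induction on the length of a proof,'' so your account is, if anything, more candid about where the work lies.
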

\begin{proof}
On the one hand, we have that, for any $A' \in \Mod_\tau(SP)$ and for any  $\xi\in
\Ceq(\Sigma)$, $SP\models \xi$ implies $A'\models \tau(\xi)$. In particular, since $SP
\models \Phi$, we have that $\Mod_\tau(SP) \models \tau(\Phi)$ and hence, $\Mod_\tau(SP)
\subseteq \mean{\langle \Sigma,\tau(\Phi) \rangle}$.

On the other hand, let $A \in \mean{\langle \Sigma,\tau(\Phi) \rangle}$ and $\xi=\langle
\Gamma,e\rangle$ be a conditional equation over $X$ such that $SP\models \xi$ (i.e., $\Gamma
\models_{\mean{SP}} e$). Then, it can be proved by induction on the length of a proof of
$e$ from $\Gamma$ in the deductive system $\models_{\mean{SP}}$ induced by $SP$, that $A\models \tau(\xi)$. Therefore $A$  is a
$\tau$-model of $SP$ and so, $\mean{\langle \Sigma, \tau(\Phi)\rangle} \subseteq
\Mod_\tau(SP)$.

Clearly, if $\Phi$ is finite then $\tau(\Phi)$ is also finite. Moreover, it can be proved that
$\tau(\Phi)$ constitutes an axiomatisation for $SP^\tau$.
\end{proof}

\subsection{Refinement by interpretation}\label{maindef}
Logical interpretation, as introduced in the previous section, provides the basic tool for the following definition:

\begin{defi}[Refinement by interpretation]
Let $SP$ be a specification over $\Sigma$ and $\tau$ a  translation from $\Sigma$ to $\Sigma'$
which interprets $SP$. We
say that a specification $SP'$ over $\Sigma'$ \emph{refines  $SP$ by
interpretation $\tau$}, in symbols $SP\rightharpoondown_\tau SP'$,
 if for any $\xi \in \Ceq(\Sigma)$
 \begin{equation*}
 SP\models \xi \; \imp\; SP'\models \tau(\xi)
 \end{equation*}
\end{defi}

It is not difficult to see that $SP'$ refines $SP$ by
interpretation $\tau$  whenever $\tau$ interprets $SP$ in $\mean{SP'}$.

Let us consider some examples of refinement by interpretation.  The first one is
mainly of theoretical interest: it shows how (a specification of) an Heyting algebra can
be regarded as a refinement of (a specification of) a Boolean algebra.

\begin{exa}
Consider the specifications of Boolean and Heyting algebras
depicted in Figures \ref{fig:4} and \ref{fig:5},
where $\mathrm{DISTLATTICE}$ is the specification of distributive lattices
(see, \cite{Burris}). We assume that a sort $bool$ is declared in $\mathrm{DISTLATTICE}$.
Note that in this example, as in a few others in the sequel,
 a naive use is made of a few standard operations  for structuring specifications.
 For example, annotation \textbf{enrich} $\mc S$ means that the (finitary)  specification
 is obtained  by adding new operation symbols, new sorts or/and new axioms to  $\mc S$.
 This is done just for syntactical convenience: to represent an equivalent flat specification, whose
 existence is trivially shown for both $\mathrm{BOOL}$,  $\mathrm{HEYTING}$ and all the other cases
 where we use the same artifice.  What should be kept in mind is that  the flat specifications
 correspond directly to deductive systems.

\begin{figure}
\caixapeq{\smallskip\begin{center}
\begin{minipage}{0.8\textwidth}
\begin{SpecDefn}{BOOL}
\item[\Enrich]
DISTLATTICE
\item[\Ops]     ~
\\              $\fdec{\true}{~}{bool}$
\\              $\fdec{\false}{~}{bool}$
\\              $\fdec{\fuc{\neg}}{bool}{bool}$
\\              $\fdec{\e}{bool \times bool}{bool}$
\\              $\fdec{\ou}{bool \times bool}{bool}$
\item[\Axioms]  ~
\\              $p \ou \neg p \approx \true$
\\              $p \e \neg p \approx \false$
\\              $p \ou \true  \approx \true$
\\              $p \e \false  \approx \false$
\end{SpecDefn}
 \end{minipage}
\end{center} \smallskip
}
\caption{A specification of Boolean algebras.}\label{fig:4}
\end{figure}

\begin{figure}
\caixapeq{\smallskip\begin{center}
\begin{minipage}{0.8\textwidth}
\begin{SpecDefn}{HEYTING}
\item[\Enrich]
DISTLATTICE
\item[\Ops]     ~
\\              $\fdec{\true}{~}{bool}$
\\              $\fdec{\false}{~}{bool}$
\\              $\fdec{\fuc{\neg}}{bool}{bool}$
\\              $\fdec{\e}{bool \times bool}{bool}$
\\              $\fdec{\ou}{bool \times bool}{bool}$
\\              $\fdec{\twoheadrightarrow}{bool \times bool}{bool}$
\item[\Axioms]  ~
\\              $p \ou \true \approx \true$
\\              $p \e \false \approx \false$
\\              $p \twoheadrightarrow p  \approx \true$
\\              $(p \twoheadrightarrow q) \e q  \approx q$
\\              $p \twoheadrightarrow (q \e r)  \approx (p \twoheadrightarrow q) \e (p \twoheadrightarrow r)$
\\              $p \e (p \twoheadrightarrow q)  \approx p \e q$
\\              $(p \ou q) \twoheadrightarrow r  \approx (p \twoheadrightarrow r) \e (q \twoheadrightarrow r)$
\\              $\neg p \approx p \twoheadrightarrow \false$
\end{SpecDefn}
 \end{minipage}
\end{center}
\smallskip
}
\caption{A specification of Heyting algebras.}\label{fig:5}
\end{figure}

\noindent
As in Example \ref{HvsB1},  the multifunction $\tau$ defined by
\begin{equation*}
\tau(p\approx q)\; =\; \{\neg\neg p \approx \neg\neg q\}
\end{equation*}
interprets BOOL in HEYTING. To show that
BOOL $\rightharpoondown_\tau$ HEYTING just observe that for any axiom
$\varphi$ of BOOL, HEYTING $\models \tau(\varphi)$.

A G{\"{o}}del algebra, also known as a $L$-algebra \cite{BD74},
 is a Heyting algebra that satisfies the pre-linearity condition: $(x \twoheadrightarrow y) \vee (y \twoheadrightarrow x) = \true$.
The class $\mathrm{GODEL}$  of all  G{\"{o}}del algebras  forms a subvariety of the class $\mathrm{HEYTING}$ and is thus a denotation of a flat specification, also denoted by $\mathrm{GODEL}$,  obtained from the $\mathrm{HEYTING}$  specification by adding the extra axiom $(x \twoheadrightarrow  y) \vee (y \twoheadrightarrow x) = \true $. So $\mathrm{GODEL}$  is an example of a refinement by interpretation of the specification of Boolean algebras.
\begin{flushright}
\vspace{-18 pt}$\Diamond$
\end{flushright}
\end{exa}

 Our next example, although quite elementary, illustrates a key point. It shows
 how  refinement by interpretation may
 capture \emph{data encapsulation}, \ie, the process of hiding a specific sort in a
 specification. This is a relevant issue in algebraic specification, in particular when the implementation
 target is an object-oriented framework: hidden sorts become the state space of object implementations,
 as discussed in, \eg,  \cite{Fav98,DW05}.
 In the following example a specification of the natural
numbers is interpreted into another one exclusively axiomatised  by equations of sort $bool$.
Sort $nat$ becomes hidden, or encapsulated, after refinement.

\begin{exa}
\label{natexamplemplo}
Consider the specification NAT of the natural numbers depicted in  Figure \ref{fig:6}. An alternative specification, NATEQ, is
shown in  Figure \ref{fig:7}, which  introduces an equality
 test, $\fuc{eq}$, axiomatised with the congruence property.

\begin{figure}
\caixapeq{\smallskip
\begin{center}
\begin{minipage}{0.8\textwidth}
\begin{SpecDefn}{NAT}
\item[\Sorts]   ~
\\              $nat$
\item[\Ops]     ~
\\              $\fdec{\fuc{z}}{~}{nat}$
\\              $\fdec{\fuc{s}}{nat}{nat}$
\\              $\fdec{+}{nat \times nat}{nat}$
\item[\Axioms]  ~\\
$x + \fuc{z}\;  \approx \; x$ \\
$\fuc{s}(x+y) \; \approx\; x +  \fuc{s}(y) $\\
$\fuc{s}(x) \approx \fuc{s}(y) \; \rightarrow \; x \approx y $\\
\end{SpecDefn}
 \end{minipage}
\end{center}
\smallskip
}
\caption{A specification of the natural numbers.}\label{fig:6}
\end{figure}

\begin{figure}
\caixapeq{
\smallskip
\begin{center}
\begin{minipage}{0.8\textwidth}
\begin{SpecDefn}{NATEQ}
\item[\Enrich]  BOOL
\item[\Sorts]   ~
\\              $nat$
\item[\Ops]     ~
\\              $\fdec{\fuc{z}}{~}{nat}$
\\              $\fdec{\fuc{s}}{nat}{nat}$
\\              $\fdec{+}{nat \times nat}{nat}$
\\              $\fdec{\fuc{eq}}{nat \times nat}{bool}$
\item[\Axioms]  ~
\\              $\fuc{eq}(x,x) \approx \true$
\\              $\fuc{eq}(x,y) \approx \true \; \impp\; \fuc{eq}(y,x) \approx \true$
\\              $\fuc{eq}(x,y) \approx \true \; ,\; \fuc{eq}(y,\fuc{z}) \approx \true \;  \impp\; \fuc{eq}(x,\fuc{z}) \approx \true$
\\              $\fuc{eq}(x,y) \approx \true \; \impp\; \fuc{eq}(\fuc{s}(x),\fuc{s}(y)) \approx \true$
\\              $\fuc{eq}(\fuc{s}(x),\fuc{s}(y)) \approx \true \; \impp\; \fuc{eq}(x,y) \approx \true$
\\              $\fuc{eq}(x + \fuc{z}, x) \approx \true$
\\              $\fuc{eq}(\fuc{s}(x+y), x +  \fuc{s}(y) ) \approx \true$\\
\end{SpecDefn}
 \end{minipage}
\end{center}
\smallskip
}
\caption{Hiding sort $nat$.}\label{fig:7}
\end{figure}

\noindent
NATEQ interprets NAT through multifunction $\tau$ defined as
\begin{equation*}
\tau(x:nat \approx y:nat)=\{\fuc{eq}(x:nat,y:nat) \approx \true \}
\end{equation*}
  First note that for any equation $t \approx t'$
such that NAT $\models t \approx t'$, one also has NATEQ $\models
\fuc{eq}(t,t') \approx \true$, since the interpretation of the proof of NAT $\models
t \approx t'$ is a proof of
NATEQ $\models \fuc{eq}(t,t') \approx \true$. The converse is proved  by induction
on the length of the proof of NATEQ $\models \fuc{eq}(t,t') \approx \true$. Hence, any refinement of NATEQ,
for example, the one obtained by adding axiom $\fuc{eq}(\fuc{z}, \fuc{s}(\fuc{z})) \approx \false$, is
a refinement by $\tau$ of NAT.
\begin{flushright}
\vspace{-18 pt}$\Diamond$
\end{flushright}
\end{exa}

Other useful design transformations  can similarly be captured as
 refinements. Our last example illustrates one of them in which
 some operations are \emph{decomposed} or \emph{mapped to transactions}, \ie,
 sequences of operations to be executed atomically.

\begin{exa}\label{ex:bams1}

Consider the following fragment of a specification of a bank account management system (BAMS),
involving account deposits (operation $\dep$), withdrawals ($\withd$), and a balance query ($\balance$).

\begin{center}
\begin{minipage}{0.8\textwidth}
\begin{SpecDefn}{BAMS}
\item[\Enrich]  INT
\item[\Axioms]  ~
\\              $\balance(\dep(s,i,n),i) \; \approx\; \balance(s,i) + n $
\\              $ \balance(\withd(s,i,n),i) \; \approx \; \max(\balance(s,i) - n, 0)$
\\              $\cdots$
\end{SpecDefn}
 \end{minipage}
\end{center}

Assume INT as the usual flat specification of integer numbers with
arithmetic operations, and variables $s\:Sys$, $i\:Ac$ and $n,n'\:Int$, where $Sys$ and $Ac$ are the sorts of bank systems and
account identifiers, respectively. The signatures of the main operations are as follows:
 $\dep : Sys \times Ac \times Int\; \longrightarrow\; Sys$,
 $\withd :  Sys \times Ac \times Int\; \longrightarrow\; Sys$ and
 $\balance :  Sys \times Ac \; \longrightarrow\; Int$.

\noindent
Consider now an implementation
B2 where all debit and credit transactions require a previous validation
step. This is achieved through an operation $\transvalid : Sys \times Ac \times Int\; \longrightarrow\; Int$, which, given
a bank system state, an account identifier, and a value to be added or subtracted to the account's balance, verifies if
the operation can proceed or not. In the first case it outputs the original amount; in the second $0$ is returned as an error value.
This will force an invalid deposit or withdrawal to have no effect ($0$ will be added to, or subtracteded from the account's balance).
Although this is a quite common form of error recovery (invoking the intended operation with its identity element),
note that, for the purpose of this example, the concrete behaviour of operation $\transvalid$ is not relevant as long as an integer
is returned.
The axioms for B2 include,

\begin{center}
\begin{minipage}{0.8\textwidth}
\begin{SpecDefn}{B2}
\item[\Enrich]  INT
\item[\Axioms]  ~
\\              $\cdots$
\\              $ \balance(\dep(s,i,\transvalid(s,i,n)),i) \approx  \balance(s,i) + \transvalid(s,i,n) $
\\              $ \balance(\withd(s,i,\transvalid(s,i,n)),i) \; \approx \;
\max(\balance(s,i) - \transvalid(s,i,n), 0) $

\end{SpecDefn}
 \end{minipage}
\end{center}
~\\

\noindent
The interpretation
$\mfdec{\tau_1}{\eqq{\mathrm{BAMS}}}{\eqq{\mathrm{B2}}}$, defined by

\begin{equation*}
	\tau_1(t\approx t')=\big\{\gamma \approx \gamma' | \gamma\in \tau_1^\#(t) \mbox{ and } \gamma' \in \tau_1^\#(t')\big\}, \text{ where }
\end{equation*}

\begin{align*}
&\tau^\#_1(x)\; =\; \{x\}\; \; & \text{for}\; x\in \Va\\
&\tau^\#_1 (f(t_1,t_2,t_3)) \;  =\;\big\{f(t'_1,t'_2,\transvalid(t'_1,t'_2,t'_3))\; \mid \bigwedge_{i=1..3}\, t'_i \in \tau_1^\#(t_i)\big\}& \text{for}\; f\in \{\dep, \withd\}\\
&\tau^\#_1 (f(t_1,\dots,t_n))\; =\; \big\{f(t'_1,\dots,t'_n)\; \mid \bigwedge_{i=1..n}\, t'_i \in \tau_1^\#(t_i)\big\}\;  & \text{for}\; f\notin \{\dep, \withd\}
\end{align*}

\noindent
witnesses a refinement in which isolated calls to
the operations  are mapped to validated transactions.

It might also be the case that some operations  can be executed
with or without validation, as in, for example,

\begin{center}
\begin{minipage}{0.8\textwidth}
\begin{SpecDefn}{B3}
\item[\Enrich]  INT
\item[\Axioms]  ~
\\              $\cdots$
\\              $\balance(\dep(s,i,\transvalid(s,i,n)),i) \, \approx\, \balance(s,i) +  \transvalid(s,i,n)$
\\              $ \balance(\dep(s,i,n),i) \, \approx\, \balance(s,i) + n$
\\              $ \balance(\withd(s,i,\transvalid(s,i,n)),i) \, \approx\, \max(\balance(s,i) - \transvalid(s,i,n), 0) $
\end{SpecDefn}
 \end{minipage}
\end{center}
~\\

\noindent
Clearly, $\mathrm{B3}$ refines
$\mathrm{BAMS}$ through the  interpretation $\mfdec{\tau_2}{\eqq{\mathrm{BAMS}}}{\eqq{\mathrm{B3}}}$,

\begin{align*}
&\tau^\#_2(x)\; =\; \{x\}\; \; \; \text{for}\; x\in \Va\\
&\tau^\#_2 (\dep(t_1,t_2,t_3)) \;  =\;\big\{\dep(t'_1,t'_2,\transvalid(t'_1,t'_2,t'_3)), \dep(t'_1,t'_2,t'_3)\; \mid \bigwedge_{i=1..3}\, t'_i \in \tau_2^\#(t_i)\big\} \\
&\tau^\#_2 (\withd(t_1,t_2,t_3)) \;  =\;\big\{\withd(t'_1,t'_2,\transvalid(t'_1,t'_2,t'_3))\; \mid \bigwedge_{i=1..3}\, t'_i \in \tau_2^\#(t_i)\big\}\\
&\tau^\#_2 (f(t_1,\dots,t_n))\; =\; \big\{f(t'_1,\dots,t'_n)\; \mid \bigwedge_{i=1..n}\, t'_i \in \tau_2^\#(t_i)\big\}\;  \; \text{for}\; f\notin \{\dep, \withd\}
\end{align*}

\noindent
Finally, the reader is invited to check that
 B4 is  a refinement of BAMS through interpretation
$\tau_3:\Eq(\Sigma_{\mathrm{BAMS}})\rightarrow \Eq(\Sigma_{\mathrm{B4}})$, defined similarly to $\tau^\#_2$ but for the $\withd$ case which
becomes

\begin{align*}
\tau^\#_3 (\withd(t_1,t_2,t_3)) \;  =\;\big\{\perm(\withd(t'_1,t'_2,\transvalid(t'_1,t'_2,t'_3)))\; \mid \bigwedge_{i=1..3}\, t'_i \in \tau_3^\#(t_i)\big\}\\
\end{align*}

\noindent
where the specification $\mathrm{B4}$ not only forces all operations to be validated, but
may also require an additional step to check the integrity of the account state before a debit operation is executed.
This step is abstracted in an operation $\perm : Sys \; \longrightarrow\; Sys$. Therefore, operation $\withd$
is  decomposed into a three  step transa\-ction:
\begin{center}
\begin{minipage}{0.8\textwidth}
\begin{SpecDefn}{B4}
\item[\Enrich]  INT
\item[\Axioms]  ~
\\              $\cdots$
\\              $ \balance(\dep(s,i,\transvalid(s,i,n)),i) \; \approx\; \balance(s,i) + \transvalid(s,i,n)$
\\              $\balance(\dep(s,i,n),i) \;\approx\; \balance(s,i) + n$
\\              $\balance(\perm(\withd(s,i,\transvalid(s,i,n))),i)  \approx  \max(\balance(s,i) - \transvalid(s,i,n), 0) $
\end{SpecDefn}
 \end{minipage}
\end{center}

\begin{flushright}
\vspace{-18 pt}$\Diamond$
\end{flushright}
\end{exa}



\subsection{Stepwise refinement  revisited}\label{sc:mainth}

Having illustrated some typical applications of  refinement by interpretation,
it is legitimate to ask  how does it relate to classical refinement
based on signature morphisms. Let us first recall the standard definition:

\begin{defi}[$\sigma$-Refinement]
\index{Refinamento!$\sigma$-Refinamento} Let $\sigma:\Sigma \rightarrow \Sigma'$ be a
signature morphism. The specification $SP'$ over $\Sigma'$ is a\emph{ $\sigma$-refinement of}
$SP$, in symbols $SP\rightsquigarrow_\sigma SP'$, if

$$\mean{SP'} \!\upharpoonright_\sigma\,\subseteq\, \mean{SP} $$

\noindent where $\mean{SP'}\!\upharpoonright_\sigma = \{A\!\upharpoonright_\sigma|A \in
\mean{SP'}\}$.   We write  $SP\rightsquigarrow SP'$ whenever it is witnessed by the identity.
\end{defi}

Note that, as we associate a fixed set of variables $\Vav$ to each signature, the usual notion of signature morphism
 has to be extended so that terms with variables from $\Vav$ can still be
handled. Therefore we assume that each signature morphism $\sigma:\Sigma \rightarrow \Sigma'$
has a component $\sigma_{var}$ on variables, namely an injective mapping from $\bigcup_{s\in S} \Vav_s$
to $\bigcup_{s\in S'} \Vav_s'$ such that, for every variable $x\in \Vav_s$,  $\sigma_{var}(x)\in \Vav_{\sigma_{sort}(s)}$.
This allows for the definition of $\sigma^*$, the extension of $\sigma$ to terms given by $\sigma^*(x) = \sigma_{var}(x)$,
for each variable $x$, and by $\sigma^*(f(t_1,\dots,t_n))=\sigma_{op}(f)(\sigma^*(t_1),\dots,\sigma^*(t_n))$,
  for each term $f(t_1,\dots,t_n)$ in $\Sigma$.

Since the
composition of two signature morphisms is still a signature morphism,
refinements can be composed vertically: if
$SP_0\rightsquigarrow_{\sigma_1} SP_1$ and $SP_1\rightsquigarrow_{\sigma_2} SP_2$ then
$SP_0\rightsquigarrow_{\sigma_2\circ\sigma_1} SP_2$.
This is simply a consequence of reducts being functorial.

The so called  \emph{satisfaction lemma} \cite{instituicoes} is at the basis of classical refinement.
Actually, theorem \ref{refmotiva} below, whose proof relies on the satisfaction lemma,
provides an important characterisation of $\sigma$-refinements.
\begin{lem}[Satisfaction Lemma]
\label{satisfacao}  Let $\Sigma$ and  $\Sigma'$ be signatures,
$\sigma:\Sigma\rightarrow\Sigma'$ a signature morphism, $A'$ a $\Sigma'$-algebra, and $\xi$
a conditional equation. Then,
\begin{equation*}
A'\models \sigma(\xi)\; \text{\mbox{ iff }}\;  A'\!\upharpoonright_\sigma \models \xi
\end{equation*}
\end{lem}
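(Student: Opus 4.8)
The plan is to reduce the statement to a single \emph{translation lemma} relating the interpretation of a $\Sigma$-term in the reduct $A'\!\upharpoonright_\sigma$ to the interpretation of its $\sigma$-translate in $A'$, and then to lift this from single terms to conditional equations by matching up assignments on the two sides.

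First I would prove, by structural induction on a $\Sigma$-term $t$, the identity
$$ g(t) \;=\; h'\big(\sigma^*(t)\big), $$
where $g : \Va \rightarrow A'\!\upharpoonright_\sigma$ and $h' : \Va \rightarrow A'$ are assignments linked by $g(x) = h'(\sigma_{var}(x))$ for every variable $x$. (Recall that, as sets, $A'\!\upharpoonright_\sigma$ and $A'$ coincide sortwise, since $s^{A'\upharpoonright_\sigma} = \sigma_{sort}(s)^{A'}$, so this makes sense.) The base case $t = x$ is immediate from $\sigma^*(x) = \sigma_{var}(x)$ together with the linkage $g(x) = h'(\sigma_{var}(x))$. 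For the inductive step $t = f(t_1,\dots,t_n)$, the definition of the reduct gives $f^{A'\upharpoonright_\sigma} = \sigma_{op}(f)^{A'}$, so that
$$ g\big(f(t_1,\dots,t_n)\big) = \sigma_{op}(f)^{A'}\big(g(t_1),\dots,g(t_n)\big) = \sigma_{op}(f)^{A'}\big(h'(\sigma^*(t_1)),\dots,h'(\sigma^*(t_n))\big) = h'\big(\sigma^*(f(t_1,\dots,t_n))\big), $$
using the induction hypothesis on each $t_i$ and $\sigma^*(f(t_1,\dots,t_n)) = \sigma_{op}(f)(\sigma^*(t_1),\dots,\sigma^*(t_n))$.

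Next I would handle the bookkeeping over assignments, which I expect to be the only delicate point. Writing $\xi = \langle\{t_1\approx t_1',\dots,t_n\approx t_n'\}, t\approx t'\rangle$, the translation $\sigma(\xi)$ is obtained by replacing every term by its $\sigma^*$-image. Because each variable occurring in a translated term $\sigma^*(u)$ lies in $\sigma_{var}(\FV(u))$ and $\sigma_{var}$ is injective, the value $h'(\sigma^*(u))$ depends only on the restriction of $h'$ to the image of $\sigma_{var}$. Consequently every assignment $g$ into $A'\!\upharpoonright_\sigma$ arises as the linked companion of some $h'$ into $A'$, and conversely every $h'$ restricts to such a $g$; the values of $h'$ outside the image of $\sigma_{var}$ are irrelevant. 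This makes the universal quantifications over assignments on the two sides correspond exactly.

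Finally I would combine the two steps. By the translation lemma, for any linked pair $(g,h')$ one has $g(t_i)=g(t_i')$ iff $h'(\sigma^*(t_i))=h'(\sigma^*(t_i'))$, and likewise for the conclusion $t\approx t'$. Hence the implication defining satisfaction of $\xi$ in $A'\!\upharpoonright_\sigma$ under $g$ holds precisely when the implication defining satisfaction of $\sigma(\xi)$ in $A'$ under $h'$ holds. Quantifying over all assignments and invoking the correspondence just established yields $A'\!\upharpoonright_\sigma \models \xi$ iff $A' \models \sigma(\xi)$, as required. The main obstacle is purely the variable bookkeeping forced by the fixed variable family $\Va$ and the auxiliary component $\sigma_{var}$; the algebraic content is entirely carried by the routine induction of the first step.
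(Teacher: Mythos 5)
The paper itself gives no proof of this lemma: it is stated as the classical satisfaction lemma and attributed to the institutions literature \cite{instituicoes}, so there is no in-paper argument to compare yours against. What you supply is the standard self-contained proof, and it is essentially correct: the structural induction establishing $g(t)=h'(\sigma^*(t))$ for linked assignments is exactly the usual translation lemma, and lifting it to conditional equations by matching the universal quantifications over assignments is routine. Two remarks. First, a notational slip: $h'$ should be an assignment of the $\Sigma'$-variables (the family $\Va$ attached to $S'$), not of the $\Sigma$-variables; your use of $\sigma_{var}$ shows you intend this. Second, and more substantively, the direction in which ``every assignment $g$ into $A'\!\upharpoonright_\sigma$ arises as the linked companion of some $h'$'' requires extending an assignment defined only on the image of $\sigma_{var}$ (where injectivity of $\sigma_{var}$ makes the definition $h'(\sigma_{var}(x)):=g(x)$ legitimate) to a \emph{total} homomorphism from the $\Sigma'$-term algebra into $A'$, because the paper's satisfaction relation quantifies over total homomorphisms $h\colon \Te_{\Sigma'}(\Va)\rightarrow A'$. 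This extension needs the carriers of $A'$ to be nonempty on every sort carrying variables outside the image of $\sigma_{var}$; with empty carriers the lemma as stated can actually fail ($A'\models\sigma(\xi)$ becomes vacuous while $A'\!\upharpoonright_\sigma$ may refute $\xi$). This nonemptiness convention is tacit throughout the paper, so I would not count it as a gap in your argument, but it is precisely the point your ``variable bookkeeping'' glosses over with ``the values of $h'$ outside the image of $\sigma_{var}$ are irrelevant'': irrelevant to the computed values, yes, but they must still exist.
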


\begin{thm}
\label{refmotiva} Let $\sigma:\Sigma\rightarrow \Sigma'$ be a signature morphism, $SP=\langle
\Sigma, \Phi\rangle $ a $X$-flat specification and $SP'$ a  specification over $\Sigma'$.
Then, $SP\rightsquigarrow_\sigma SP'$ iff $SP'\models \sigma(\Phi)$.
\end{thm}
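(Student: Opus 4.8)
The plan is to reduce the statement to a one-line application of the Satisfaction Lemma, the only genuine content being the observation that, since $SP=\langle\Sigma,\Phi\rangle$ is flat, its model class is exactly the class of $\Sigma$-algebras satisfying every conditional equation in $\Phi$. Thus $A\in\mean{SP}$ holds precisely when $A\models\xi$ for all $\xi\in\Phi$; this is what licenses trading membership in $\mean{SP}$ for satisfaction of the axiom set.

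First I would unfold the definition of $\sigma$-refinement: $SP\rightsquigarrow_\sigma SP'$ means $\mean{SP'}\!\upharpoonright_\sigma\,\subseteq\,\mean{SP}$, that is, for every $A'\in\mean{SP'}$ the reduct $A'\!\upharpoonright_\sigma$ lies in $\mean{SP}$. By flatness of $SP$ this is equivalent to requiring that $A'\!\upharpoonright_\sigma\models\xi$ for every $A'\in\mean{SP'}$ and every $\xi\in\Phi$.

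Next I would apply Lemma \ref{satisfacao} to each conditional equation $\xi\in\Phi$ and each $\Sigma'$-algebra $A'$, obtaining $A'\!\upharpoonright_\sigma\models\xi$ iff $A'\models\sigma(\xi)$. Substituting this equivalence into the previous condition turns it into: $A'\models\sigma(\xi)$ for every $A'\in\mean{SP'}$ and every $\xi\in\Phi$, which is precisely the assertion $SP'\models\sigma(\Phi)$. Reading the resulting chain of equivalences in both directions establishes the biconditional; if clarity demands it I would spell out the two implications separately, each being obtained by fixing an arbitrary $A'\in\mean{SP'}$ and running the same equivalences.

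I do not expect any real obstacle here: the argument is entirely bookkeeping over the universal quantification on models of $SP'$ and on the axioms of $\Phi$, with the Satisfaction Lemma carrying the whole load. The only points requiring care are the appeal to flatness, which is exactly what allows replacing $A'\!\upharpoonright_\sigma\in\mean{SP}$ by satisfaction of $\Phi$, and the tacit extension of the Satisfaction Lemma from a single conditional equation to the entire set $\Phi$, which is immediate since satisfying a set of sentences means satisfying each of its members.
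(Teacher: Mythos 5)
Your proposal is correct and follows essentially the same route as the paper's own proof: both directions reduce to the Satisfaction Lemma applied axiom-by-axiom, with flatness identifying membership in $\mean{SP}$ with satisfaction of $\Phi$. The only cosmetic difference is that you phrase it as a single chain of equivalences while the paper spells out the two implications separately.
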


\begin{proof}
Suppose that $SP\rightsquigarrow_\sigma SP'$. Then, for any $A' \in
\mean{SP'}$, $A'\!\upharpoonright_\sigma \in
\mean{SP}$, i.e., $A'\!\upharpoonright_\sigma
\models \Phi$. Hence, by Lemma \ref{satisfacao}, $A'\models
\sigma(\Phi)$. On the other hand, suppose  $SP'\models
\sigma(\Phi)$. Then, for any $A' \in \mean{SP'}$, $A' \models \sigma(\Phi)$. By Lemma
\ref{satisfacao} $A'\!\upharpoonright_\sigma \models \Phi$, and hence,
$A'\!\upharpoonright_\sigma \in \mean{SP}$. Therefore
$SP\rightsquigarrow_\sigma SP'$.
\end{proof}

A relationship between $\sigma$-refinement and refinement by interpretation can now be described as follows.

\begin{thm}\label{carmot}
Let $SP$ be a specification over $\Sigma$, and $\tau$ a translation from $\Sigma$ to $\Sigma'$
which interprets $SP$. If there is a specification $SP^\tau$ whose  denotation coincides with $\Mod_\tau(SP)$, then,
 for every $SP'$  over $\Sigma'$,  $SP^\tau\rightsquigarrow SP'$ implies
$SP\rightharpoondown_\tau SP'$.
\end{thm}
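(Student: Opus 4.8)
The plan is to unfold both refinement relations and then let the defining property of $\Mod_\tau(SP)$ carry the argument. First I would rewrite the hypothesis $SP^\tau \rightsquigarrow SP'$. Since this is a $\sigma$-refinement witnessed by the identity signature morphism $\id\colon \Sigma' \to \Sigma'$, and since the reduct of any algebra along the identity is the algebra itself, the inclusion $\mean{SP'}\!\upharpoonright_{\id} \subseteq \mean{SP^\tau}$ collapses to the plain class inclusion $\mean{SP'} \subseteq \mean{SP^\tau}$. Invoking the hypothesis $\mean{SP^\tau} = \Mod_\tau(SP)$, this becomes
\[
\mean{SP'} \subseteq \Mod_\tau(SP).
\]

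Next I would recall the defining property of the $\tau$-model class: every $A' \in \Mod_\tau(SP)$ is, by Definition \ref{df:tau:model}, a $\tau$-model of $SP$, so that for every $\xi \in \Ceq(\Sigma)$ the implication $SP \models \xi \Rightarrow A' \models \tau(\xi)$ holds. Quantifying over all members of the class, this is exactly the observation recorded immediately after Definition \ref{df:tau:model}, namely that $SP \models \xi$ implies $\Mod_\tau(SP) \models \tau(\xi)$.

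The conclusion $SP \rightharpoondown_\tau SP'$ then follows by chaining these two facts. Fixing an arbitrary $\xi \in \Ceq(\Sigma)$ with $SP \models \xi$, the previous paragraph gives $\Mod_\tau(SP) \models \tau(\xi)$, i.e.\ every algebra in $\Mod_\tau(SP)$ satisfies each conditional equation of the set $\tau(\xi)$. Because $\mean{SP'} \subseteq \Mod_\tau(SP)$, every model of $SP'$ in particular satisfies $\tau(\xi)$, so $SP' \models \tau(\xi)$. As $\xi$ was arbitrary, this is precisely the defining implication of refinement by interpretation, and hence $SP \rightharpoondown_\tau SP'$.

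I do not expect any genuine obstacle here: once $SP^\tau \rightsquigarrow SP'$ is seen as the class inclusion $\mean{SP'} \subseteq \Mod_\tau(SP)$, the result is just monotonicity of class-satisfaction along that inclusion. The only point that needs a little care is the bookkeeping that $\tau(\xi)$ denotes a \emph{set} of conditional equations (via the extension $\tau^*$), so that both $\Mod_\tau(SP) \models \tau(\xi)$ and $SP' \models \tau(\xi)$ are to be read as satisfaction of every member of that set; with this reading, satisfaction by the larger class transfers to the smaller one without further work. Note in particular that Theorem \ref{axi0}, which identifies $\Mod_\tau(SP)$ as the \emph{largest} $\tau$-interpretation, is not needed for this direction --- only the weaker fact that every member of $\Mod_\tau(SP)$ is a $\tau$-model of $SP$ is used.
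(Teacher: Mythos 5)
Your proposal is correct and follows essentially the same route as the paper's own proof: unfold $SP^\tau\rightsquigarrow SP'$ to the inclusion $\mean{SP'}\subseteq\mean{SP^\tau}=\Mod_\tau(SP)$, observe that every algebra in $\mean{SP'}$ is therefore a $\tau$-model of $SP$, and conclude the defining implication of $SP\rightharpoondown_\tau SP'$. Your version merely spells out the bookkeeping (identity reducts, $\tau(\xi)$ as a set) that the paper leaves implicit.
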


\begin{proof}
Suppose $SP^\tau\rightsquigarrow SP'$, \emph{i.e.},  $\mean{SP'} \subseteq
\mean{SP^\tau}$. Thus, any algebra $A' \in \mean{SP'}$ is a $\tau$-model of $SP$.
Therefore for any $\xi \in \Ceq(\Sigma)$, $SP\models \xi$ implies $SP'\models \tau(\xi)$.
I.e., $SP\rightharpoondown_\tau SP'$.
\end{proof}

\begin{thm}
\label{carmot1} Let $SP $ be a specification over $\Sigma$ and $\tau$
a  translation from $\Sigma$ to $\Sigma'$.
If $\tau$ interprets $SP$ and  there is a specification $SP^\tau$ whose  class of models coincides with $\Mod_\tau(SP)$,
 then the following conditions are equivalent:
\begin{align}
  & SP\rightharpoondown_\tau SP' \label{one}\\
  & SP^\tau\rightsquigarrow SP' \label{two}
\end{align}
\end{thm}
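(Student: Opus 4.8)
The plan is to prove the two implications separately, noting that one of them is already available in the text. The implication $\eqref{two}\Rightarrow\eqref{one}$ is exactly the content of Theorem~\ref{carmot} under the present hypotheses (that $\tau$ interprets $SP$ and that $\mean{SP^\tau}=\Mod_\tau(SP)$), so for that direction I would simply invoke Theorem~\ref{carmot} and be done. The real work is therefore confined to the converse, $\eqref{one}\Rightarrow\eqref{two}$.

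For the converse I would assume $SP\rightharpoondown_\tau SP'$ and aim to establish the inclusion $\mean{SP'}\subseteq\mean{SP^\tau}$, which is precisely the meaning of $SP^\tau\rightsquigarrow SP'$ (refinement witnessed by the identity). Since by hypothesis $\mean{SP^\tau}=\Mod_\tau(SP)$, it suffices to show that every model of $SP'$ is a $\tau$-model of $SP$. So I would fix an arbitrary $A'\in\mean{SP'}$ and verify the defining condition of Definition~\ref{df:tau:model}: for each $\xi\in\Ceq(\Sigma)$ with $SP\models\xi$, one must have $A'\models\tau(\xi)$.

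The key step is then immediate from the assumption. If $SP\models\xi$, then $SP\rightharpoondown_\tau SP'$ gives $SP'\models\tau(\xi)$; and since $A'\in\mean{SP'}$, every conditional equation in the set $\tau(\xi)$ is valid in $A'$, that is, $A'\models\tau(\xi)$. As $\xi$ was arbitrary, $A'$ is a $\tau$-model of $SP$, hence $A'\in\Mod_\tau(SP)=\mean{SP^\tau}$. This yields $\mean{SP'}\subseteq\mean{SP^\tau}$, i.e.\ $SP^\tau\rightsquigarrow SP'$, completing the converse and the equivalence.

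I do not expect a genuine obstacle here: the argument is a direct unfolding of the definition of $\tau$-model once Theorem~\ref{carmot} is reused for the easy direction. The only point requiring a little care is to keep in mind that $\tau(\xi)$ denotes a \emph{set} of conditional equations (through the extension $\tau^*$ identified with $\tau$), so that both $SP'\models\tau(\xi)$ and $A'\models\tau(\xi)$ must be read as satisfaction of every member of that set; this is exactly the property that transfers from $SP'$ to each of its models $A'$.
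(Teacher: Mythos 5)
Your proposal is correct and follows essentially the same route as the paper: the direction $\eqref{two}\Rightarrow\eqref{one}$ is delegated to Theorem~\ref{carmot}, and the direction $\eqref{one}\Rightarrow\eqref{two}$ is obtained by observing that every $A'\in\mean{SP'}$ is a $\tau$-model of $SP$, hence $\mean{SP'}\subseteq\Mod_\tau(SP)=\mean{SP^\tau}$. The only difference is that you unfold the definition of $\tau$-model explicitly where the paper states this step in one line.
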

\begin{proof}
Suppose that $SP\rightharpoondown_\tau SP'$. Since any $A \in \mean{SP'}$ is a $\tau$-model,
$\mean{SP'} \subseteq \mean{SP^\tau}$. Hence, $SP^\tau\rightsquigarrow SP'$.
The converse implication is just Theorem
\ref{carmot}.
\end{proof}

\noindent
An immediate corollary is

\begin{cor}
Let $SP=\langle \Sigma, \Phi \rangle$ be a $X$-flat specification and $\tau$ a translation from $\Sigma$ to $\Sigma'$
which interprets $SP$
and $\Mod_\tau(SP)$ is axiomatised by $\tau(\Phi)$.  Then,  $SP'\models \tau(\Phi)$ implies $SP\rightharpoondown_\tau SP'$.
\end{cor}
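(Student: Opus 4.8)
The plan is to recognise that the standing hypothesis turns $\langle \Sigma', \tau(\Phi)\rangle$ into a concrete witness for the specification $SP^\tau$ required by Theorem~\ref{carmot}, and then to reduce the claimed refinement by interpretation to a plain identity $\sigma$-refinement. The whole content lives in the earlier theorem; the corollary merely packages it under an explicit axiomatisability assumption.

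First I would set $SP^\tau := \langle \Sigma', \tau(\Phi)\rangle$. Since $SP=\langle \Sigma,\Phi\rangle$ is $X$-flat and a translation maps each equation to a globally finite sorted set of equations, $\tau(\Phi)$ is a set of conditional equations over $\Sigma'$, so $SP^\tau$ is a legitimate flat specification over $\Sigma'$. By the hypothesis that $\Mod_\tau(SP)$ is axiomatised by $\tau(\Phi)$, its denotation coincides with $\Mod_\tau(SP)$, that is $\mean{SP^\tau}=\Mod_\tau(SP)$. Together with the assumption that $\tau$ interprets $SP$, this supplies exactly the premise needed to invoke Theorem~\ref{carmot}.

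Next I would unfold the assumption $SP'\models\tau(\Phi)$. By the convention for $\models$ extended to sets of conditional equations, every $A'\in\mean{SP'}$ satisfies each member of $\tau(\Phi)$, and hence $A'\in\mean{SP^\tau}$. Therefore $\mean{SP'}\subseteq\mean{SP^\tau}$, which is by definition the identity $\sigma$-refinement $SP^\tau\rightsquigarrow SP'$. Applying Theorem~\ref{carmot} with the witness $SP^\tau$ constructed above then yields $SP^\tau\rightsquigarrow SP'\ \Rightarrow\ SP\rightharpoondown_\tau SP'$, which closes the argument.

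I expect no genuine obstacle here: the only points to verify are that the axiomatisability hypothesis genuinely delivers the required $SP^\tau$ (so that Theorem~\ref{carmot} is applicable), and that the satisfaction $\models$ over the axiom set $\tau(\Phi)$ collapses exactly to the model-class inclusion $\mean{SP'}\subseteq\mean{SP^\tau}$. Both are immediate from the definitions already in place, so the corollary follows directly as an instance of Theorem~\ref{carmot}.
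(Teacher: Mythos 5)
Your argument is correct and follows exactly the paper's own proof: take $SP^\tau = \langle\Sigma',\tau(\Phi)\rangle$, use the axiomatisability hypothesis to identify $\mean{SP^\tau}$ with $\Mod_\tau(SP)$, read $SP'\models\tau(\Phi)$ as the inclusion $\mean{SP'}\subseteq\mean{SP^\tau}$ (i.e.\ $SP^\tau\rightsquigarrow SP'$), and invoke Theorem~\ref{carmot}. No differences worth noting.
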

\begin{proof}
Let $SP^{\tau}$ be the flat specification $\pair{\Sigma', \tau(\Phi)}$. By hypothesis $\mean{SP^\tau} = \Mod_\tau(SP)$.
Suppose that $SP' \models \tau(\Phi)$. Then $\mean{SP'} \subseteq \mean{SP^\tau}$, which entails
$SP^\tau\rightsquigarrow SP'$. By Theorem \ref{carmot}, since $\mean{SP^\tau} = \Mod_\tau(SP)$, we conclude
$SP \rightharpoondown_\tau SP'$.
\end{proof}

This paves the way to address the  following question: \emph{given that
any mapping  can be regarded as a multifunction, when does a $\sigma$-refinement
via signature morphism become a refinement by interpretation?}

First note that a signature morphism  $\sigma:\Sigma\rightarrow \Sigma'$
induces a translation $\tau:\Ceq(\Sigma)\rightarrow \Ceq(\Sigma')$.
This is defined, for each   $\langle \Gamma, e \rangle \in \Ceq(\Sigma)$,  by
$\tau(\langle \Gamma, t\approx t' \rangle)=\langle \{\sigma(t)\approx \sigma(t')|t\approx t' \in \Gamma\},\sigma(t)\approx \sigma(t')\rangle$.
Then,

\begin{lem}
Let $SP$ be a specification over $\Sigma$, $\sigma:\Sigma \rightarrow \Sigma'$   \nt{an
injective} signature morphism, and $\tau$  the translation induced by the signature morphism
$\sigma$. Then $\tau$ interprets $SP$.
\end{lem}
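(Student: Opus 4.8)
The plan is to exhibit an explicit witnessing class $K$ of $\Sigma'$-algebras and to verify the biconditional in the definition of interpretation by two applications of the Satisfaction Lemma (Lemma~\ref{satisfacao}). Since the translation $\tau$ induced by $\sigma$ sends a conditional equation $\xi$ to the conditional equation obtained by applying $\sigma$ termwise, the requirement $K\models\tau(\xi)$ is literally $K\models\sigma(\xi)$, which is exactly the form Lemma~\ref{satisfacao} handles. The natural candidate is
\[
K \;=\; \{\,A' \text{ a } \Sigma'\text{-algebra} \mid A'\!\upharpoonright_\sigma \in \mean{SP}\,\},
\]
the class of all $\Sigma'$-expansions of models of $SP$. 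I would then prove that $SP\models\xi$ iff $K\models\tau(\xi)$ for every $\xi\in\Ceq(\Sigma)$, which amounts to establishing $\{A'\!\upharpoonright_\sigma\mid A'\in K\}=\mean{SP}$ and reading the biconditional off through Lemma~\ref{satisfacao}.

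For the forward direction, suppose $SP\models\xi$ and take any $A'\in K$. By construction $A'\!\upharpoonright_\sigma\in\mean{SP}$, hence $A'\!\upharpoonright_\sigma\models\xi$, and Lemma~\ref{satisfacao} gives $A'\models\sigma(\xi)=\tau(\xi)$. As $A'$ was arbitrary, $K\models\tau(\xi)$. This half uses only the inclusion $\{A'\!\upharpoonright_\sigma\mid A'\in K\}\subseteq\mean{SP}$, which holds by definition of $K$ and needs no hypothesis on $\sigma$.

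For the converse, assume $K\models\tau(\xi)$ and let $A\in\mean{SP}$ be arbitrary. The crucial step is to realise $A$ as a $\sigma$-reduct, i.e.\ to build a $\Sigma'$-algebra $A'$ with $A'\!\upharpoonright_\sigma=A$, and this is precisely where injectivity of $\sigma$ enters. Because $\sigma_{sort}$ and $\sigma_{op}$ are injective, the image subsignature $\sigma(\Sigma)\subseteq\Sigma'$ is a faithful copy of $\Sigma$, so the carriers and operations of $A$ may be transported without clash onto the sorts and operation symbols lying in the image of $\sigma$; the remaining (fresh) sorts of $\Sigma'$ are assigned arbitrary carriers and the remaining operation symbols arbitrary total interpretations. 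By construction $A'\!\upharpoonright_\sigma=A\in\mean{SP}$, so $A'\in K$ and therefore $A'\models\tau(\xi)=\sigma(\xi)$; Lemma~\ref{satisfacao} then yields $A=A'\!\upharpoonright_\sigma\models\xi$. Since $A$ ranged over all of $\mean{SP}$, we conclude $SP\models\xi$, and $K$ witnesses that $\tau$ interprets $SP$.

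The main obstacle is the expansion argument in the converse direction. Injectivity of $\sigma$ is indispensable there: if $\sigma$ identified two sorts (or two operation symbols), a model of $SP$ distinguishing them could not arise as a $\sigma$-reduct of any $\Sigma'$-algebra, and the inclusion $\mean{SP}\subseteq\{A'\!\upharpoonright_\sigma\mid A'\in K\}$ would break. The only genuinely technical point is checking that every model of $SP$ does admit an expansion—that fresh sorts can be populated and fresh operations interpreted; under the standing convention that carriers are non-empty this is immediate, so that a total interpretation of each fresh symbol always exists. With the expansion secured, the entire proof collapses to the chain $SP\models\xi\iff\{A'\!\upharpoonright_\sigma\mid A'\in K\}\models\xi\iff K\models\tau(\xi)$ via the Satisfaction Lemma.
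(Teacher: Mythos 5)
Your proof is correct and follows essentially the same route as the paper: the same witnessing class $K=\{A'\mid A'\!\upharpoonright_\sigma\in\mean{SP}\}$, the same two applications of the Satisfaction Lemma, and the same use of injectivity of $\sigma$ to realise each model of $SP$ as a $\sigma$-reduct. The only difference is that you spell out the expansion construction that the paper leaves implicit, which is a welcome elaboration rather than a divergence.
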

\begin{proof}Let  $K = \{A'|A' \!\upharpoonright_\sigma \in \mean{SP}\}$. Let $\xi \in \Ceq(\Sigma)$, and
 suppose $SP\models \xi$. Let $A' \in K$. Since $A'\!\upharpoonright_\sigma
  \in  \mean{SP} $ we have that
$A'\!\upharpoonright_\sigma \models \xi$ and, by Lemma \ref{satisfacao},  $A'\models
\sigma(\xi)$. Therefore $K \models \tau(\xi)$. Suppose now $K \models \tau(\xi)$ and let $A
\in \mean{SP}$. Since $\sigma$ is injective there is $B \in K$ such
$B\!\upharpoonright_\sigma=A$. Thus  $B\models \tau(\xi)$ and, by Lemma \ref{satisfacao},
$B\!\upharpoonright_\sigma\models \xi$, \emph{i.e.}, $A\models \xi$. Hence $SP\models \xi$.
Therefore $\sigma$ interprets $SP$.
\end{proof}

Next theorem shows that refinement by interpretation is, actually, a generalisation of $\sigma$-refinement whenever the
 signature morphism $\sigma$ is injective.

\begin{thm}
Let $SP$ and $SP'$ be two specifications over $\Sigma$ and $\Sigma'$ respectively, and
$\sigma:\Sigma \rightarrow \Sigma'$  \nt{ an injective} signature morphism. Let $\tau$ be the
translation induced by the signature morphism $\sigma$. Then, $SP \rightsquigarrow_\sigma SP'$
implies $SP\rightharpoondown_\tau SP'$.
\end{thm}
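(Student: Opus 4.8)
The plan is to unfold the definition of refinement by interpretation, which demands two things: that $\tau$ interprets $SP$, and that $SP \models \xi$ implies $SP' \models \tau(\xi)$ for every $\xi \in \Ceq(\Sigma)$. The first requirement is already discharged by the preceding lemma, which establishes precisely that the translation induced by an injective signature morphism interprets $SP$. I would simply invoke it; injectivity of $\sigma$ is essential here, since without it $\tau$ could fail to reflect the consequence relation and refinement by interpretation would not even be defined.

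For the second requirement, I would fix an arbitrary $\xi \in \Ceq(\Sigma)$ with $SP \models \xi$ and show that every $A' \in \mean{SP'}$ validates $\tau(\xi)$. Taking such an $A'$, the refinement hypothesis $SP \rightsquigarrow_\sigma SP'$, that is $\mean{SP'}\!\upharpoonright_\sigma \subseteq \mean{SP}$, immediately yields $A'\!\upharpoonright_\sigma \in \mean{SP}$. Since $SP \models \xi$, this gives $A'\!\upharpoonright_\sigma \models \xi$.

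The closing step invokes the Satisfaction Lemma (Lemma \ref{satisfacao}): from $A'\!\upharpoonright_\sigma \models \xi$ I obtain $A' \models \sigma(\xi)$. Because $\tau$ is the translation induced by $\sigma$, one has $\tau(\xi) = \sigma(\xi)$, so $A' \models \tau(\xi)$. As $A'$ ranges over all of $\mean{SP'}$, this establishes $SP' \models \tau(\xi)$, and hence $SP \rightharpoondown_\tau SP'$.

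I do not expect any genuine obstacle: the argument is a direct chaining of the refinement hypothesis with the Satisfaction Lemma, structurally dual to the proof of Theorem \ref{refmotiva}. The only point deserving care is the logical dependency on injectivity of $\sigma$, which is what licenses the preceding lemma and thereby guarantees that $\tau$ qualifies as an interpretation of $SP$ in the first place.
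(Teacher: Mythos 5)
Your proposal is correct and follows the paper's own proof essentially step for step: invoke the preceding lemma (which needs injectivity of $\sigma$) to see that $\tau$ interprets $SP$, then for each $\xi$ with $SP \models \xi$ and each $A' \in \mean{SP'}$ use the reduct inclusion to get $A'\!\upharpoonright_\sigma \models \xi$ and conclude $A' \models \sigma(\xi) = \tau(\xi)$ via the Satisfaction Lemma. No differences worth noting.
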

\begin{proof}By the previous theorem $\tau$ interprets $SP$. Suppose $SP \rightsquigarrow_\sigma SP'$, \emph{i.e.},
$\mean{SP'} \!\upharpoonright_\sigma \subseteq \mean{SP}$.
 Let  $\xi \in \Ceq(\Sigma) $ such that
$SP\models \xi$. Let $A' \in \mean{SP'}$. Then
$A'\!\upharpoonright_\sigma \in \mean{SP}$ and so
$A'\!\upharpoonright_\sigma\models \xi$.
By Lemma \ref{satisfacao}, $A'\models \sigma(\xi)$. Hence $SP'\models \sigma(\xi)$. Therefore
$SP\rightharpoondown_\tau SP'$.
\end{proof}

\noindent
Finally, we show that, in the \emph{flat} case, the two concepts of refinement coincide:

\begin{thm}
Let $\sigma:\Sigma \rightarrow \Sigma'$ be  \nt{an injective} signature morphism, $SP=\langle
\Sigma, \Phi\rangle$ a $X$-flat specification and $SP'$ a specification over $\Sigma'$. Let
$\tau$ be the translation induced by the signature morphism $\sigma$ . Then
$SP\rightsquigarrow_\sigma SP'$ iff $SP\rightharpoondown_\tau SP'$.
\end{thm}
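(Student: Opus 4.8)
The plan is to observe that the forward implication has in fact already been proved, so that only the converse needs genuine work. The direction $SP\rightsquigarrow_\sigma SP' \imp SP\rightharpoondown_\tau SP'$ is exactly the content of the previous theorem, which is stated for an arbitrary injective $\sigma$ with $\tau$ the induced translation and imposes no flatness hypothesis; hence nothing new is required there and I would simply invoke it.

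For the converse, I would assume $SP\rightharpoondown_\tau SP'$. Since $SP=\langle \Sigma,\Phi\rangle$ is $X$-flat, Theorem \ref{refmotiva} reduces the goal $SP\rightsquigarrow_\sigma SP'$ to the single assertion $SP'\models \sigma(\Phi)$. First I would note that, because $\Phi$ axiomatises $\mean{SP}$, every $\phi\in\Phi$ satisfies $SP\models\phi$. Feeding each such $\phi$ into the defining clause of refinement by interpretation immediately yields $SP'\models\tau(\phi)$.

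The key step is to identify $\tau$ with $\sigma$ on these formulas. Because $\tau$ is the translation \emph{induced by the signature morphism} $\sigma$, it sends each conditional equation $\langle\Gamma,t\approx t'\rangle$ to the singleton whose unique member is $\langle\{\sigma(t)\approx\sigma(t')\mid t\approx t'\in\Gamma\},\sigma(t)\approx\sigma(t')\rangle$; that is, $\tau$ is effectively single-valued and $\tau(\phi)$ coincides with $\sigma(\phi)$, so $\tau(\Phi)=\sigma(\Phi)$. Consequently $SP'\models\tau(\phi)$ is literally $SP'\models\sigma(\phi)$, and letting $\phi$ range over $\Phi$ gives $SP'\models\sigma(\Phi)$. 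A second application of Theorem \ref{refmotiva} then delivers $SP\rightsquigarrow_\sigma SP'$, completing the converse.

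I do not anticipate any real obstacle: the whole argument hinges on the single observation that the induced $\tau$ collapses to $\sigma$ on equations, after which each implication follows from results already in hand --- the previous theorem for one direction and the flat characterisation of Theorem \ref{refmotiva} for the other. The only points demanding care are confirming that flatness of $SP$ is precisely what licenses the use of Theorem \ref{refmotiva}, and that injectivity of $\sigma$ is needed only indirectly, namely to guarantee (via the preceding lemma) that $\tau$ interprets $SP$, so that the statement $SP\rightharpoondown_\tau SP'$ is well-posed to begin with.
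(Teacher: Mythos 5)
Your proposal is correct and follows essentially the same route as the paper: the forward direction is delegated to the preceding theorem, and the converse observes that $SP\models\Phi$ together with $SP\rightharpoondown_\tau SP'$ gives $SP'\models\tau(\Phi)=\sigma(\Phi)$, whence Theorem~\ref{refmotiva} yields $SP\rightsquigarrow_\sigma SP'$. Your added remarks on where flatness and injectivity are actually used are accurate elaborations of steps the paper leaves implicit.
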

\begin{proof}

Suppose $SP\rightharpoondown_\tau SP'$.  Since $SP \models \Phi$, $SP' \models \sigma(\Phi)$.
By Theorem \ref{refmotiva} $SP\rightsquigarrow_\sigma SP'$.
\end{proof}

\noindent
It should be noted at this point that the discussion concerning composition of refinements by interpretation is not
straightforward. In fact, \emph{horizontal composition} is still an open question (see section \ref{sc:conc}).
For \emph{vertical composition} an additional property has to be
imposed on the components' interpretations. Formally,

\begin{thm}\label{th:ver1}
Let $SP$, $SP'$ and $SP''$ be three specifications over $\Sigma$, $\Sigma'$ and $\Sigma''$
respectively. Let $\tau$ be a translation from $\Sigma$ to $\Sigma'$
and $\rho$  a translation from $\Sigma'$ to $\Sigma''$
Suppose that
$SP\rightharpoondown_\tau SP'$, $SP'\rightharpoondown_\rho SP''$ and that  there exists a specification $SP^\tau$ over
$\Sigma'$ such that  $\mean{SP^\tau} = \Mod_{\tau}(SP)$ and $\rho$ interprets $SP^{\tau}$.
Then $SP\rightharpoondown_{\rho\circ\tau} SP''$.
\end{thm}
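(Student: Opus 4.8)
The plan is to unfold the definition of $SP\rightharpoondown_{\rho\circ\tau}SP''$ into its two constituent obligations and discharge them separately. Recall that, since $\tau$ and $\rho$ are multifunctions, the composite acts on a conditional equation $\xi\in\Ceq(\Sigma)$ by $(\rho\circ\tau)(\xi)=\bigcup_{\eta\in\tau(\xi)}\rho(\eta)$, and that $SP''\models(\rho\circ\tau)(\xi)$ abbreviates $SP''\models\eta'$ for every $\eta'$ in this set. Thus I must show (a) that $\rho\circ\tau$ interprets $SP$, which is what makes the symbol $\rightharpoondown_{\rho\circ\tau}$ meaningful, and (b) the preservation implication $SP\models\xi\imp SP''\models(\rho\circ\tau)(\xi)$ for every $\xi\in\Ceq(\Sigma)$.

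Obligation (b) is the routine step: I would simply chain the two given refinements. Assume $SP\models\xi$. From $SP\rightharpoondown_\tau SP'$ we get $SP'\models\tau(\xi)$, i.e.\ $SP'\models\eta$ for each $\eta\in\tau(\xi)$. Each such $\eta$ lies in $\Ceq(\Sigma')$, so $SP'\rightharpoondown_\rho SP''$ yields $SP''\models\rho(\eta)$. Taking the union over all $\eta\in\tau(\xi)$ gives exactly $SP''\models(\rho\circ\tau)(\xi)$.

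Obligation (a) is where the extra hypothesis on $SP^\tau$ does its work, and I expect it to be the main obstacle; the difficulty is that $\rho$ is only assumed to interpret $SP'$ in general, whereas composing interpretations forces us to route through the \emph{canonical} $\tau$-interpretation of $SP$. The key is Theorem~\ref{axi0}: since $\mean{SP^\tau}=\Mod_\tau(SP)$ is the largest $\tau$-interpretation of $SP$, it is in particular a $\tau$-interpretation, so $SP\models\xi\iff SP^\tau\models\tau(\xi)$ for every $\xi\in\Ceq(\Sigma)$. By hypothesis $\rho$ interprets $SP^\tau$, witnessed by some class $K''$ over $\Sigma''$, giving $SP^\tau\models\eta\iff K''\models\rho(\eta)$ for every $\eta\in\Ceq(\Sigma')$. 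I then assemble the equivalences: $SP^\tau\models\tau(\xi)$ holds iff $SP^\tau\models\eta$ for all $\eta\in\tau(\xi)$, iff $K''\models\rho(\eta)$ for all such $\eta$, iff $K''\models(\rho\circ\tau)(\xi)$. Combining with the first equivalence yields $SP\models\xi\iff K''\models(\rho\circ\tau)(\xi)$, so $K''$ witnesses that $\rho\circ\tau$ interprets $SP$.

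With (a) and (b) both established, the definition of refinement by interpretation is met and $SP\rightharpoondown_{\rho\circ\tau}SP''$ follows. The only point demanding care is the interplay of multifunction composition with the $\models$ relation on \emph{sets} of conditional equations --- specifically that passing $\rho$ through the set $\tau(\xi)$ commutes with the ``for all members'' reading of satisfaction --- which is why I treat $(\rho\circ\tau)(\xi)$ as a union throughout and verify the equivalences memberwise rather than on whole images at once.
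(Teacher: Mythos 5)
Your proposal is correct and follows essentially the same route as the paper's proof: the interpretation obligation is discharged by invoking Theorem~\ref{axi0} to pass from $SP$ to $SP^\tau$ and then using the hypothesis that $\rho$ interprets $SP^\tau$, and the preservation obligation is discharged by chaining the two given refinements. Your memberwise treatment of the composite multifunction is a detail the paper leaves implicit, but it does not change the argument.
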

\begin{proof}
Since $\tau$ interprets $SP$, $\Mod_{\tau}(SP)$ also interprets $SP$.
Let $\xi\in\Ceq(\Sigma)$. Then
$SP\models \xi\; \Leftrightarrow\; \Mod_{\tau}(SP) \models \tau(\xi)
\; \Leftrightarrow\; SP^{\tau} \models \tau(\xi)$. But, since  $\rho$ interprets $SP^{\tau}$, this is equivalent to
$K \models \rho(\tau(\xi))$, for some class $K$ of $\Sigma''$-algebras.
Thus, $\rho\circ \tau$ interprets $SP$.
On the other hand, suppose $SP\models \xi$. Since $SP\rightharpoondown_\tau SP'$,
$SP'\models \tau(\xi)$. And, from $SP'\rightharpoondown_\rho SP''$ we have $SP''\models
\rho(\tau(\xi))$. Therefore, putting everything together, $SP\rightharpoondown_{\rho\circ\tau} SP''$.

\end{proof}

\subsection{Going generic}
This section introduced  refinement by interpretation in the specific  setting of algebraic specification over the institution of Horn clause logic.
Our discussion built on the fact that each specification which is flat, or equivalent to a flat specification, induces a
2-dimension deductive system corresponding to its class of models. This made possible the use of interpretations
to reason about the refinement of specifications.

A similar discussion could have been made directly over
2-dimension deductive systems. Revisiting specification $\mathrm{T}$ in example \ref{firstex} in such setting will lead
to the deductive system depicted
in Figure \ref{fig:unicads}.  On its turn, specification $\mathrm{S}$ in the same example corresponds to the
deductive system  $\mathrm{EQ}_{\Sigma}$ for the relevant signature $\Sigma$ (see Figure \ref{fig:3} in Subsection \ref{sc:eqqq}).
Note that we are now considering an arbitrary  binary predicate and not necessarily  the
equality relation. It may stand,  for example, for bisimilarity or other form of observational equivalence.
This explains the need for the explicit introduction of axioms and inference rules which would
otherwise be assumed for equality.

\begin{figure}[h]
\caixapeq{
\smallskip
\begin{center}
\begin{minipage}{0.84\textwidth}
\begin{DSpecDefn}{DT}
			\item[\Sorts]   ~
			\\              $s$
			\item[\Ops]     ~
			\\              $ok:\longrightarrow s$
			\\              $f:s \longrightarrow s$
			\\				$test:s \* s \longrightarrow s$  \\
	\item[\textbf{axioms}]  ~\\ 
  $\mathstrut\pair{test(x,x),ok}$ \\
	\item[\textbf{inference rules}]  ~\\
	\\           \vspace{1mm}
	            $\displaystyle \frac{\pair{test(x,x'),ok} \;\; \;  \pair{test(x',x''),ok}}{\pair{test(x,x''), ok}}$    \\ \vspace{1mm}
	\\ $\displaystyle \frac{\pair{test(x,x'),ok }}{\pair{test(x',x),ok}}$   \hspace{7mm}
	 $\displaystyle \frac{\pair{test(x,x'), ok}}{\pair{test(f(x),f(x')), ok}}$    \\ \vspace{2mm}
\end{DSpecDefn}
 \end{minipage}
\end{center}
\smallskip
}
\caption{Deductive system $\mathrm{DT}$}\label{fig:unicads}
\end{figure}

\noindent
Clearly the translation
\[
 \pair{x,x'}\; \mapsto\; \pair{test(x,x'), ok} 
\]
interprets $\mathrm{EQ}_{\Sigma}$.
In the sequel, this generalisation will be carried on further, leading to a  theory of refinement by interpretation over arbitrary
$k$-dimensional deductive systems. We will resort  to a \textsc{Casl}-inspired notation to
describe (finitary) \dlogics\ (coming from algebraic specifications or not), as illustrated in Figure \ref{fig:unicads}.


\section{Logical  interpretation in a general setting}\label{sc:gen}
\begin{center}
\begin{minipage}{0.8\textwidth}
\emph{This section generalises translations and logical interpretations to arbitrary $k$-dimensional
deductive systems. In particular, the case of $k$-dimensional systems possessing an algebraic semantics is discussed
in some detail.  }
\end{minipage}
\end{center}


\subsection{Translations}
The first step to generalise \emph{refinement by interpretation} from the equational case
 is to define the notion of \emph{translation} in the general setting
of $k$-dimensional deductive systems.
The following definition generalises Definition \ref{df:eqtrans}, still assuming that, for each signature, the set
$\Va$ of variables is locally countably infinite.

\begin{defi}[Translation] \label{df:translation}
Let $\Sigma$ and $\Sigma'$ be two
signatures. A \emph{$(k,l)$-translation from $\Sigma$ to $\Sigma'$}
 is a globally finite sorted \emph{multifunction} $\mfdec{\tau}{\form{k}{\Sigma}}{\form{l}{\Sigma'}}$,
 i.e.,  for any $s \in S$ and $\bar\varphi \in \form{k}{\Sigma}_s$, $\tau_s(\bar\varphi)$ is a globally finite $S'$-sorted set of $l$-formulas over $\Sigma'$.
 \end{defi}

As before, $\tau$ is called a \emph{self translation} of $\Sigma$ whenever  $\Sigma$ and $\Sigma'$ coincide. In this
case, we say that $\tau$\emph{ commutes with substitutions} if for every substitution $\sigma$
and every formula $\bar\varphi\in \form{k}{\Sigma}$
$\tau(\sigma(\bar\varphi))=\sigma(\tau(\bar\varphi))$.
The translation can be specified by giving, for each sort $s$, the image $\tau_s(\bar x\hspace{-.09cm}:\hspace{-.09cm}s)$ for a $k$-variable $\bar x\hspace{-.09cm}:\hspace{-.09cm}s$ (see Example \ref{exe2}).
Given a $(k,l)$-translation $\tau$ and an inference rule
$\xi = \langle \Gamma,\bar\varphi\rangle$, we write $\tau(\xi)$ for the set of inference rules
$\{\langle \tau (\Gamma),\bar\psi\rangle:\bar\psi \in \tau(\bar\varphi)\}.$

A self $(k,l)$-translation $\tau$ is \emph{schematic} if there is a $S$-sorted set
$\Delta$ of $l$-formulas, where for each s, $\Delta_s(\bar x)$ is a set of $l$-formulas over $\Sigma'$  in the $k$-variable $\langle
x_0\hspace{-0.09cm}:\hspace{-0.09cm}s,\dots,x_{n-1}\hspace{-0.09cm}:\hspace{-0.09cm}s\rangle$  such that, for any $\bar \varphi \in \form{k}{\Sigma}_s$,
$\tau_s(\bar \varphi)=\Delta_s(\varphi_0,\dots, \varphi_{k-1})$.We say that a $(k,l)$-translation is \emph{functional} if the image of each $k$-formula is a singleton.
Schematic (2,2)-translations were first used in  \cite{madeira}.
Finally, the following result is the obvious generalisation of Lemma \ref{propcze} from Section \ref{sc:rbi1}.
\begin{lem}\label{propcze1}
Let $\Sigma$ be a standard signature and $\tau$ a self $(k,l)$-translation of $\Sigma$. Then
the following conditions are equivalent:
\begin{enumerate}[label=(\roman*)]
 \item \label{1} $\tau$ commutes with substitutions.
 \item \label{2} There exists a $k$-variable $\bar{x}=\langle x_0,\dots,x_{k-1}\rangle$
 and a $S$-sorted set $\Delta(\bar x)$ of $l$-formulas in $\bar{x}$
   such that, for any $\bar \varphi \in \form{k}{\Sigma}_s$, $\tau_s(\bar \varphi)=\Delta_s(\bar{\varphi})$.
\end{enumerate}

\end{lem}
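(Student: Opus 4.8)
The plan is to mirror the proof of Lemma~\ref{propcze} almost verbatim, replacing the pair $\langle t,t'\rangle$ of the equational case by an arbitrary $k$-tuple and the two variables $x,y$ by a single $k$-variable $\bar x=\langle x_0,\dots,x_{k-1}\rangle$ of pairwise distinct variables of one fixed sort. I would treat each sort $s$ separately, working with a $k$-variable $\bar x$ whose components all have sort $s$; the resulting $\Delta_s$ is then exactly the schematic form introduced just before the lemma.

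For the implication from (i) to (ii) I would fix, for each sort $s$, a $k$-variable $\bar x=\langle x_0\:s,\dots,x_{k-1}\:s\rangle$ of distinct variables and \emph{define} $\Delta_s(\bar x):=\tau_s(\bar x)$. The crux is to show that every variable occurring free in $\tau_s(\bar x)$ lies in $\{x_0,\dots,x_{k-1}\}$, so that the substitution notation $\Delta_s(\bar\varphi)$ is meaningful. To this end I would assume, towards a contradiction, that some variable $r\notin\{x_0,\dots,x_{k-1}\}$ occurs free in $\tau_s(\bar x)$, and let $e$ be the substitution renaming $r$ to a fresh variable of the same sort while fixing all other variables. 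Since $e$ fixes $\bar x$, commutativity with substitutions gives $e(\tau_s(\bar x))=\tau_s(e(\bar x))=\tau_s(\bar x)$; but $e(\tau_s(\bar x))$ contains a formula in which the fresh variable occurs, whereas no formula of $\tau_s(\bar x)$ mentions it, so the two sets differ --- a contradiction. Hence $\FV(\tau_s(\bar x))\subseteq\{x_0,\dots,x_{k-1}\}$, and for an arbitrary $\bar\varphi=\langle\varphi_0,\dots,\varphi_{k-1}\rangle\in\form{k}{\Sigma}_s$, letting $e$ be the substitution with $e(x_j)=\varphi_j$ for $j<k$, I would conclude
\[
\tau_s(\bar\varphi)=\tau_s(e(\bar x))=e(\tau_s(\bar x))=e(\Delta_s(\bar x))=\Delta_s(\bar\varphi),
\]
which is precisely (ii).

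The converse, from (ii) to (i), is a routine computation using the composition of substitutions. Given any substitution $\alpha$ and any $\bar\varphi\in\form{k}{\Sigma}_s$, the hypothesis that $\Delta_s(\bar x)$ has free variables only among $\bar x$ ensures that substituting $\bar\varphi$ and then applying $\alpha$ agrees with substituting $\alpha(\bar\varphi)$ at once, so that $\alpha(\tau_s(\bar\varphi))=\alpha(\Delta_s(\bar\varphi))=\Delta_s(\alpha(\bar\varphi))=\tau_s(\alpha(\bar\varphi))$; this is exactly the assertion that $\tau$ commutes with substitutions.

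I expect the main obstacle to be the confinement-of-variables step in the first implication, which is where standardness of $\Sigma$ is needed. In the many-sorted setting the stray variable $r$ may carry a sort different from $s$, so one must be able to rename it to a genuinely new variable of its own sort without accidentally identifying distinct formulas of $\tau_s(\bar x)$; standardness, together with the assumption that each $\Va_{s'}$ is countably infinite, supplies such fresh variables for every sort and guarantees that renaming a free occurrence really changes the formula, so the argument closes sort by sort exactly as in the one-dimensional Lemma~\ref{propcze}.
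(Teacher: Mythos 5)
Your proposal is correct and follows the same strategy as the paper, which proves the two-variable equational case (Lemma~\ref{propcze}) and presents Lemma~\ref{propcze1} as its ``obvious generalisation'': define $\Delta_s$ as $\tau_s(\bar x)$ for a $k$-variable of distinct variables, confine the free variables of $\tau_s(\bar x)$ to $\{x_0,\dots,x_{k-1}\}$ using commutation with a suitably chosen substitution, and then obtain $\tau_s(\bar\varphi)=\Delta_s(\bar\varphi)$ by instantiating $\bar x$ at $\bar\varphi$; the converse is the same substitution-composition computation in both cases. The one place you genuinely deviate is the confinement step: the paper's argument sends every stray variable $r_i$ to $x$ and concludes from $e\bigl(\tau(x\approx y)\bigr)=\tau(x\approx y)$ that no $r_i$ occurs, whereas you rename a single stray variable to a fresh one of its own sort and derive a contradiction. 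Your variant is the more robust one in the many-sorted setting, since the paper's collapsing substitution $e(r_i)=x$ is only sort-correct when $r_i$ and $x$ share a sort, while injective renaming works for stray variables of arbitrary sorts; both arguments rest on the same availability of countably infinitely many variables per sort, so the difference is cosmetic but your version transfers to Lemma~\ref{propcze1} without repair.
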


\subsection{Interpretations}
Similarly to the equational case not all translations lead to refinements.
Hence, we start by generalising the definition of \emph{interpretation}.

\begin{defi}[Interpretation]Let $\tau$ be a $(k,l)$-translation from $\Sigma$ to $\Sigma'$, and
 $\cl$  a $k$-\dlogic\ over $\Sigma$. We say that $\tau$ \emph{interprets $\cl$} if there is
a $l$-\dlogic\ $\cl'$ over $\Sigma'$ such that, for any $\Gamma\cup \{\bar\varphi\} \subseteq
\form{k}{\Sigma}$, $\Gamma \vdash_\cl \bar\varphi$ if and only if $\tau(\Gamma) \vdash_{\cl'}
\tau(\bar\varphi)$. In this case we say that \emph{$\tau$ interprets $\cl$ in $\cl'$} and
\emph{$\cl'$ is a $\tau$-interpretation of $\cl$}.
\end{defi}

To illustrate this more general notion of an interpretation, consider the following examples which
capture a change
of logic paradigm. Integrating such a move in the refinement process, by witnessing refinement steps with
 this sort of interpretations, was, from the outset, the motivation for this generalisation.

\begin{exa}[$\CPC$ vs. Boolean algebras]\label{exe2}
	 The deductive system encoding the equational logic of  Boolean algebras $\cl_{\BA}$ interprets
 classical propositional logic ($\CPC$), both over the one-sorted signature
$\Sigma=\{\rightarrow,\wedge,\vee,\neg,\top,\bot\}$, under the schematic, self (1,2)-translation
$\tau(p)=\{\langle p, \top\rangle\}$.
Moreover, the
deductive system $\cl_{\HA}$, induced by the class of Heyting algebras $\HA$, also provides an interpretation of
$\CPC$ under the translation $\nu(p)=\{\langle \neg \neg p, \top\rangle\}$.
This translation also interprets $\CPC$ into $\cl_{\BA}$ which shows that
an interpretation may not be unique \cite{blok}.

Reciprocally, as one would expect,  $\CPC$ also interprets $\cl_{\BA}$, under the
(2,1)-translation $\rho(\langle p, q\rangle)=\{ p\rightarrow q, q\rightarrow p\}$.
\begin{flushright}
\vspace{-18 pt}$\Diamond$
\end{flushright}
\end{exa}

\begin{exa}
[Semilattices into posets] A semilattice can be regarded either as an
algebra or as a partial order structure. Such a duality, often useful in specifications,
can be expressed, in a natural way,  by an interpretation, actually an
equivalence between two 2-\dlogics\ over the one-sorted
signature $\Sigma=\{\wedge\}$ (see \cite{pigozzi})
depicted in Figures \ref{fig:8} and \ref{fig:9}.

\begin{figure}
\caixapeq{
\smallskip
\begin{center}
\begin{minipage}{0.8\textwidth}
\begin{DSpecDefn}{SLV}~
\item[\Enrich] $\mathrm{EQ_\Sigma}$
\item[\Axioms]     ~  \vspace{1mm}
\\              $\langle p,p\wedge p \rangle$
\\              $\langle p\wedge q,q \wedge p \rangle$
\\              $\langle p\wedge (q \wedge r),(p \wedge q) \wedge r\rangle$
\end{DSpecDefn}
 \end{minipage}
\end{center}
\smallskip
}
\caption{Semilattices as algebras.}\label{fig:8}
\end{figure}

\begin{figure}
\caixapeq{
\smallskip
\begin{center}
\begin{minipage}{0.8\textwidth}
\begin{DSpecDefn}{SLP}~
\item[\Axioms]     ~ \vspace{1mm}
\\              $\langle p, p \rangle$
\\              $\langle p,p \wedge p \rangle$
\\              $\langle p \wedge q,p \rangle$
\\				$\langle p\wedge q,q \rangle$\vspace{1mm}
\item[\textbf{inference rules}]     ~  \vspace{2mm}
\\            $\displaystyle\frac{\langle x, y\rangle,\langle y, z\rangle}{\langle x, z \rangle}$
\\
\\              $\displaystyle\frac{\langle x_0, y_0\rangle, \langle x_{1},y_{1}\rangle}{(x_0 \wedge
x_{1},y_0\wedge y_{1}\rangle}$
\end{DSpecDefn}
 \end{minipage}
\end{center}
\smallskip
}
\caption{Semilattices as order structures.}\label{fig:9}
\end{figure}

\noindent
The schematic translation $\tau$, defined by the multifunction $\tau(\langle p,q\rangle)=\{\langle p,q\rangle, \langle q,p
\rangle\}$, witnesses the interpretation of  $\mathrm{SLV}$ by $\mathrm{SLP}$.
\begin{flushright}
\vspace{-18 pt}$\Diamond$
\end{flushright}
\end{exa}

Clearly, interpretations compose in the sense that if $\cl'$ is a
$\tau$-interpretation of $\cl$ and $\cl''$ is a $\rho$-interpretation of $\cl'$ then
$\rho\circ\tau$ interprets $\cl$ in $\cl''$. Other properties need further investigation.
The following subsection explores a class of
interpretations specifically relevant for software design.

\subsection{Towards an algebraic semantics}\label{sc:algsem}

There are $k$-\dlogics\ to which an algebraic specification can be associated, thus
providing  an alternative semantics (called \emph{algebraic semantics} in the context of
algebraic logic).
It is well known \cite{blok} that this association is not unique and may not exist.
Our investigation starts with the following definition which generalises Definition \ref{df:tau:model}:

\begin{defi}[$\tau$-model] Let $\tau$ be a $(k,l)$-translation from $\Sigma$
to $\Sigma'$ and $\cl$ a $k$-\dlogic\ over $\Sigma$. An $l$-structure $\ca$ is a
\emph{$\tau$-model of} $\cl$ if for any  $\Gamma\cup \{\bar\varphi\} \subseteq
\form{k}{\Sigma}$, $\Gamma \vdash_\cl \bar\varphi$  implies $ \tau(\Gamma) \models_\ca
\tau(\bar\varphi)$. The class of all $\tau$-models of $\cl$, denoted by $\Mod_\tau (\cl)$, is
called the \emph{$\tau$-model class of $\cl$}.
\end{defi}

As mentioned above, the semantic consequence associated to a class of $k$-structures
defined over $\form{k}{\Sigma}$, is always a $k$-\dlogic\, even if it fails to be
specifiable. Hence, $\models_{\Mod_\tau (\cl)}$ is a \dlogic\ which we will denote by $\cl^\tau$.
Furthermore,

\begin{thm}\label{axi00} Let $\tau$ be a $(k,l)$-translation from $\Sigma$
to $\Sigma'$ and $\cl$ a $k$-\dlogic\ over $\Sigma$. If $\tau$ interprets $\cl$, then the
$l$-\dlogic\ $\cl^\tau$ is a $\tau$-interpretation of $\cl$; moreover, this is the $\tau$-interpretation of $\cl$ with
the largest class of models.
\end{thm}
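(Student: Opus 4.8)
The plan is to follow the same two-part strategy as in the equational Theorem~\ref{axi0} --- first that $\cl^\tau$ is a $\tau$-interpretation of $\cl$, and then that it is the largest --- but now I must bridge between an interpreting \dlogic\ $\cl'$ and its \emph{model class}, since in this general setting interpretations are witnessed by deductive systems rather than directly by classes of structures. The bridge is the completeness theorem for $k$-\dlogics\ \cite{conditional_prof}, which gives $\vdash_{\cl'}\,=\,\models_{\Mod(\cl')}$ for every $\cl'$.

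First I would isolate the following key fact: if $\cl'$ is any $\tau$-interpretation of $\cl$, then $\Mod(\cl')\subseteq\Mod_\tau(\cl)$. To see this, take $\ca\in\Mod(\cl')$ and suppose $\Gamma\vdash_\cl\bar\varphi$. Since $\tau$ interprets $\cl$ in $\cl'$, we get $\tau(\Gamma)\vdash_{\cl'}\tau(\bar\varphi)$; and because $\ca$ is a model of $\cl'$, every consequence of $\cl'$ is a semantic consequence in $\ca$, so $\tau(\Gamma)\models_\ca\tau(\bar\varphi)$. This is precisely the defining condition for $\ca$ to be a $\tau$-model of $\cl$, establishing the inclusion.

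With that in hand I would verify that $\cl^\tau=\models_{\Mod_\tau(\cl)}$ is itself a $\tau$-interpretation, i.e.\ that $\Gamma\vdash_\cl\bar\varphi$ iff $\tau(\Gamma)\vdash_{\cl^\tau}\tau(\bar\varphi)$. The forward implication is immediate from the definition of $\Mod_\tau(\cl)$: if $\Gamma\vdash_\cl\bar\varphi$ then $\tau(\Gamma)\models_\ca\tau(\bar\varphi)$ for every $\ca\in\Mod_\tau(\cl)$, i.e.\ $\tau(\Gamma)\models_{\Mod_\tau(\cl)}\tau(\bar\varphi)$. For the converse I would use the hypothesis that $\tau$ interprets $\cl$ to fix a witnessing $\cl'$. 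By the key fact $\Mod(\cl')\subseteq\Mod_\tau(\cl)$, and since enlarging the class of structures only shrinks the induced consequence, $\models_{\Mod_\tau(\cl)}\,\subseteq\,\models_{\Mod(\cl')}\,=\,\vdash_{\cl'}$ by completeness. Hence $\tau(\Gamma)\vdash_{\cl^\tau}\tau(\bar\varphi)$ yields $\tau(\Gamma)\vdash_{\cl'}\tau(\bar\varphi)$, which in turn gives $\Gamma\vdash_\cl\bar\varphi$ because $\cl'$ interprets $\cl$.

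Finally, for the maximality claim I would show $\Mod(\cl^\tau)=\Mod_\tau(\cl)$ and conclude. The inclusion $\Mod_\tau(\cl)\subseteq\Mod(\cl^\tau)$ holds because $\cl^\tau$ is by definition the consequence determined by $\Mod_\tau(\cl)$, while the reverse follows by the same computation as the forward direction above (any model of $\cl^\tau$ validates all $\tau$-images of $\cl$-consequences, hence is a $\tau$-model of $\cl$). Combining with the key fact, every $\tau$-interpretation $\cl'$ satisfies $\Mod(\cl')\subseteq\Mod_\tau(\cl)=\Mod(\cl^\tau)$, so $\cl^\tau$ has the largest model class among the $\tau$-interpretations of $\cl$. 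The step deserving genuine care --- and the only place where the generalisation really bites beyond Theorem~\ref{axi0} --- is the converse in the previous paragraph: passing from the purely semantic $\cl^\tau$ back to derivability in $\cl$ forces the combined use of the monotonicity of $\models_{(-)}$ in the structure class and the completeness theorem to re-present $\models_{\Mod(\cl')}$ as $\vdash_{\cl'}$.
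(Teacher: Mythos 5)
Your proposal is correct and follows essentially the same route as the paper's proof: first establish that every $\tau$-interpretation $\cl'$ satisfies $\Mod(\cl')\subseteq\Mod_\tau(\cl)$, then use a fixed witnessing $\cl'$ together with completeness and the antitonicity of $\models_{(-)}$ in the model class to get the converse direction for $\cl^\tau$. The only difference is that you make explicit two steps the paper leaves implicit --- the identification $\Mod(\cl^\tau)=\Mod_\tau(\cl)$ needed for the maximality claim, and the monotonicity/completeness argument behind the passage from $\vdash_{\cl^\tau}$ to $\vdash_{\cl'}$ --- which is a welcome clarification rather than a departure.
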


\begin{proof}Suppose that $\tau$ interprets $\cl$. Let $\cl'$ be a specification which is
a $\tau$-interpretation of $\cl$. Then for any $\Gamma\cup \{\bar\varphi\} \subseteq
\form{k}{\Sigma}$, $\Gamma \vdash_\cl \bar\varphi$ iff $\tau(\Gamma) \vdash_{\cl'}
\tau(\bar\varphi)$ iff $\tau(\Gamma) \models_{\Mod(\cl')} \tau(\bar\varphi)$. Hence all
models of $\cl'$ are $\tau$-models of $\cl$. Thus, $\Mod(\cl') \subseteq \Mod_\tau (\cl)$.
\\So, it is enough to prove that $\cl^\tau$ is a $\tau$-interpretation of $\cl$. Let
$\Gamma\cup \{\bar\varphi\} \subseteq \form{k}{\Sigma}$. It is clear that $\Gamma \vdash_\cl
\bar\varphi$ implies $\tau(\Gamma) \vdash_{\cl^\tau} \tau(\bar\varphi)$. Suppose now that
$\tau(\Gamma) \vdash_{\cl^\tau} \tau(\bar\varphi)$. Let  $\cl'$ be a specification that is a
$\tau$-interpretation of $\cl$ (it exists since $\tau$ interprets $\cl$). Since, $\Mod(\cl')
\subseteq \Mod_\tau (\cl) $, $\tau(\Gamma) \vdash_{\cl'} \tau(\bar\varphi)$. Thus $\Gamma
\vdash_\cl \bar\varphi$ because $\cl'$ is a $\tau$-interpretation of $\cl$.
\end{proof}

Our focus on algebraic specifications entails the need for paying special attention
to $(k,2)$-translations, which, mapping  $k$-formulas to $2$-formulas, provide
a  way to relate an arbitrary  $k$-\dlogic\  to a suitable class of algebras.
For the remaining of this section we will consider a $2$-formula $\langle t, t'\rangle$ as an equation
$t \approx t'$.

Let $\tau$ be a $(k,2)$-translation from $\Sigma$ to $\Sigma'$ and $\cl$ a $k$-\dlogic. A class
$K$ of $\Sigma$-algebras is said to be a \emph{$\tau$-algebraic semantics of $\cl$} if $\tau$
interprets $\cl$ in $\models_K$. Thus, we define the algebraic
model class $K_\cl^\tau$ over $\Sigma'$ as the class of
algebraic reducts of the $\tau$-models  $\cl$ taking the identity as a filter.
Formally,
\begin{equation*}
K_\cl^\tau\;  =\;  \{ A| \langle A, \triangle_A\rangle \text{ is a } \tau\text{-model}  \},
\end{equation*}

\noindent which paves the way to the following corollary:
\begin{cor}\label{cor:largest}
Given a $(k,2)$-translation $\tau$ from $\Sigma$ to $\Sigma'$,
 and a $k$-\dlogic\ $\cl$ over $\Sigma$, if there is a $\tau$-algebraic semantics of $\cl$,
  then the class $K_\cl^\tau$ is the largest $\tau$-algebraic
semantics of $\cl$, \emph{i.e.}, with the largest class of models.
Moreover, $K_\cl^\tau$ is finitely axiomatised whenever $\cl$ is finitely axiomatisable.
\end{cor}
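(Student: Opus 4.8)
The plan is to mirror the proof of Theorem~\ref{axi00}, but to carry it out inside the algebraic models, i.e.\ the $2$-structures whose filter is the identity relation $\triangle_A$, and then to supply the finite axiomatisation separately. The one fact I rely on throughout is that for any $\Sigma'$-algebra $A$ the semantic consequence $\models_{\langle A, \triangle_A\rangle}$ of the $2$-structure $\langle A, \triangle_A\rangle$ coincides with the ordinary equational consequence $\models_A$, so that membership of $A$ in $K_\cl^\tau$ is literally the statement that $\langle A, \triangle_A\rangle$ is a $\tau$-model of $\cl$.

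First I would establish that every $\tau$-algebraic semantics is contained in $K_\cl^\tau$. So let $K$ be any class of $\Sigma'$-algebras with $\tau$ interpreting $\cl$ in $\models_K$, and fix $A \in K$. The hypothesis $\Gamma \vdash_\cl \bar\varphi \Rightarrow \tau(\Gamma) \models_K \tau(\bar\varphi)$ specialises to $\Gamma \vdash_\cl \bar\varphi \Rightarrow \tau(\Gamma) \models_{\langle A, \triangle_A\rangle} \tau(\bar\varphi)$, which says exactly that $\langle A, \triangle_A\rangle$ is a $\tau$-model of $\cl$; hence $A \in K_\cl^\tau$ and $K \subseteq K_\cl^\tau$. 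Next I would show that $K_\cl^\tau$ is itself a $\tau$-algebraic semantics, which combined with this inclusion yields maximality. The implication $\Gamma \vdash_\cl \bar\varphi \Rightarrow \tau(\Gamma) \models_{K_\cl^\tau} \tau(\bar\varphi)$ is immediate, since every member of $K_\cl^\tau$ furnishes a $\tau$-model. For the converse I use the standing assumption that some $\tau$-algebraic semantics $K$ exists: by the previous step $K \subseteq K_\cl^\tau$, so $\models_{K_\cl^\tau}\,\subseteq\,\models_K$ as consequence relations; thus $\tau(\Gamma) \models_{K_\cl^\tau} \tau(\bar\varphi)$ entails $\tau(\Gamma) \models_K \tau(\bar\varphi)$, which forces $\Gamma \vdash_\cl \bar\varphi$ because $K$ interprets $\cl$. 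This gives both that $\tau$ interprets $\cl$ in $\models_{K_\cl^\tau}$ and that $K_\cl^\tau$ is the largest such class.

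For the finite axiomatisation, assume $\cl$ is axiomatised by finite sets $\AX$ of $k$-formulas and $\IR$ of inference rules, and put $\Psi = \tau(\AX) \cup \tau(\IR)$, viewed as a set of conditional equations. Since $\tau$ is globally finite and $\AX,\IR$ are finite, $\Psi$ is finite, and it remains to prove $K_\cl^\tau = \{A \mid A \models \Psi\}$. The inclusion $\subseteq$ follows from the characterisation above: axioms are theorems and rules are derivable in $\cl$, so any $A \in K_\cl^\tau$ validates their $\tau$-images. The reverse inclusion is the crux, and I would argue it exactly as in Theorem~\ref{axiomatizacao}: starting from $A \models \Psi$, induction on the length of a derivation of $\bar\varphi$ from $\Gamma$ shows $\tau(\Gamma) \models_{\langle A, \triangle_A\rangle} \tau(\bar\varphi)$, treating in turn the premise case, the case of a substitution instance of an axiom, and the case of a rule application.

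The main obstacle lies precisely in these last two cases, where one must push $\tau$ through the substitution witnessing the axiom instance or rule application; the induction therefore runs cleanly under the natural hypothesis that $\tau$ commutes with substitutions (equivalently, is schematic, by Lemma~\ref{propcze1}), which is the situation relevant to all our examples. Granting this, assembling the two inclusions identifies $K_\cl^\tau$ with the model class of the finite set $\Psi$, so $K_\cl^\tau$ is finitely axiomatised, completing the proof.
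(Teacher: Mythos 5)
Your argument for the maximality claim is correct and is essentially the proof the paper intends: the paper states this corollary without proof, deriving it from Theorem~\ref{axi00}, and your first two paragraphs are exactly that argument replayed inside the $2$-structures with identity filter, using the fact that $\models_{\langle A,\triangle_A\rangle}$ coincides with $\models_A$. The one point worth flagging is the finite-axiomatisation clause. You are right that the induction on derivations breaks at axiom instances and rule applications unless $\tau$ commutes with substitutions (equivalently is schematic), and you are also right that this mirrors Theorem~\ref{axiomatizacao}, whose statement \emph{does} carry that hypothesis while the corollary's statement does not; note also that the paper only defines ``commutes with substitutions'' for self translations, so for a genuine $(k,2)$-translation between distinct signatures one has to use the schematic formulation directly. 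So your proof establishes the second clause under an additional hypothesis that the corollary omits but that the surrounding results (Lemmas~\ref{extalsem1} and~\ref{extalsem2}, Theorem~\ref{axiomatizacao}) all assume; this is a fair reading of the authors' intent rather than a defect in your argument, and making the dependence explicit is an improvement over the paper's silent treatment.
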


In practice, however, it may happen that $K_\cl^\tau$ is too wide for the envisaged purposes,
namely to discuss implementations.
The following theorem gives a sufficient and necessary
condition for a subclass of $K_\cl^\tau$ to be a $\tau$-algebraic semantics of $\cl$.
Similar results are well known for sentential logics \cite{blok}. In this paper, however, we
reformulate them for $k$-dimensional and many sorted logics, since they are a vehicle to
sufficient and necessary conditions for a deductive system to have an algebraic semantics.
Consider, therefore, the mapping $\taupe:\thy (\cl)\rightarrow\thy (K)$  defined by
$\taupe(T)=\Cn_{K}(\tau(T))$, for all $T\in\thy(\cl)$. Then,

\begin{lem}\label{extalsem1}
Let $\cl$ be a \dlogic, $\tau$  a self $(k,2)$-translation of $\Sigma$ which commutes with
substitutions and $K\subseteq K_\cl^\tau$. The following conditions are equivalent:
    \begin{enumerate}[label=(\roman*)]
        \item $K$ is a $\tau$-algebraic semantics of $\cl$.
        \item $\taupe$ is injective.
    \end{enumerate}
\end{lem}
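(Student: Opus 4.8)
The plan is to prove the two implications separately, while exploiting throughout one inclusion that comes for free from the hypothesis $K \subseteq K_\cl^\tau$. Indeed, every $A \in K$ satisfies $\langle A, \triangle_A\rangle \in \Mod_\tau(\cl)$, and since the filter $\triangle_A$ makes $\models_{\langle A,\triangle_A\rangle}$ coincide with the equational consequence $\models_A$, we obtain that $\Gamma \vdash_\cl \bar\varphi$ implies $\tau(\Gamma) \models_A \tau(\bar\varphi)$ for each $A \in K$, hence $\tau(\Gamma) \models_K \tau(\bar\varphi)$. In other words, the \emph{soundness} half of the interpretation condition always holds, and only the converse (\emph{completeness}) is ever at stake. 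Applying $\Cn_K$ and using idempotency, this yields the identity $\taupe(\Cn_\cl(\Gamma)) = \Cn_K(\tau(\Gamma))$ for every set $\Gamma$ of $k$-formulas, which I would record first as the computational backbone of both directions: from soundness $\tau(\Cn_\cl(\Gamma)) \subseteq \Cn_K(\tau(\Gamma))$, while $\Gamma \subseteq \Cn_\cl(\Gamma)$ gives the reverse inclusion after closing.

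For \ref{1}\,$\Rightarrow$\,\ref{2}, I would assume $K$ is a $\tau$-algebraic semantics, so $\Gamma \vdash_\cl \bar\varphi$ iff $\tau(\Gamma) \models_K \tau(\bar\varphi)$, and show $\taupe$ is injective. Suppose $\taupe(T_1) = \taupe(T_2)$ for $T_1, T_2 \in \thy(\cl)$, and take $\bar\varphi \in T_1$. Since $T_1$ is a theory, $T_1 \vdash_\cl \bar\varphi$, so by the interpretation $\tau(\bar\varphi) \subseteq \Cn_K(\tau(T_1)) = \taupe(T_1) = \taupe(T_2) = \Cn_K(\tau(T_2))$, i.e. $\tau(T_2) \models_K \tau(\bar\varphi)$. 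Applying the interpretation in the other direction gives $T_2 \vdash_\cl \bar\varphi$, whence $\bar\varphi \in T_2$. By symmetry $T_1 = T_2$, so $\taupe$ is injective.

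For \ref{2}\,$\Rightarrow$\,\ref{1}, the only missing piece is completeness: $\tau(\Gamma) \models_K \tau(\bar\varphi)$ implies $\Gamma \vdash_\cl \bar\varphi$. The key idea is to recast derivability as equality of two $\cl$-theories. Set $T = \Cn_\cl(\Gamma)$ and $T' = \Cn_\cl(\Gamma \cup \{\bar\varphi\})$, both in $\thy(\cl)$. By the backbone identity, $\taupe(T) = \Cn_K(\tau(\Gamma))$ and $\taupe(T') = \Cn_K(\tau(\Gamma) \cup \tau(\bar\varphi))$. The hypothesis $\tau(\Gamma) \models_K \tau(\bar\varphi)$ says precisely that $\tau(\bar\varphi) \subseteq \Cn_K(\tau(\Gamma))$, so the two closures agree and $\taupe(T) = \taupe(T')$. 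Injectivity of $\taupe$ then forces $T = T'$, and since $\bar\varphi \in T' = T = \Cn_\cl(\Gamma)$ we conclude $\Gamma \vdash_\cl \bar\varphi$.

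I expect the main subtlety to lie not in the algebra of closure operators but in pinning down the free soundness inclusion and the resulting identity $\taupe(\Cn_\cl(\Gamma)) = \Cn_K(\tau(\Gamma))$ carefully: this is where $K \subseteq K_\cl^\tau$ and the reading of $\langle A, \triangle_A\rangle$ as the equational model $\models_A$ enter, and it is exactly what reduces the statement to the observation that an interpretation is, level by level, the condition making $\taupe$ invertible on generated theories. The commutativity of $\tau$ with substitutions (Lemma \ref{propcze1}) is used only to guarantee the schematic form of $\tau$ and the coherence of the definitions; it does not enter the core consequence-operator argument.
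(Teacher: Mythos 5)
Your proof is correct and follows essentially the same route as the paper's: the $(i)\Rightarrow(ii)$ direction is the same theory-comparison argument, and the $(ii)\Rightarrow(i)$ direction rests on the same identity $\taupe(\Cn_\cl(\Gamma))=\Cn_K(\tau(\Gamma))$, which the paper derives inline using exactly your observation that $K\subseteq K_\cl^\tau$ gives soundness for free. The only difference is presentational --- you isolate that identity up front as a backbone step, which is a slightly cleaner organisation of the same argument.
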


\begin{proof}
Let $T_1,T_2\in\thy (\cl)$ and $\bar \alpha\in T_1$. Suppose $\taupe(T_1)=\taupe(T_2)$. We
have that $\tau(\bar \alpha)\subseteq\tau(T_1)\subseteq\taupe(T_1)=\taupe(T_2)$, \emph{i.e.},
$\tau(T_2)\models_K\tau(\bar \alpha)$. Since $K$ is an $\tau$-algebraic semantics of $\cl$, $T_2\vdash_\cl\bar \alpha$, \emph{i.e.}, $\bar\alpha\in T_2$. Thus $T_1\subseteq T_2$.
Similarly, we can prove that $T_2\subseteq T_1$. We conclude that $\taupe$ is injective.

Conversely, let $\Gamma\cup\{\bar\alpha\}\subseteq \form{k}{\Sigma}$. Since $K$ is a class
of algebraic reducts of  $\tau$-models of $\cl$, we have that $\Gamma\vdash_\cl\bar\alpha$
implies $\tau(\Gamma)\models_K\tau(\bar\alpha)$. Now, suppose
$\tau(\Gamma)\models_K\tau(\bar\alpha)$. Thus,
$\Cn_K(\tau(\Gamma))=\Cn_K(\tau(\Gamma\cup\{\bar\alpha\}))$. Since
$\Gamma\subseteq\Cn_\cl(\Gamma)$, we have that $\tau(\Gamma)\subseteq\tau(\Cn_\cl(\Gamma))$.
Thus $\Cn_K(\tau(\Gamma))\subseteq \Cn_K(\tau(\Cn_{\cl}(\Gamma)))=\taupe(\Cn_\cl(\Gamma))$. To
prove the reverse inclusion, let $t\approx t'\in\taupe(\Cn_\cl(\Gamma))$.
 Thus
$\{\tau(\xi):\Gamma\vdash_\cl\xi\}\models_K\et$. Again, since $K$ is a class  of algebraic
reducts of  $\tau$-models of $\cl$, for all $\et\in \form{2}{\Sigma}$, we have that
$\Gamma\vdash_\cl\xi$ implies $\tau(\Gamma)\models_K\tau(\xi)$. Hence
$\tau(\Gamma)\models_K\et$, i.e., $\et\in \Cn_K(\tau(\Gamma))$. Therefore, for all
$\Gamma\subseteq \form{k}{\Sigma}$, $\taupe(\Cn_\cl(\Gamma))=\Cn_K(\tau(\Gamma))$. Thus, $\taupe(\Cn_\cl(\Gamma))=\taupe(\Cn_{\cl}(\Gamma\cup\{\bar\alpha\}))$.
Since $\taupe$ is injective, $\Cn_\cl(\Gamma)=\Cn_\cl(\Gamma\cup\{\bar\alpha\})$, i.e.,
$\Gamma\vdash_\cl\bar\alpha$.
\end{proof}

Lemma \ref{extalsem1} and the fact that class $K_\cl^\tau$ is a $\tau$-algebraic semantics,
entail another important result: if $\cl$ has a $\tau$-algebraic semantics, then any extension of $\cl$ also
has a $\tau$-algebraic semantics, for $\tau$  a self $(k,2)$-translation of $\Sigma$ commuting with
substitutions. This is recorded in Theorem \ref{extalsem} below, whose proof requires the
following lemma.

\begin{lem}\label{extalsem2} Let $\cl$ be a specifiable $k$-\dlogic\ and $\tau$ a self
$(k,2)$-translation of $\Sigma$ which commutes with substitutions. Suppose $K=K_\cl^\tau$ is an
$\tau$-algebraic semantics of $\cl$. If $\cl'$ is an extension of $\cl$, and
$K^\prime=K_\cl^\tau$ then $\tau_{\cl^\prime,K^\prime}$ equals $\taupe$ restricted
to $\thy (\cl^\prime)$.
\end{lem}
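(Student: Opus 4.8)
The plan is to unwind both maps to equational closures and then show these closures coincide on $\tau$-images of $\cl'$-theories. By definition $\taupe(T)=\Cn_K(\tau(T))$ and $\tau_{\cl',K'}(T)=\Cn_{K'}(\tau(T))$, where $K=K_\cl^\tau$ and $K'=K_{\cl'}^\tau$. Since $\vdash_\cl\,\subseteq\,\vdash_{\cl'}$ forces $\thy(\cl')\subseteq\thy(\cl)$, the restriction $\taupe|_{\thy(\cl')}$ is well defined, and the claim reduces to proving $\Cn_{K'}(\tau(T))=\Cn_K(\tau(T))$ for every $T\in\thy(\cl')$. One inclusion is immediate: as $\cl'$ extends $\cl$, every $\tau$-model of $\cl'$ is a $\tau$-model of $\cl$, hence $K'\subseteq K$; a smaller class yields a larger consequence, so $\Cn_K(\tau(T))\subseteq\Cn_{K'}(\tau(T))$.

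The real work is the reverse inclusion. Writing $\theta:=\Cn_K(\tau(T))$, I would prove that $\theta$ is \emph{already} a $K'$-theory; since $\tau(T)\subseteq\theta$, this forces $\Cn_{K'}(\tau(T))\subseteq\theta$ and closes the argument. As $\theta$ is a congruence on the formula algebra $\Fm(\Sigma)$ (equational consequence is closed under reflexivity, symmetry, transitivity and replacement), it suffices to show $\Fm(\Sigma)/\theta\in K'$, i.e.\ that the Lindenbaum--Tarski $2$-structure $\langle\Fm(\Sigma),\theta\rangle$ is a $\tau$-model of $\cl'$. Because $\tau$ commutes with substitutions and $\vdash_{\cl'}$ is substitution invariant, validity in $\langle\Fm(\Sigma),\theta\rangle$ of all sequents $\tau(\Gamma)/\tau(\varphi)$ with $\Gamma\vdash_{\cl'}\varphi$ collapses to the unquantified condition
\[
\Gamma\vdash_{\cl'}\varphi\ \text{ and }\ \tau(\Gamma)\subseteq\theta\ \Longrightarrow\ \tau(\varphi)\subseteq\theta .
\]

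To discharge this, I would use that $K$ is a $\tau$-algebraic semantics of $\cl$, which converts membership in $\theta=\Cn_K(\tau(T))$ back into derivability: for any $\psi$, one has $\tau(\psi)\subseteq\theta$ iff $\tau(T)\models_K\tau(\psi)$ iff $T\vdash_\cl\psi$. Hence $\tau(\Gamma)\subseteq\theta$ says exactly $T\vdash_\cl\Gamma$, which upgrades to $T\vdash_{\cl'}\Gamma$ since $\cl'$ extends $\cl$; combined with $\Gamma\vdash_{\cl'}\varphi$ and the cut rule this gives $T\vdash_{\cl'}\varphi$. Here the hypothesis $T\in\thy(\cl')$ is indispensable: it yields $\varphi\in T$, whence $T\vdash_\cl\varphi$ by reflexivity, i.e.\ $\tau(\varphi)\subseteq\theta$, as required.

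I expect the crux to be the bridge from the logical statement to the algebraic one: passing from ``$\langle\Fm(\Sigma),\theta\rangle$ is a $\tau$-model of $\cl'$'' to ``$\Fm(\Sigma)/\theta\in K'$''. This uses that $\theta$ is a congruence (so the quotient exists) and that quotienting a $2$-structure by its filter is a strict surjection carrying $\theta$ to the diagonal, so validity of the defining sequents is preserved; membership in $K'=K_{\cl'}^\tau$ then follows by definition. Once $\Fm(\Sigma)/\theta\in K'$ is secured, $K'$-closedness of $\theta$ is a one-line projection argument: any $e\in\Cn_{K'}(\tau(T))$ is satisfied by the canonical map $\Fm(\Sigma)\to\Fm(\Sigma)/\theta$, whose kernel is $\theta$, hence $e\in\theta$. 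Throughout, the standing assumption that $\tau$ commutes with substitutions is exactly what legitimises the reduction to the unquantified sequent condition above, and it is the one point where the many-sorted, $k$-dimensional generality must be handled with care.
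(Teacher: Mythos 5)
Your proof is correct (and you rightly read the statement's ``$K'=K_\cl^\tau$'' as the evident typo for $K'=K_{\cl'}^\tau$), but for the substantive direction it takes a genuinely different route from the paper. Both arguments dispose of $\taupe(T)\subseteq\tau_{\cl',K'}(T)$ the same way, via $K'\subseteq K$, and both reduce the converse to showing that $\theta=\Cn_K(\tau(T))$ already absorbs $K'$-consequence. The paper does this syntactically: it takes a Hilbert-style presentation of $\models_{K'}$ obtained by translating the theorems and derivable rules of $\cl'$, and verifies by hand that $\theta$ contains all substitution instances of the translated axioms and is closed under the translated rules (using commutation of $\tau$ with substitutions at each step), then invokes minimality of $\Cn_{K'}(\tau(T))$ among $K'$-theories containing $\tau(T)$. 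You instead argue model-theoretically: $\theta$ is a congruence, the Lindenbaum--Tarski quotient $\Fm(\Sigma)/\theta$ lies in $K'$ because every homomorphism into it factors through a substitution (which, together with substitution-invariance of $\vdash_{\cl'}$ and the commutation hypothesis, collapses the check to the single unquantified implication you state, discharged exactly as in the paper by converting $\tau(\cdot)\subseteq\theta$ back to $T\vdash_\cl\cdot$ and using that $T$ is a $\cl'$-theory), and then the canonical projection, whose kernel is $\theta$ and which satisfies $\tau(T)$, immediately yields $\Cn_{K'}(\tau(T))\subseteq\theta$. What each buys: the paper's argument stays entirely inside the proof-theoretic machinery but leans on $\models_{K'}$ being presentable by axioms and (globally finite) inference rules, hence implicitly on finitariness; your quotient argument bypasses that presentation entirely and works for arbitrary extensions $\cl'$, at the price of needing the freeness of the formula algebra (factorisation of homomorphisms through substitutions) and a little care with non-empty carriers in the many-sorted setting.
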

\begin{proof}
Let $T\in\thy(\cl^\prime)$. Since $\cl^\prime$ is an extension of $\cl$, $K^\prime\subseteq K$,
and $\models_{K^\prime}$ is an extension of $\models_K$. Hence $\Cn_K(\tau[T])\subseteq
\Cn_{K^\prime}(\tau[T])$, \emph{i.e.}, $\taupe[T]\subseteq\tau_{\cl^\prime,K^\prime}[T]$. For the
reverse inclusion note that $K^\prime$ can be axiomatised by a set of
axioms and a set of inference rules. It is not difficult to see that $\Cn_K(\tau[T])$ contains
all substitution instances of the axioms of $\models_{K^\prime}$ and is closed under the
inference rules of $\models_{K^\prime}$.
Let $\alpha\in\Thm(\cl^\prime)$
and $e$ be a substitution. Since $\theo(\cl^\prime)$ is closed under substitutions,
$\vdash_{\cl^\prime}e(\alpha)$. Thus for all $T\in\Th(\cl^\prime)$, we have that $e(\alpha)\in
T$. As $\tau$ commutes with arbitrary substitutions,
$e[\tau(\alpha)]=\tau[e(\alpha)]\subseteq\tau[T]\subseteq \Cn_K(\tau[T])$. Thus
$\Cn_K(\tau[T])$ contains all substitution instances of axioms of $\models_{K^\prime}$. Now, let
$\{\alpha_i:i<n\}\vdash_{\cl^\prime}\beta$ be an inference rule of $\cl^\prime$ and $e$ a
substitution such that $\{e[\tau(\alpha_i)]:i<n\}\subseteq \Cn_K(\tau[T])$, i.e.,
$\tau[T]\models_K e[\tau(\alpha_i)]$ for all $i<n$. Since $\tau$ commutes with arbitrary
substitutions, $e[\tau(\alpha_i)]=\tau[e(\alpha_i)]$ for all $i<n$, \emph{i.e.},
$\tau[T]\models_K\tau[e(\alpha_i)]$, for all $i<n$. As $K$ is an algebraic semantics of $\cl$,
it follows that $T\vdash_S e(\alpha_i)$ for all $i<n$, \emph{i.e.}, $\{e(\alpha_i):i<n\}\subseteq T$.
By structurality of $\cl^\prime$, $\{e(\alpha_i):i<n\}\vdash_{\cl^\prime}e(\beta)$. Since $T\in
\Th(\cl^\prime)$, we have that $e(\beta)\in T$. Thus
$e[\tau(\beta)]=\tau[e(\beta)]\subseteq\tau[T]\subseteq \Cn_K(\tau[T])$. Therefore
$\Cn_K(\tau[T])$ is closed under the inference rules of $K^\prime$. By the characterisation of
a theory in a deductive system, we have proved that $\Cn_\Kal(\tau[T])\in\thy(\Kal^\prime)$.
Since $\tau[T]\subseteq\Cn_K(\tau[T])$ and $\Cn_{K^\prime}(\tau[T])$ is the least
$\cl^\prime$-theory that contains $\tau[T]$, we have that $\Cn_{K^\prime}(\tau[T])\subseteq
\Cn_K(\tau[T])$, i.e., $\tau_{\cl^\prime,K^\prime}[T]\subseteq\taupe[T]$.
\end{proof}

\noindent
The following main result can now be proved:
\begin{thm}\label{extalsem}
Let $\cl$ be a specifiable $k$-\dlogic\ and $\tau$  a self $(k,2)$-translation of $\Sigma$ which commutes with
substitutions. If $\cl$ has an $\tau$-algebraic semantics, then any extension of $\cl$
has a $\tau$-algebraic semantics as well.
\end{thm}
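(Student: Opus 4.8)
The plan is to reduce the statement to the two preceding lemmas, with the injectivity of the map $\taupe$ serving as the pivot that one transports from $\cl$ to its extension. First I would fix $K=K_\cl^\tau$. Since $\cl$ is assumed to possess a $\tau$-algebraic semantics, Corollary~\ref{cor:largest} tells me that $K_\cl^\tau$ is itself the largest such semantics; in particular $K$ is a $\tau$-algebraic semantics of $\cl$. Feeding this into Lemma~\ref{extalsem1} in the direction \tri{1}$\Rightarrow$\tri{2} gives that $\taupe=\tau_{\cl,K}$ is injective.

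Next, I would take an arbitrary extension $\cl'$ of $\cl$ and set $K'=K_{\cl'}^\tau$. As $\cl$ is specifiable and $K=K_\cl^\tau$ is a $\tau$-algebraic semantics of $\cl$, Lemma~\ref{extalsem2} applies and identifies $\tau_{\cl',K'}$ with the restriction of $\taupe$ to $\thy(\cl')$. Since the restriction of an injective map remains injective, $\tau_{\cl',K'}$ is injective. Finally I would invoke Lemma~\ref{extalsem1} a second time, now for the deductive system $\cl'$ paired with $K'=K_{\cl'}^\tau$ --- the side condition $K'\subseteq K_{\cl'}^\tau$ is immediate and $\tau$ still commutes with substitutions. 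Reading the equivalence in the direction \tri{2}$\Rightarrow$\tri{1}, the injectivity just obtained yields that $K_{\cl'}^\tau$ is a $\tau$-algebraic semantics of $\cl'$, so $\cl'$ indeed has a $\tau$-algebraic semantics.

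I do not anticipate a genuine obstacle, since all the analytic content lives in Lemmas~\ref{extalsem1} and~\ref{extalsem2}. The one point requiring care is the bookkeeping of the two model classes $K=K_\cl^\tau$ and $K'=K_{\cl'}^\tau$: one must keep straight that $\taupe$ is injective on $\thy(\cl)$, while it is the behaviour of its restriction to the smaller set $\thy(\cl')\subseteq\thy(\cl)$ that certifies the semantics for $\cl'$.
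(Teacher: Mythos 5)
Your proposal is correct and follows essentially the same route as the paper's own proof: establish that $K_\cl^\tau$ is a $\tau$-algebraic semantics via Corollary~\ref{cor:largest}, obtain injectivity of $\taupe$ from Lemma~\ref{extalsem1}, transfer it to $\tau_{\cl',K'}$ via Lemma~\ref{extalsem2}, and conclude with Lemma~\ref{extalsem1} in the reverse direction. Your explicit remark that a restriction of an injective map is injective, and that $\thy(\cl')\subseteq\thy(\cl)$, just makes the transfer step more transparent than the paper's terser phrasing.
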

\begin{proof}
Let  $K$ be a $\tau$-algebraic semantics of $\cl$. Let $\cl^\prime$ be an extension of $\cl$
and $K^\prime=K_{\cl'}^\tau$. By Corollary \ref{cor:largest}, the class
$K_\cl^\tau$ is a $\tau$-algebraic semantics. Since, by Lemma \ref{extalsem1}, the
mapping $\taupe$ is injective, we have, by Lemma \ref{extalsem2}, that the mapping
$\tau_{\cl^\prime,K^\prime}$ is also injective. Again by Lemma \ref{extalsem1},
$K^\prime$ is a $\tau$-algebraic semantics of $\cl^\prime$.
\end{proof}

\section{Refinement by interpretation: The general case}\label{sc:rbi2}
\begin{center}
\begin{minipage}{0.8\textwidth}
\emph{This section revisits the notion of refinement by interpretation in the
general setting of  arbitrary $k$-dimensional deductive systems, and illustrates the design flexibility it entails.
}
\end{minipage}
\end{center}
~\\


As before, our concern is to put forward a precise, but flexible notion of what counts for a valid refinement step
in software design.
Our starting point is the following syntactic-grounded notion,

\begin{defi}[(Logical) refinement] Let $\Sigma$ and $\Sigma'$ be two signatures such that
$\Sigma\subseteq\Sigma'$, and
  $\mathcal{L}, \mathcal{L'}$
two $k$-\dlogics\ over $\Sigma$ and $\Sigma'$, respectively. We say that $\mathcal{L'}$ is a
\emph{(logical) refinement of} $\mathcal{L}$, in symbols $\mathcal{L}\rightsquigarrow
\mathcal{L'}$, if for any
 $\Gamma\cup \{\bar\varphi\} \subseteq
\form{k} {\Sigma}$, $$\Gamma\vdash_\cl \bar \varphi \Rightarrow \Gamma\vdash_{\cl'} \bar
\varphi.$$
\end{defi}

Note that, when $\mathcal{L}$ is specifiable, $\mathcal{L}\rightsquigarrow \mathcal{L'}$ if all
the axioms of $\cl$ are theorems of $\mathcal{L}'$ and the theories of $\cl'$ are compatible
with the inference rules of $\cl$.

 \begin{exa}
Modal logic  $\mathrm{S5^G}$ forms a (logical) refinement of $\CPC$. Consider the modal signature
$\Sigma=\{\rightarrow,\wedge,\vee,\neg,\top,\bot,\Box\}$. Modal logic $\mathrm{K}$ is obtained from
 $\CPC$ by adding the symbol $\Box$ to the signature, the axiom $\Box\,(p\rightarrow q)\rightarrow (\Box\, p
\rightarrow\Box\, q)$ and the inference rule $\displaystyle{\frac{p}{\Box\, p}}$. Logic $\mathrm{S5^G}$, on the other hand,
enriches the signature of
$\mathrm{K}$ with the symbol $\Diamond$, and  $\mathrm{K}$ itself with the
axioms $\Box\,p\rightarrow p$, $\Box\,p \rightarrow \Box\Box\,p$ and $\Diamond\,p\rightarrow
\Box\Diamond\,p$ \cite{pigozzi}. Hence, since the signature of both systems contains the
signature of $\CPC$ and their presentations result from the introduction of extra
 axioms and inference rules to the $\CPC$ presentation, we have, by the previous fact that
$\CPC\rightsquigarrow \mathrm{K}$ and $\CPC \rightsquigarrow \mathrm{S5^G}$ (actually, $\CPC\rightsquigarrow \mathrm{K}
\rightsquigarrow \mathrm{S5^G}$). Thus, refining $\CPC$ in this way, we acquire enough expressivity  to state
properties over propositions like \emph{it is necessary that $\phi$} (by $\Box \, \phi$) and
\emph{it is possible that $\phi$} (by $\Diamond \, \phi$). This kind of refinement makes possible the accommodation of a new type of requirements, modally expressed, along the refinement process.
\begin{flushright}
\vspace{-18 pt}$\Diamond$
\end{flushright}
\end{exa}

\begin{thm}\label{ref_syn_model}  Let $\Sigma$ be a signature  and  $\mathcal{L}$ and $\mathcal{L'}$ two $k$-\dlogics\ over $\Sigma$. Then the following conditions are equivalent
  \begin{enumerate}[label=(\roman*)]
    \item $\mathcal{L}\rightsquigarrow \mathcal{L'}$
    \item $\Mod(\mathcal{L'})\subseteq \Mod(\mathcal{L}).$
    \end{enumerate}
\end{thm}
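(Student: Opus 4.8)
The plan is to reduce both implications to the completeness theorem stated at the end of Section~\ref{sc:pre}, which asserts that $\Mod(\cl)$ is a $k$-structure semantics of any $k$-\dlogic\ $\cl$, i.e. $\Gamma\vdash_\cl\bar\varphi$ iff $\Gamma\models_{\Mod(\cl)}\bar\varphi$. The only other ingredient is the elementary monotonicity of semantic consequence with respect to class inclusion: if $\cM'\subseteq\cM$ then $\Gamma\models_\cM\bar\varphi$ implies $\Gamma\models_{\cM'}\bar\varphi$, which is immediate from the definition of $\models_\cM$ as validity in every member of $\cM$ (members of $\cM'$ being members of $\cM$).

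For the implication $\tri{1}\Rightarrow\tri{2}$ I would argue directly from the definition of a model, without any appeal to completeness. Assume $\cl\rightsquigarrow\cl'$, that is $\vdash_\cl\;\subseteq\;\vdash_{\cl'}$, and take any $\ca\in\Mod(\cl')$. To see that $\ca\in\Mod(\cl)$, suppose $\Gamma\vdash_\cl\bar\varphi$; by the inclusion of consequence relations $\Gamma\vdash_{\cl'}\bar\varphi$, and since $\ca$ is a model of $\cl'$ this yields $\Gamma\models_\ca\bar\varphi$. Hence every $\cl$-consequence is a semantic consequence in $\ca$, so $\ca$ is a model of $\cl$. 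This gives $\Mod(\cl')\subseteq\Mod(\cl)$.

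For the converse $\tri{2}\Rightarrow\tri{1}$ I would invoke completeness twice. Assume $\Mod(\cl')\subseteq\Mod(\cl)$ and suppose $\Gamma\vdash_\cl\bar\varphi$. By completeness for $\cl$, $\Gamma\models_{\Mod(\cl)}\bar\varphi$; by the monotonicity noted above together with $\Mod(\cl')\subseteq\Mod(\cl)$, we obtain $\Gamma\models_{\Mod(\cl')}\bar\varphi$; and by completeness for $\cl'$ this is equivalent to $\Gamma\vdash_{\cl'}\bar\varphi$. Therefore $\vdash_\cl\;\subseteq\;\vdash_{\cl'}$, i.e. $\cl\rightsquigarrow\cl'$.

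I expect no genuine obstacle here: the statement is essentially the contravariant Galois-type correspondence between a deductive system and its full $k$-structure semantics. The one place where a careless argument could go astray is the direction of the inclusions — a larger consequence relation forces fewer models, so the inclusion of consequence relations and the inclusion of model classes point opposite ways — but this is exactly matched by the antitonicity of $\models_\cM$ in $\cM$, so keeping track of that direction is the only point requiring attention.
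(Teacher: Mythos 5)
Your proof is correct and follows essentially the same route as the paper: the implication (i)~$\Rightarrow$~(ii) is the same direct unfolding of the definition of a model, and (ii)~$\Rightarrow$~(i) uses the completeness theorem for $\cl'$ exactly as the paper does, with your detour through $\models_{\Mod(\cl)}$ and monotonicity being only a rephrasing of the paper's pointwise step ``$\ca\in\Mod(\cl)$ and $\Gamma\vdash_\cl\bar\varphi$ imply $\Gamma\models_\ca\bar\varphi$''. No gaps.
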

\begin{proof}
(i) $\Rightarrow$ (ii). Suppose $\mathcal{L}\rightsquigarrow \mathcal{L'}$. Let $\ca\in\Mod(\cl')$ and $\Gamma\cup\{\bar \varphi\}\subseteq \form{k}{\Sigma}$. Suppose $\Gamma \vdash_{\mathcal{L}}\bar\varphi$. We have by (i) that $\Gamma \vdash_{\cl'}\bar\varphi$ and hence
$\Gamma \vdash_\ca\bar\varphi$. Therefore $\ca \in \Mod(\mathcal{L})$.

\noindent ii) $\Rightarrow$ i). Suppose $\Mod(\mathcal{L'})\subseteq \Mod(\mathcal{L}).$ Let $\Gamma\cup\{\bar \varphi\}\subseteq \form{k}{\Sigma}$. Suppose $\Gamma \vdash_{\mathcal{L}}\bar\varphi$. Let $\ca\in\Mod(\cl')$. By ii) we have $\ca\in\Mod(\cl)$ and hence $\Gamma\vdash_{\ca}\bar\varphi$. Therefore, by Completeness, $\Gamma\vdash_{\cl'}\bar\varphi$.
\end{proof}

A coarser and more flexible definition of refinement, however, is provided by the notion of logical
interpretation, as already shown in the equational case. Formally,

\begin{defi}[Refinement by interpretation]
Let $\cl$ be a $k$-\dlogic\ over $\Sigma$ and $\tau$ a $(k,l)$-translation from $\Sigma$ to
$\Sigma'$, which interprets $\cl$.
We say that a $l$-\dlogic\ $\cl'$ over $\Sigma'$ \emph{refines
the \dlogic\ $\cl$ via the interpretation $\tau$}, in symbols $\cl\rightharpoondown_\tau \cl'$,
if for any
 $\Gamma\cup \{\bar\varphi\} \subseteq
\form{k}{\Sigma}$,
$$ \Gamma \vdash_\cl \bar\varphi \; \imp\; \tau(\Gamma) \vdash_{\cl'} \tau(\bar\varphi). $$
\end{defi}

The requirement that $\tau$ has to interpret $\cl$ is necessary in order to enforce some control
over the class of models of the \dlogic\ $\cl'$. In particular, this guarantees that $\Mod(\cl')$ has to be smaller than $\Mod_{\tau}(\cl)$.

The following two examples illustrate this general notion of refinement
 at work.

\begin{exa}
Any subclass of the class of Boolean algebras induces a refinement by interpretation of $\CPC$,  based on
 the usual (1,2)-translation $\tau$ given by $\tau(p)=\{\langle p,\top\rangle \})$.
\begin{flushright}
\vspace{-18 pt}$\Diamond$
\end{flushright}
\end{exa}

\begin{exa}\label{ex:ordbams}
Consider the fragment of the specification BAMS, of a toy bank account management system,
 given in Example \ref{ex:bams1}, regarded as a $2$-\dlogic.
Suppose  we intend to refine this system by imposing that the \emph{balance of each
account has to  be positive}. This cannot be easily expressed in  (strict) equational
logic. However, it can be captured as a refinement by interpretation.
Actually, consider  the $2$-\dlogic\ over $\Sigma$ sketched in Figure \ref{ordbams},  in which $n, n', n''$ are variables of sort $Int$,
$x,y$ of sort $Sys$ and $i,j$ of sort $Ac$.
Notice that only a few  axioms and inference rules are shown for illustration purposes.
Intuitively we intend to interpret differently the
binary predicates: as equality for the carriers of $Ac$ and $Sys$; as  $\leq$ for the integers.

\begin{figure}
\caixapeq{
\begin{center}
\begin{minipage}{0.85\textwidth}
\begin{DSpecDefn}{ORDBAMS}
\item[\Axioms]  ~
\\              $ \langle n , n \rangle$\;
                $ \langle i , i \rangle$\;
                 $ \langle x , x \rangle$
\\              $ \langle n, n+0 \rangle$
\\                $ \langle n+0 , n \rangle$
\\                $ \langle n+n' , n'+n \rangle$
\\              $\dots$
\\              $\langle \balance(\dep(x,i,n),i) \; ,\; \balance(x,i) +
n \rangle$
\\            $\langle \balance(x,i) + n \; ,\; \balance(\dep(x,i,n),i)\rangle $
\\       $ \langle \balance(\withd(x,i,n),i) \; , \; \max(\balance(x,i) - n, 0)\rangle$
\\              $ \langle \max(\balance(x,i) - n, 0) \; , \;\balance(\withd(x,i,n),i) \rangle$
\\              $\cdots$
\item[\textbf{inference rules}]  ~\\
\\ $\displaystyle\frac{ \langle n, n'\rangle \; \langle n',  n \rangle }{ \langle
  n, n'' \rangle}$ \vspace{2.5mm}
\\              $\displaystyle\frac{ \langle n,n'\rangle}{ \langle
  n+n'', n'+n'' \rangle}$
\;               $\displaystyle\frac{ \langle n,n'\rangle}{ \langle
  -n', -n \rangle}$ \; $\displaystyle\frac{ \langle n,n' \rangle}{ \langle
  s(n), s(n') \rangle}$ \vspace{2.5mm}
  \\
   $\displaystyle \frac{\langle x,y \rangle \; \langle i,j \rangle}{\langle \bal(x,i),\bal(y,j) \rangle} \; \;
   \displaystyle \frac{\langle x,y \rangle \; \langle i,j \rangle}{\langle \bal(y,j),\bal(x,i) \rangle} $ \vspace{2.5mm}
\\ $\displaystyle \frac{\langle x,y \rangle   \; \langle i,j \rangle  \; \langle n,n' \rangle \; \langle n',n \rangle}{\langle \dep(x,i,n),\dep(y,j,n') \rangle}  \vspace{2.5mm}
\\
\displaystyle \frac{\langle x,y \rangle \;  \langle i,j \rangle  \; \langle n,n' \rangle \; \langle n',n \rangle }{\langle \withd(x,i,n),\withd(y,i,n) \rangle}$ \vspace{2.5mm}
\\
              $\cdots$\\
              $\displaystyle\frac{\langle x, y \rangle}{\langle y, x \rangle}$
%
%
%
%
\end{DSpecDefn}
\end{minipage}
\end{center}
}
\caption{Revisiting BAMS.} \label{ordbams}
\end{figure}

\medskip

\noindent
Let us take a fixed-semantics approach by
fixing the $Int$ component as the integers endowed with the usual operations. Consider, thus, the
following subclasses of the model class of $\mathrm{BAMS}$ and $\mathrm{ORDBAMS}$, respectively:

\begin{align*}
\mathrm{BAMS}^{\mathbb{Z}}\; =& \; \{\big\langle A, F
\big\rangle \in  \Mod(\mathrm{BAMS}):
A_{Int}=\mathbb{Z} \, \& \,F_{Int}=id_{\mathbb{Z}}\}\\
\mathrm{ORDBAMS}^{\mathbb{G}} \; =& \;
\{\big\langle A, G \big\rangle \in \Mod(\mathrm{ORDBAMS}):
A_{Int}=\mathbb{Z} \,\& \,G_{Int}={}\leq\}
\end{align*}

\noindent
Notice that a structure $\langle A,F\rangle$ reduces to an algebra when, for each sort, the corresponding filter in $F$ is the
identity.
Let  $\tau$ be the
(2,2)-translation defined schematically by
\begin{align*}
\tau_{Int}(\langle n,n'\rangle)\; =& \; \{\langle n,n'\rangle,\langle n',n\rangle \}\\
\tau_{Ac}(\langle i,j\rangle)\; =& \; \{\langle i,j\rangle\}\\
\tau_{Sys}(\langle x,y\rangle)\; =& \; \{\langle x,y\rangle\}
\end{align*}
The underlying intuition
 is that an equation $n\approx n'$ of sort $Int$ is translated into two
inequalities $n\leq n'$ and $n \leq n'$.
Clearly, $\models_{\mathrm{ORDBAMS}^{\mathbb{Z}}}$ is a $\tau$-interpretation of
$\models_{\mathrm{BAMS}^\mathbb{Z}}$.
 Therefore,  the deductive system which extends $\mathrm{ORDBAMS}$ to
 capture the extra requirement $\langle 0, \bal(x,i)\rangle$ is obtained
 as a $\tau$-refinement of the original one.
 \begin{flushright}
\vspace{-18 pt}$\Diamond$
\end{flushright}
\end{exa}

\CUT{

\begin{exa}\label{ex:ordbams}
Consider the fragment of the specification BaMS, of a toy bank account management system,
 given in Example \ref{ex:bams1}.
Suppose  we intend to refine this system by imposing that the \emph{balance of each
account has to  be positive}. Naturally, this cannot be expressed in a (strict) equational
logic. However, it can be captured as a refinement by interpretation.
Actually, consider  the following $2$-\dlogic\ over $\Sigma$ depicted in Figure \ref{fig:10}

\begin{figure}
\caixapeq{
\smallskip
\begin{center}
\begin{minipage}{0.8\textwidth}
\begin{SpecDefn}{ORDBAMS}
\item[\Enrich]  $\mathrm{INT} + \mathrm{BAMS}$
\item[\Axioms]     ~
\\
\\              $\langle x\hspace{-.09cm}:\hspace{-.09cm}s,x\hspace{-.09cm}:\hspace{-.09cm}s\rangle, s \in \{Ac, Int\};$
\\              $\langle \bal(\dep(x,n)), \bal(x)+n\rangle;$
\\             $ \langle  \bal(x)+n,\bal(\dep(x,n))\rangle;$
\\              $\langle \bal(\withd(x,n)), \bal(x)+ (-n)\rangle;$
\\             $\langle \bal(x)+ (-n),\bal(\withd(x,n))\rangle;$
\\
\item[\textbf{inference rules}]  ~
\\
\\              $\displaystyle\frac{\langle x\hspace{-.09cm}:\hspace{-.09cm}Ac, y\hspace{-.09cm}:\hspace{-.09cm}Ac\rangle}{\langle y\hspace{-.09cm}:\hspace{-.09cm}Ac, x\hspace{-.09cm}:\hspace{-.09cm}Ac \rangle};$
\\
\\              $\displaystyle \frac{\langle x,y \rangle ; \langle w,z\rangle}{\langle x+w,y+z \rangle}; \; \; \displaystyle \frac{\langle x,y \rangle }{\langle -y,-x \rangle}$
\\
\\              $\displaystyle \frac{\langle x\hspace{-.09cm}:\hspace{-.09cm}s,y\hspace{-.09cm}:\hspace{-.09cm}s \rangle
\; \; \; \; \; \langle y\hspace{-.09cm}:\hspace{-.09cm}s,z\hspace{-.09cm}:\hspace{-.09cm}s\rangle}{\langle x\hspace{-.09cm}:\hspace{-.09cm}s,z\hspace{-.09cm}:\hspace{-.09cm}s \rangle},\; \;  s \in \{Ac, Int\};
$
\\
\\              $\displaystyle \frac{\langle x,y \rangle }{\langle \bal(x),\bal(y) \rangle}; \; \;  \displaystyle \frac{\langle x,y \rangle }{\langle \bal(y),\bal(x) \rangle}; $
\\ $\displaystyle \frac{\langle x,y \rangle }{\langle \dep(x),\dep(y) \rangle}; \; \; \displaystyle \frac{\langle x,y \rangle }{\langle \withd(x),\withd(y) \rangle}$
\end{SpecDefn}
 \end{minipage}
\end{center}
\smallskip
}
\caption{The bank system revisited}\label{fig:10}
\end{figure}

\noindent
Let us take a fixed-semantics approach by
fixing the $Int$ component as the integers, endowed with the usual operations,
and the corresponding component of their filters as the identity relation. Consider, then, the
following subclasses of model class of the above $2$-\dlogic:\\


\begin{align*}
\mathrm{BAMS}^{\mathbb{Z}}\; =& \; \{\big\langle\langle A_{Ac},A_{Int}\rangle,\langle
F_1,F_2\rangle\big\rangle \in  \Mod(\mathrm{BAMS}):
A_{Int}=\mathbb{Z} \, \& \,F_2=id_{\mathbb{Z}}\}\\
\mathrm{ORDBAMS}^{\mathbb{Z}} \; =& \;
\{\big\langle\langle A_{Ac},A_{Int}\rangle,\langle
G_1,G_2\rangle\big\rangle \in \Mod(\mathrm{ORDBAMS}):
A_{Int}=\mathbb{Z} \,\& \,G_2=\leq\}
\end{align*}

\noindent Let  $\tau$ be the
(2,2)-translation defined schematically by
\begin{align*}
\tau_{Int}(\langle x,y\rangle)\; =& \; \{\langle x,y\rangle,\langle y,x\rangle \}\\
\tau_{Ac}(\langle x,y\rangle)\; =& \; \{\langle x,y\rangle\}
\end{align*}
The underlying intuition
 is that an equation $x\approx y$ of sort $Int$ is translated into two
inequalities $x\leq y$ and $y\leq x$.
Clearly, $\models_{\mathrm{ORDBAMS}^{\mathbb{Z}}}$ is a $\tau$-interpretation of
$\models_{\mathrm{BAMS}^\mathbb{Z}}$.
 Therefore,  adding  axiom $\langle 0, \bal(x)\rangle$ the envisaged specification is obtained
 as a $\tau$-refinement of the original one.
 \begin{flushright}
\vspace{-18 pt}$\Diamond$
\end{flushright}
\end{exa}

}

We complete the discussion of refinement by interpretation in this general setting by establishing
its connection to classical, signature morphism based refinement. What follows lifts the
corresponding discussion in subsection \ref{sc:mainth} to the level of $k$-\dlogics\ and their interpretations:

\begin{thm} \label{carct_ref_int}Let $\cl$ and $\cl'$ be a $k$-\dlogic\ over $\Sigma$ and $l$-\dlogic\ over $\Sigma'$ respectively. Let
 $\tau$ be a $(k,l)$-translation from
$\Sigma$ to $\Sigma'$. Then the following conditions are equivalent
    \begin{enumerate}[label=(\roman*)]
    \item $\cl\rightharpoondown_\tau \cl'$;
    \item $\cl'$ is a refinement of some $ \tau$-interpretation of
      $\cl$ (i.e.,
    there is a $l$-\dlogic\ $\cl^0$ which $\tau$-interprets $\cl$ and $\cl^0 \rightsquigarrow  \cl'$).
    \end{enumerate}
\end{thm}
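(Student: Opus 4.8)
The plan is to establish the two implications separately, leaning on the model-theoretic characterisation of logical refinement given by Theorem \ref{ref_syn_model} and on Theorem \ref{axi00}, which guarantees that as soon as $\tau$ interprets $\cl$ the $l$-\dlogic\ $\cl^\tau=\models_{\Mod_\tau(\cl)}$ is a $\tau$-interpretation of $\cl$. Throughout, the witness for condition (ii) will be $\cl^0:=\cl^\tau$, the canonical $\tau$-interpretation with the largest class of models.

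For (i) $\Rightarrow$ (ii) I would first note that $\cl\rightharpoondown_\tau\cl'$ already includes, by definition, the requirement that $\tau$ interprets $\cl$, so $\cl^\tau$ is available and is a $\tau$-interpretation of $\cl$. The crucial step is to show $\Mod(\cl')\subseteq\Mod_\tau(\cl)$: for an arbitrary $l$-structure $\ca\in\Mod(\cl')$ and any $\Gamma\cup\{\bar\varphi\}\subseteq\form{k}{\Sigma}$ with $\Gamma\vdash_\cl\bar\varphi$, condition (i) yields $\tau(\Gamma)\vdash_{\cl'}\tau(\bar\varphi)$, and since $\ca$ is a model of $\cl'$ this forces $\tau(\Gamma)\models_\ca\tau(\bar\varphi)$; hence $\ca$ is a $\tau$-model of $\cl$. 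As $\cl^\tau=\models_{\Mod_\tau(\cl)}$, every member of $\Mod_\tau(\cl)$ is a model of $\cl^\tau$, so $\Mod(\cl')\subseteq\Mod_\tau(\cl)\subseteq\Mod(\cl^\tau)$. I would then invoke Theorem \ref{ref_syn_model} for the two $l$-\dlogics\ $\cl^\tau$ and $\cl'$ over $\Sigma'$ to convert this inclusion into $\cl^\tau\rightsquigarrow\cl'$, which is exactly (ii).

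For the converse (ii) $\Rightarrow$ (i) I would assume a $\tau$-interpretation $\cl^0$ of $\cl$ with $\cl^0\rightsquigarrow\cl'$. The mere existence of such a $\cl^0$ says that $\tau$ interprets $\cl$, which is the first clause of the definition of $\rightharpoondown_\tau$. It then remains to check the consequence condition: given $\Gamma\vdash_\cl\bar\varphi$, the interpretation property of $\cl^0$ gives $\tau(\Gamma)\vdash_{\cl^0}\tau(\bar\varphi)$, i.e.\ $\tau(\Gamma)\vdash_{\cl^0}\bar\psi$ for every $\bar\psi\in\tau(\bar\varphi)$; applying the logical refinement $\cl^0\rightsquigarrow\cl'$ to each single $l$-formula $\bar\psi$ over $\Sigma'$ yields $\tau(\Gamma)\vdash_{\cl'}\bar\psi$, hence $\tau(\Gamma)\vdash_{\cl'}\tau(\bar\varphi)$. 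This is precisely $\cl\rightharpoondown_\tau\cl'$.

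The two chains of implications are routine; the one step deserving real care is the passage in (i) $\Rightarrow$ (ii) from the model inclusion $\Mod(\cl')\subseteq\Mod(\cl^\tau)$ to the syntactic refinement $\cl^\tau\rightsquigarrow\cl'$. The difficulty is that condition (i) only constrains formulas lying in the image of $\tau$, whereas logical refinement must hold for \emph{all} $l$-formulas over $\Sigma'$; bridging this gap is exactly what the model-theoretic reformulation of Theorem \ref{ref_syn_model} achieves, so the main obstacle is simply to confirm that its hypotheses --- common dimension $l$ and common signature $\Sigma'$ --- genuinely apply to the pair $\cl^\tau,\cl'$.
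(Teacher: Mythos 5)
Your proof is correct and follows essentially the same route as the paper: for (i)~$\Rightarrow$~(ii) you take $\cl^0=\cl^\tau$, establish $\Mod(\cl')\subseteq\Mod(\cl^\tau)$ and convert this to $\cl^\tau\rightsquigarrow\cl'$ via Theorem~\ref{ref_syn_model}, and for (ii)~$\Rightarrow$~(i) you chain the interpretation equivalence for $\cl^0$ with the refinement $\cl^0\rightsquigarrow\cl'$. If anything, you supply slightly more detail than the published proof, which credits the inclusion $\Mod(\cl')\subseteq\Mod(\cl^\tau)$ to Theorem~\ref{axi00} without spelling out the direct verification that every model of $\cl'$ is a $\tau$-model of $\cl$.
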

\begin{proof}
Suppose $\cl\rightharpoondown_\tau \cl'$. Then, by Theorem \ref{axi00}, $\Mod(\cl')$ is a
subclass of the class of $\tau$-models of
$\cl$, \ie,  $\Mod(\cl') \subseteq \Mod(\cl^\tau)$. Therefore, by Theorem
\ref{ref_syn_model}, $\cl^\tau \rightsquigarrow \cl'$. So, condition (ii) holds for
$\cl^0=\cl^\tau$.

Suppose now there is a $l$-\dlogic\ $\cl^0$ which $\tau$-interprets $\cl$ and $\cl^0
\rightsquigarrow \cl'$. Let $\Gamma\cup\{\bar\varphi\}\subseteq\form{k}{\Sigma}$. Then
$$\Gamma \vdash_\cl \bar\varphi\Leftrightarrow\tau(\Gamma) \vdash_{\cl^0}
\tau(\bar\varphi)\Rightarrow \tau(\Gamma) \vdash_{\cl'} \tau(\bar\varphi).$$ \noindent The
equivalence holds since $\tau$ interprets $\cl$ in $\cl^0$. The implication holds since
$\cl^\tau \rightsquigarrow \cl'$. Therefore, $\cl\rightharpoondown_\tau \cl'$.
 \end{proof}

\begin{exa}  \label{ex:tool}
Suppose a requirements specification is provided in  $\CPC$,
but an implementation is sought in which the system properties are expected
to be shown in a constructive way resorting, for example, to a theorem prover. This
entails the need for  refactoring the specification to some variant of intuitionist logic.
Based on Theorem \ref{carct_ref_int}  we have $\CPC \rightharpoondown_\tau \HA \rightharpoondown_\rho \mathrm{IPC}$, with
$\tau(p)=\{\langle\neg\neg p,\top \rangle\}$  and get $\rho(\langle p,q\rangle)=\{p\rightarrow q,
q\rightarrow p\}$ doing the job.
\begin{flushright}
\vspace{-18 pt}$\Diamond$
\end{flushright}
\end{exa}

The discussion concerning the composition of refinements by interpretation is not
straightforward. For \emph{vertical composition} one gets, similarly to what happens in the equational case,
\begin{thm}\label{th:ver2}
Let $\cl$, $\cl'$ and $\cl''$ be $k$, $l$ and $m$-\dlogics\ over $\Sigma$, $\Sigma'$ and
$\Sigma''$ respectively. Let $\tau$ be a $(k,l)$-translation from $\Sigma$ to $\Sigma'$ and
$\rho$ a $(l,m)$-translation from $\Sigma'$ to $\Sigma''$. Suppose that $\cl
\rightharpoondown_\tau \cl'$, $\cl'\rightharpoondown_\rho \cl''$ and $\rho$ interprets
$\cl^\tau$. Then $\cl\rightharpoondown_{\rho\circ\tau} \cl''$
\end{thm}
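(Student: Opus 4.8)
The plan is to transcribe the proof of Theorem~\ref{th:ver1} into the general $k$-dimensional setting, exploiting the fact that here the intermediate interpretation is available for free: by Theorem~\ref{axi00}, whenever $\tau$ interprets $\cl$ the $l$-\dlogic\ $\cl^\tau = \;\models_{\Mod_\tau(\cl)}$ is automatically a $\tau$-interpretation of $\cl$, so (unlike the equational case) no hypothesis on the existence of a specification denoting $\Mod_\tau(\cl)$ is needed. Two things must be established. First, that $\rho\circ\tau$ interprets $\cl$, which is required merely for the statement $\cl\rightharpoondown_{\rho\circ\tau}\cl''$ to be well-posed. Second, the refinement implication itself, namely $\Gamma\vdash_\cl\bar\varphi \Rightarrow (\rho\circ\tau)(\Gamma)\vdash_{\cl''}(\rho\circ\tau)(\bar\varphi)$ for all $\Gamma\cup\{\bar\varphi\}\subseteq\form{k}{\Sigma}$.

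For the first point I would build a chain of equivalences. Since $\tau$ interprets $\cl$, Theorem~\ref{axi00} gives $\Gamma\vdash_\cl\bar\varphi \Leftrightarrow \tau(\Gamma)\vdash_{\cl^\tau}\tau(\bar\varphi)$. By hypothesis $\rho$ interprets $\cl^\tau$, so there is an $m$-\dlogic\ $\cl^0$ over $\Sigma''$ with $\Delta\vdash_{\cl^\tau}\bar\psi \Leftrightarrow \rho(\Delta)\vdash_{\cl^0}\rho(\bar\psi)$ for every $\Delta\cup\{\bar\psi\}\subseteq\form{l}{\Sigma'}$. Instantiating $\Delta=\tau(\Gamma)$ and letting $\bar\psi$ range over the (globally finite) set $\tau(\bar\varphi)$, together with the set-convention $\Delta\vdash\Theta \Leftrightarrow \Delta\vdash\bar\psi$ for all $\bar\psi\in\Theta$, this yields $\tau(\Gamma)\vdash_{\cl^\tau}\tau(\bar\varphi) \Leftrightarrow \rho(\tau(\Gamma))\vdash_{\cl^0}\rho(\tau(\bar\varphi))$. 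Finally I would observe that, since the multifunctions act unionwise on sets, $\rho(\tau(\Gamma))=(\rho\circ\tau)(\Gamma)$ and $\rho(\tau(\bar\varphi))=(\rho\circ\tau)(\bar\varphi)$, so the composite chain reads $\Gamma\vdash_\cl\bar\varphi \Leftrightarrow (\rho\circ\tau)(\Gamma)\vdash_{\cl^0}(\rho\circ\tau)(\bar\varphi)$, that is, $\rho\circ\tau$ interprets $\cl$ in $\cl^0$.

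For the refinement implication I would argue directly. Assume $\Gamma\vdash_\cl\bar\varphi$. From $\cl\rightharpoondown_\tau\cl'$ we obtain $\tau(\Gamma)\vdash_{\cl'}\tau(\bar\varphi)$, i.e.\ $\tau(\Gamma)\vdash_{\cl'}\bar\psi$ for each $\bar\psi\in\tau(\bar\varphi)$. Applying $\cl'\rightharpoondown_\rho\cl''$ to each such $\bar\psi$ produces $\rho(\tau(\Gamma))\vdash_{\cl''}\rho(\bar\psi)$, and collecting over all $\bar\psi\in\tau(\bar\varphi)$ gives $\rho(\tau(\Gamma))\vdash_{\cl''}\rho(\tau(\bar\varphi))$, which is $(\rho\circ\tau)(\Gamma)\vdash_{\cl''}(\rho\circ\tau)(\bar\varphi)$. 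Together with the interpretation property just established, this is exactly $\cl\rightharpoondown_{\rho\circ\tau}\cl''$.

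The individual steps are routine; the one point requiring genuine care is the bookkeeping of the multifunctions on sets of formulas, namely checking that composing $\rho$ and $\tau$ pointwise agrees with applying them to whole sets $\Gamma$ and $\tau(\bar\varphi)$ under the set-convention for $\vdash$. The only conceptual subtlety is recognising that the correct logic to slot into the middle of the equivalence chain is $\cl^\tau$, the \emph{largest} $\tau$-interpretation furnished by Theorem~\ref{axi00}, rather than $\cl'$ itself: this is precisely why the hypothesis must be phrased as \emph{``$\rho$ interprets $\cl^\tau$''} and not as ``$\rho$ interprets $\cl'$''.
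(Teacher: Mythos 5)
Your proof is correct and follows essentially the same route as the paper's: the refinement implication is obtained by chaining the two refinements, and the interpretation of $\cl$ by $\rho\circ\tau$ is obtained from the equivalence chain through $\cl^\tau$ (the paper instantiates your $\cl^0$ as $(\cl^\tau)^\rho$, but this is inessential). Your version merely spells out the set-level bookkeeping that the paper leaves implicit.
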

\begin{proof}
Directly from the fact that $\mathcal{L} \rightharpoondown_\tau \mathcal{L'}$ and $\mathcal{L'}
\rightharpoondown_\rho \mathcal{L''}$ we have that $\Gamma
\vdash_{\mathcal{L}}\bar\varphi$ implies
$\rho(\tau(\Gamma))\vdash_{\mathcal{L''}}
\rho(\tau(\bar\varphi))$ for any $\Gamma\cup \{\bar\varphi\} \subseteq \form{k}{\Sigma}.$

 On the other hand, by
hypothesis, for any $\Gamma,\{\bar\varphi\}\subseteq \form{k}{\Sigma}$,
$$\Gamma\vdash_{\mathcal{L}}\bar\varphi \Leftrightarrow \tau(\Gamma)\vdash_{\mathcal{L}^\tau}
\tau(\bar\varphi)\Leftrightarrow
\rho(\tau(\Gamma))\vdash_{{(\mathcal{L}^\tau)}^\rho}\rho(\tau(\bar\varphi))$$ and hence,
$\rho\circ\tau$ interprets $\mathcal{L}$. Therefore, $\mathcal{L}
\rightharpoondown_{\rho\circ\tau} \mathcal{L''}$.
\end{proof}

On the other hand, \emph{horizontal composition} of refinements via interpretation is still a
topic of current research, which leads us to the conclusions of this paper.

\section{Concluding}\label{sc:conc}


\subsection{Related work}
The idea of relaxing what counts as a valid refinement of a specification by
replacing \emph{signature morphisms} by \emph{logical interpretations} is, to the best of our
knowledge, new. This piece of research was directly inspired by the first author's work on algebraic logic, where
the notion of \emph{interpretation} plays a fundamental role (see, \eg,
\cite{memoirs,pigozzi,blok,proto}) and occurs in different variants.
Rather than reviewing exhaustively this area, we shall concentrate in what appears to  be
the closest approach,  in the literature, to the notion of an interpretation proposed in the paper
--- that of  \emph{conservative translation} intensively studied by Feitosa and D'Ottaviano
\cite{feitosatese,traducoesconservativas}.
Recall that a conservative translation is a map between deductive system which reflects and preserves logical consequence.
It corresponds thus to an interpretation arising
from a functional translation with $k=l=1$, \ie, between sentential
languages.

The conjunction property, as characterised in the following definition, allows us  to add to the fact that all conservative translations are interpretations (insofar functions are particular cases of multi-functions), its converse, although in a restricted form. Similar properties, also concerning other connectives, have been studied in the framework of the theory of institutions (see [Tar85]).

\begin{defi}\label{conjprop}
A $k$-\dlogic\ $\cl$ over $\Sigma$ has the \emph{conjunction property} if, for any $\{\bar \varphi_i| i\in \textrm{I}\} \subseteq \form{k}{\Sigma}$, for $\textrm{I}$ finite, there exists a $\bar \xi \in \form{k}{\Sigma}$ such that $\{\bar \varphi_i| i\in \textrm{I}\}\dashv\vdash_\cl \bar \xi$. In this case, we denote $\bar \xi$ by $\bigwedge_{\cl} \{\bar \varphi_i | i \in \textrm{I}\}$.

 In the presence of this property, we define the \emph{associated function} of a translation $\tau$ between two \dlogics\ over $\Sigma$ and $\Sigma'$ as follows
 \begin{center}
 \begin{tabular}{cccc}
$f_\tau:$ & $\form{k}{\Sigma}$ & $\rightarrow$ & $\form{l}{\Sigma'}$\\
 & $\bar \varphi$ & $\mapsto$ & $\bigwedge_{\cl'} \tau(\bar\varphi)$.\\
\end{tabular}
\end{center}
\end{defi}

We may now incorporate in the approach proposed in this paper the important tool given in Lemma \ref{th:nsc}:
\begin{lem}
Let $\mathcal{\tau}$ be a self-translation between two $1$-deductive systems $\cl$ and $\cl'$ over the signature $\Sigma$. Then, if $\cl'$ has the conjunction property, $\tau$ is an interpretation iff its associated function is a conservative translation.
\end{lem}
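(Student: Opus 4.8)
The plan is to reduce both notions in the statement --- being an \emph{interpretation} and being a \emph{conservative translation} --- to one and the same condition on the consequence relation $\vdash_{\cl'}$, exploiting the conjunction property of $\cl'$. The point is that, once $\cl'$ has the conjunction property, the globally finite set $\tau(\bar\varphi)$ and the single formula $f_\tau(\bar\varphi)=\bigwedge_{\cl'}\tau(\bar\varphi)$ behave identically as far as $\vdash_{\cl'}$ is concerned. Granting this, the two biconditionals
\[
\Gamma\vdash_\cl\bar\varphi \iff \tau(\Gamma)\vdash_{\cl'}\tau(\bar\varphi)
\qquad\text{and}\qquad
\Gamma\vdash_\cl\bar\varphi \iff f_\tau(\Gamma)\vdash_{\cl'}f_\tau(\bar\varphi)
\]
have logically equivalent right-hand sides, so they are equivalent as statements; the left one says $\tau$ interprets $\cl$ (in $\cl'$) and the right one says $f_\tau$ is a conservative translation, which is exactly the desired \enquote{iff}.

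First I would establish the key auxiliary fact: for every $\Gamma\subseteq\form{1}{\Sigma}$ one has $\tau(\Gamma)\dashv\vdash_{\cl'}f_\tau(\Gamma)$, where $\tau(\Gamma)=\bigcup_{\gamma\in\Gamma}\tau(\gamma)$ and $f_\tau(\Gamma)=\{f_\tau(\gamma):\gamma\in\Gamma\}$. By the defining property of $\bigwedge_{\cl'}$ in Definition \ref{conjprop} we have $\tau(\gamma)\dashv\vdash_{\cl'}f_\tau(\gamma)$ for each $\gamma$. Hence each $f_\tau(\gamma)$ is a $\cl'$-consequence of $\tau(\gamma)\subseteq\tau(\Gamma)$, so by monotonicity $\tau(\Gamma)\vdash_{\cl'}f_\tau(\Gamma)$; conversely every $\psi\in\tau(\gamma)$ is a $\cl'$-consequence of $f_\tau(\gamma)\in f_\tau(\Gamma)$, so by monotonicity $f_\tau(\Gamma)\vdash_{\cl'}\tau(\Gamma)$. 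Applied to singletons this also yields $\tau(\bar\varphi)\dashv\vdash_{\cl'}f_\tau(\bar\varphi)$.

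Then I would combine this with the cut rule (ii) of Definition \ref{df:klogic}, which is all that is needed to substitute interderivable sets on either side of a $\cl'$-consequence. Concretely I would prove $\tau(\Gamma)\vdash_{\cl'}\tau(\bar\varphi)$ iff $f_\tau(\Gamma)\vdash_{\cl'}f_\tau(\bar\varphi)$: for one direction, chain $f_\tau(\Gamma)\vdash_{\cl'}\tau(\Gamma)\vdash_{\cl'}\tau(\bar\varphi)\vdash_{\cl'}f_\tau(\bar\varphi)$ and apply cut; for the other, chain $\tau(\Gamma)\vdash_{\cl'}f_\tau(\Gamma)\vdash_{\cl'}f_\tau(\bar\varphi)\vdash_{\cl'}\tau(\bar\varphi)$ and apply cut. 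With this equivalence, the right-hand sides of the two displayed biconditionals coincide, so $\tau$ interprets $\cl$ in $\cl'$ exactly when $f_\tau$ both preserves and reflects $\vdash_\cl$, i.e. is a conservative translation; both directions of the $\Leftrightarrow$ fall out at once.

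I do not anticipate a genuine obstacle: the whole argument is bookkeeping around cut and monotonicity, once the interderivability lemma is in place. The only points that deserve a word of care are (a) that $f_\tau$ is defined only up to $\cl'$-interderivability, since $\bigwedge_{\cl'}$ is itself fixed only up to $\dashv\vdash_{\cl'}$ --- harmless, because every step above is invariant under replacing a formula by a $\cl'$-interderivable one --- and (b) the degenerate situation where some $\tau(\bar\varphi)$ is empty, in which case $\bigwedge_{\cl'}\emptyset$ must be read as a $\cl'$-theorem; one either restricts to non-empty images or notes that the conjunction property furnishes such a theorem. Neither affects the core equivalence.
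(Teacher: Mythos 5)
Your proposal is correct and follows essentially the same route as the paper: both hinge on the interderivability $\tau(\Gamma)\dashv\vdash_{\cl'}f_\tau(\Gamma)$ obtained from the conjunction property, and then use condition (ii) of Definition \ref{df:klogic} (cut) to chain the two consequence statements into one another. Your added remarks on $f_\tau$ being defined only up to $\dashv\vdash_{\cl'}$ and on empty images are refinements the paper does not spell out, but they do not change the argument.
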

\begin{proof}
First we prove that $f_\tau(\Gamma) \dashv\vdash_{\cl'} \tau(\Gamma)$: since for any $\xi \in f_\tau(\Gamma)$ there is a $\gamma \in \Gamma$ such
that $\xi=\bigwedge_{\cl'}\tau(\gamma)$, we have, by the conjunction property of $\cl'$, that $\tau(\gamma)\dashv\vdash_{\cl'}\xi$ and, by $\textrm{(ii)}$ of Definition \ref{df:klogic}, that $\tau(\Gamma)\vdash_{\cl'} f_\tau(\Gamma)$.	
Analogously, since for each $\gamma \in \Gamma$ there is a $\xi\in f_\tau(\Gamma)$ such $\xi=\bigwedge_{\cl'}\tau(\gamma) \dashv\vdash_{\cl'} \tau(\gamma)$, we have, by $\textrm{(ii)}$ of Definition \ref{df:klogic}, that $f_\tau(\Gamma) \vdash_{\cl'} \tau(\Gamma)$. Hence, for any interpretation $\tau$ and for all $\Gamma, \{\varphi\} \subseteq \form{1}{\Sigma}$, \begin{center}
	$\Gamma \vdash_{\cl}\varphi \Leftrightarrow \tau(\Gamma)\dashv\vdash_{\cl'} f_\tau(\Gamma)\vdash_{\cl'}\tau(\varphi)\dashv\vdash_{\cl'} f_\tau(\varphi)$,
	\end{center}
which implies that $f_\tau$ is a conservative translation. In a similar way, if $f_\tau$ is a conservative translation,
 $\tau$  is an interpretation.
\end{proof}

The connection to conservative translations turns out to be very useful in practice.
The following theorem, which builds on results in \cite{traducoesconservativas},
provides a  sufficient condition for a translation to be an interpretation.
\begin{thm}\label{th:nsc}
Let $\tau$ be a $(k,l)$-translation from
$\Sigma$ to $\Sigma'$, $\cl$
 a $k$-\dlogic\ over $\Sigma$ and $\cl'$ a $l$-\dlogic\ over $\Sigma'$. Suppose that $\tau$ is functional and
 injective. If $\tau(\Cn_\cl(\Gamma))=\Cn_{\cl'}(\tau(\Gamma))$, for every set of formulas
$\Gamma$,
 then $\tau$ interprets $\cl$ in $\cl'$.
\end{thm}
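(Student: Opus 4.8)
The plan is to take $\cl'$ itself as the witnessing $l$-\dlogic\ and verify directly the biconditional in the definition of interpretation, reducing both of its sides to membership statements about the closure operators $\Cn_\cl$ and $\Cn_{\cl'}$, so that the hypothesis $\tau(\Cn_\cl(\Gamma))=\Cn_{\cl'}(\tau(\Gamma))$ can be applied verbatim. Fix $\Gamma\cup\{\bar\varphi\}\subseteq\form{k}{\Sigma}$; the goal is $\Gamma\vdash_\cl\bar\varphi$ iff $\tau(\Gamma)\vdash_{\cl'}\tau(\bar\varphi)$. The two elementary reformulations I would record first are that $\Gamma\vdash_\cl\bar\varphi$ is the same as $\bar\varphi\in\Cn_\cl(\Gamma)$, and that $\tau(\Gamma)\vdash_{\cl'}\tau(\bar\varphi)$ (reading $\vdash_{\cl'}$ in its extension to sets on the right) is the same as $\tau(\bar\varphi)\subseteq\Cn_{\cl'}(\tau(\Gamma))$. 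Since $\tau$ is functional, $\tau(\bar\varphi)$ is a singleton, which will matter for the converse.

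For the forward implication I would assume $\Gamma\vdash_\cl\bar\varphi$, i.e. $\bar\varphi\in\Cn_\cl(\Gamma)$, and apply $\tau$ to obtain $\tau(\bar\varphi)\subseteq\tau(\Cn_\cl(\Gamma))=\Cn_{\cl'}(\tau(\Gamma))$, which is exactly $\tau(\Gamma)\vdash_{\cl'}\tau(\bar\varphi)$. I note that this half uses only the inclusion $\tau(\Cn_\cl(\Gamma))\subseteq\Cn_{\cl'}(\tau(\Gamma))$ and neither functionality nor injectivity.

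The converse is where the remaining hypotheses earn their keep. Assuming $\tau(\Gamma)\vdash_{\cl'}\tau(\bar\varphi)$, i.e. $\tau(\bar\varphi)\subseteq\Cn_{\cl'}(\tau(\Gamma))=\tau(\Cn_\cl(\Gamma))$, I would write $\tau(\bar\varphi)=\{\psi\}$ using functionality; then $\psi\in\tau(\Cn_\cl(\Gamma))$ produces some $\bar\chi\in\Cn_\cl(\Gamma)$ with $\psi\in\tau(\bar\chi)$, and functionality again gives $\tau(\bar\chi)=\{\psi\}=\tau(\bar\varphi)$. Injectivity of $\tau$ then forces $\bar\chi=\bar\varphi$, so $\bar\varphi\in\Cn_\cl(\Gamma)$, that is $\Gamma\vdash_\cl\bar\varphi$. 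Combining the two directions shows $\tau$ interprets $\cl$ in $\cl'$, as required.

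I do not expect a genuine obstacle: the whole argument is a short set-theoretic manipulation once one recognises that the hypothesis is precisely the commutation of $\tau$ with the two closure operators. The only subtle point is in the backward direction, where one must avoid conflating ``$\psi$ belongs to the $\tau$-image of $\Cn_\cl(\Gamma)$'' with ``$\psi=\tau(\bar\varphi)$ for the original $\bar\varphi$''; it is exactly injectivity, together with functionality (so that all images are singletons), that licenses pulling the witness $\bar\chi$ back to $\bar\varphi$.
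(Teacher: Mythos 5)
Your proof is correct and follows essentially the same route as the paper's: the forward direction from the inclusion $\tau(\Cn_\cl(\Gamma))\subseteq\Cn_{\cl'}(\tau(\Gamma))$, and the converse by pulling the witness back through $\tau(\Cn_\cl(\Gamma))$ using functionality and injectivity. Your version is in fact slightly more explicit about the singleton bookkeeping that the paper elides when it writes $\tau(\bar\varphi)\in\Cn_{\cl'}(\tau(\Gamma))$.
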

\begin{proof}
From the inclusion $\tau(\Cn_\cl(\Gamma))\subseteq \Cn_{\cl'}(\tau(\Gamma))$ we have that
$\Gamma\vdash_\cl \bar\varphi \Rightarrow  \tau(\Gamma)\vdash_{\cl'} \tau(\bar\varphi)$.
Suppose now that $\tau(\bar\varphi)\in \Cn_{\cl'}(\tau(\Gamma))= \tau(\Cn_\cl(\Gamma))$. Hence
there is a $\bar\psi \in \Cn_\cl(\Gamma)$ such that $\tau (\bar\varphi)=\tau(\bar\psi)$. Since
$\tau$ is injective $\bar\varphi = \bar\psi$, and so, $\bar\varphi \in \Cn_\cl(\Gamma)$, \emph{i.e.},
$\Gamma\vdash_\cl \bar\varphi$.
\end{proof}

The approach to refinement proposed in this paper, in particular when specialised to 2-dimension deductive systems,
  should also be related to the
extensive work of Maibaum, Sadler and Veloso in the 70's and the 80's, as documented, for example, in \cite{Maibaum1,Maibaum2}.
The authors resort to  interpretations between theories and conservative extensions to define a syntactic notion of
refinement according to which
  a specification $SP'$ refines a specification $SP$ if there is an interpretation of $SP'$ into a conservative extension of $SP$. It is
   shown that these refinements can be vertically composed, therefore  entailing stepwise development.  This notion is, however,
   somehow restrictive since it requires all maps to be conservative, whereas in  program development it is usually
   enough to guarantee that  requirements are preserved by the underlying translation. Moreover, in their approach,
    the interpretation edge of a refinement diagram needs to satisfy extra properties.

As related work one should also mention \cite{Fiadeiro:1993:GIT:647322.721361,Vou} where interpretations between theories are studied in the abstract framework of $\pi$-institutions. The first reference is a generalisation of the work of Maibaum and his collaborators, whereas the second one generalises the way  algebraic semantics on sentential logics is dealt with in abstract algebraic logic
to the abstract setting of $\pi$-institutions. Similar developments could arise by considering institutions and their (co-)morphisms  \cite{instituicoes,Dia08,Tar95}. The work of
 Meseguer \cite{meseguer} on \emph{general logics}, in which a  theory of interpretations between logical systems is developed, should
also be mentioned.

Our own approach to refinement by interpretation can be placed between these general works and the original
contribution of Maibaum. Actually, on the one hand, we deal with general $k$-deductive systems therefore subsuming all frameworks above which are based on
 equational  or first order logic (\ie, on specific instances of  $k$-deductive systems). On the other hand, however, our results are
 formulated in terms of a concrete and intuitive notion of a deductive system; their scaling to an abstract, institutional level is still
 to be done.

\subsection{Conclusions and future work}

The paper introduced a new notion of refinement and started  the
development of a corresponding theory of \emph{refinement by interpretation}. The results obtained
and their applications seem promising, in the sense that a number of
useful transformations of (classes of models of) specifications are captured as refinement steps.
In order to clarify  the scope of our results we should point out that
the development in Section 3  can be straightforwardly generalised as
to apply to any Horn fragment of a structural logic $\mathcal{L}$
(\emph{i.e.},  a logic whose axioms have the form $\bigwedge H
\rightarrow c$, where $H \cup \{c\}$ is a subset of the atomic formulas of the logic, with $H$ possibly empty). All one has to do is to represent such a fragment by the natural equivalent deductive system taking the atomic formulas of $\mathcal{L}$ as its set of formulas and a presentation given by the axiomatisation of $\mathcal{L}$.

The generalisation made along sections \ref{sc:gen} and \ref{sc:rbi2}
turns it relevant to the specification meta-level, \ie,
 whenever an implementation step requires a change in the underlying logic. This
 often arises in formal software development with the need for accommodating new requirements
 (as in Example \ref{ex:ordbams})  or when a particular theorem prover, embodying a specific logic,
 is to be used for design validation (as in Example \ref{ex:tool}).
 Our most recent work \cite{studia} is another generalisation effort aiming at reframing this notion of
refinement in a categorical setting based on a characterisation of abstract logics as
coalgebras for the closure system contravariant functor \cite{Pal02} upon the category $\setcat$ of sets  and functions.

To conclude we would like to remark again the 'semantic' perspective from which this work was developed, as
extensively discussed in the Introduction. This entails the need for further research on how refinement by
interpretation, which is entirely based on properties of arbitrary deductive systems, can be smoothly combined
with concrete specification structuring operations. Preliminary work on this topic is  reported in \cite{RMMB11} in which
the emphasis is shifted to  specifications. Current work in this direction
includes the development of a refinement calculus of structured specifications over a $\pi$-institution.

As a general remark we would like to stress again that the approach developed in this paper can be applied to any notion of algebraic specification based on any fixed set of specification structuring combinators, further justifying the relevance of the 'semantic'  perspective adopted in this paper.

A practical limitation of this approach of reducing the specifications to their class of models is that often such classes have infinitely many models.
 However in some cases  subclasses possessing the same theory can be considered instead. For instance, if the class is a finitely generated quasivariety $K$ \cite{Gor98}, and consequently the associated 2-deductive system is finitary with a presentation given by the axiomatisation of the quasivariety, we can replace $K$ by the set of its generators which induces the same deductive system. An example is the class of Boolean Algebras which are generated (actually, as a variety) by the two-element Boolean algebra.

From an application point of view, this  'semantic' approach seems to have its own potentialities which we would like to recall.
Actually, in a number of cases it is relevant, and even mandatory, to start the implementation procedure from a set of models that does
not come from a structured specification. This can be the case when reusing designs  (a recurrent strategy in Engineering)
 or even to express meta-requirements that  cannot be easily accommodated within the classic refinement procedure.
Focussing on  \emph{classes of models}, on the other hand, makes possible to deal  with requirements that cannot
be properly formalised in a specification.
Note this does not entail any  loss of expressivity. The approach proposed in this paper can be tuned to specification refinement in a strict sense: for each specification, one may recursively compute its denotation (a signature and a class of models) and work directly with it.

In general, we believe that this approach has a real application potential, namely to deal
with specifications spanning through different specification logics. Particularly
deserving to be considered, but still requiring further investigation, are
observational logic \cite{BHK03}, hidden logic
\cite{grigore_thesis,conditional_prof,TCS_mart,Martins08} and behavioural logic \cite{Hen97}. In all
of these cases the satisfaction of requirements is discussed up to some particular
satisfaction relation and
their verification is checked with respect to relations obtained by replacing
 strict equality by its underlying notion of satisfaction. In this
context, a semantics based on $k$-structures paves the way
to the unification of all of these approaches. Actually, in all of them, models
consist of algebras whose  $k$-structures are of the form $\langle A,
\Theta\rangle$, where $\Theta$ captures the particular satisfaction relation in each formalism.
 In particular, the
strict models of a (classical) algebraic specification $SP$ consist of algebras $A$ whose
$k$-data structure $\langle A, \Delta_A\rangle$ is a model of  $\models_{\Mod(SP)}$.

Naturally, most of the models of software specifications are not admissible
choices as implementations. Therefore,
 the choice of adequate filters along the implementation process becomes a crucial, although
  not  trivial task. This should be driven by   the system nature (for example, adopting
    observational equality to
deal with objects with encapsulated data). A similar concern is, moreover, shared by other
 general approaches
to formal development, as, for example, \cite{Hen97} in the context of behavioral logic.

A lot of other questions remain to be answered.
One such topic, as mentioned above,  concerns \emph{horizontal composition}
of refinements by interpretation; \emph{vertical} composition raising no special problems as
shown in theorems \ref{th:ver1} and \ref{th:ver2}.
To illustrate the kind of results we are investigating suppose,
for example,  that $\tau$ interprets $SP$ in
$SP'$. The challenge would be to prove that  $\tau$ also interprets an enrichment of $SP$ by axioms in an
appropriate sub-specification of $SP'$.
A closely related issue is the extension of this approach to the level of (structured) specifications.
We believe that this  can be captured  in a somehow
standard way, which will be most relevant in
studying the interplay between  horizontal (\emph{i.e.}, architectural) and vertical (\emph{i.e.}, implementation driven)
levels of specification composition.  For example, the \emph{union} of two specifications will correspond to the union of the
corresponding consequence relations. Actually, a structured specification also defines a class of models and therefore
induces a deductive system.

Another topic to explore
is the equivalence of algebraic specifications up to logical interpretation. As a starting
point, it would be worth to explore the relation $\equiv$ defined as follows:
$SP\equiv SP'$ if there are
interpretations $\tau$ and $\rho$ such that $SP\rightharpoondown_\tau SP'$ and
$SP'\rightharpoondown_\rho SP$. It is not difficult to see that
$SP\models \xi$ implies $SP\models\rho(\tau(\xi))$ and $SP'\models \eta$ implies
$SP'\models\tau(\rho(\eta))$.  More challenging seems to be a stronger equivalence, studied in
the context of equivalence between deductive systems  \cite{CR,memoirs},
which requires interpretations to be \emph{mutually} inverse.

Last but not least, framing refinement by interpretation in the context of  recent works on heterogeneous specification, raises
interesting questions and opens the opportunity for computer-based support.
Actually, classical translations between logics (\eg, modal  into first-order or the latter  into equational logic) are at the basis
of \textsc{Hets} \cite{MML07,Mossakowski05Habil,manualhets}, the heterogeneous specification framework.
To go further in this direction entails the need to regard interpretations
from an institutional point of view \cite{Dia08}, as some sort of
comorphisms, and develop on top of it a calculus of refinements by interpretation.

\section*{Acknowledgements}
{\sloppy The authors express their gratitude to the anonymous referees who raised a number of pertinent questions entailing a
more precise  characterisation of the paper's contributions and a clarification of their scope.
This work was funded by ERDF - European Regional Development Fund through the COMPETE Programme (operational programme for competitiveness) and by National Funds through the FCT (Portuguese Foundation for Science and Technology) within project
 \texttt{FCOMP-01-0124-FEDER-028923} (Nasoni) and the project
 \texttt{PEst-C/MAT/UI4106/2011} with COMPETE number
 \texttt{FCOMP-01-0124-FEDER-022690} (CIDMA - UA). The first author
 also acknowledges the  financial assistance by the projects GetFun,
 reference \texttt{FP7-PEOPLE-2012-IRSES}, and \textsc{Nociones de
   Completud}, reference \texttt{FFI2009-09345} (MICINN - Spain).
 A.\ Madeira was supported by the FCT within the project \texttt{NORTE-01-0124-FEDER-000060}.}

\addcontentsline{toc}{section}{\protect\numberline{}{Bibliography}}

\bibliographystyle{alpha}


\begin{thebibliography}{GTWW77}

\bibitem[AKKB99]{AKK99}
E.~Astesiano, H.~Kreowski, and B.~Krieg-Br\"{u}ckner.
\newblock {\em Algebraic Foundations of Systems Specification}.
\newblock IFIP state-of-the-art reports. Springer, 1999.

\bibitem[BD74]{BD74}
R.~Balbes and P.~Dwinger.
\newblock {\em {Distributive lattices}}.
\newblock {Columbia, Missouri: University of Missouri Press}, 1974.


\bibitem[BG80]{CLEAR79}
R.~M. Burstall and J.~A. Goguen.
\newblock The semantics of {CLEAR}, a specification language.
\newblock In D.~Bj{\o}rner, editor, {\em Abstract Software Specifications (1979
  Copenhagen Winter School, January 22 - February 2, 1979)}, volume~86 of {\em
  Lecture Notes in Computer Science}, pages 292--332. Springer, 1980.

\bibitem[BH08]{BH08}
M.~Bidoit and R.~Hennicker.
\newblock An algebraic semantics for contract-based software components.
\newblock In J.~Meseguer and G.~Rosu, editors, {\em Algebraic Methodology and
  Software Technology (AMAST 2008 - Urbana, IL, USA, July 28-31, 2008}, volume
  5140 of {\em Lecture Notes in Computer Science}, pages 216--231. Springer,
  2008.

\bibitem[BHK03]{BHK03}
M.~Bidoit, R.~Hennicker, and A.~Kurz.
\newblock {Observational logic, constructor-based logic, and their duality.}
\newblock {\em Theor. Comput. Sci.}, 298(3):471--510, 2003.


\bibitem[BM13]{Bm13}
S.~Babenyshev, M.~A. Martins.
\newblock Deduction-detachment theorem in hidden$k$-logics
\newblock {\em  Journal of logic and Computation} ( doi:10.1093/logcom/ext008 )


\bibitem[Bor02]{Bo02}
T.~Borzyszkowski.
\newblock Logical systems for stuctured specifications.
\newblock {\em Theor. Comp. Science}, 286:197--245, 2002.

\bibitem[BP89]{memoirs}
W.~Blok and D.~Pigozzi.
\newblock Algebraizable logics.
\newblock {\em Memoirs of the American Mathematical Society}, 396, 1989.

\bibitem[BP01]{pigozzi}
W.~Blok and D.~Pigozzi.
\newblock Abstract algebraic logic and the deduction theorem.
\newblock 2001.
\newblock Preprint. Available at \texttt{http://www.math.iastate.edu/dpigozzi/
  papers/aaldedth.pdf}.

\bibitem[BR03]{blok}
W.~Blok and J.~Rebagliato.
\newblock Algebraic semantics for deductive systems.
\newblock {\em Studia Logica}, 74(1-2):153--180, 2003.

\bibitem[BS81]{Burris}
S.~Burris and H.~P. Sankappanavar.
\newblock {\em {A course in universal algebra.}}
\newblock {Graduate Texts in Mathematics, Vol. 78. New York - Heidelberg
  Berlin: Springer-Verlag}, 1981.

\bibitem[BSR04]{batory}
D.~Batory, J.~N. Sarvela, and A.~Rauschmayer.
\newblock Scaling step-wise refinement.
\newblock {\em IEEE Trans. in Sofware Engineering}, 30(6):355--371, 2004.

\bibitem[CCD09]{CCO09}
W.~A. Carnielli, M.~E. Coniglio, and I.~M. D'Ottaviano.
\newblock New dimensions on translations between logics.
\newblock {\em Logica Universalis}, 3(1):1--18, 2009.

\bibitem[CDE{\etalchar{+}}07]{mauderef}
M.~Clavel, F.~Dur\'{a}n, S.~Eker, P.~Lincoln, N.~ Mart\'{\i}-Oliet, J.~Meseguer, and C.~Talcott, editors.
\newblock {\em All About Maude - A High-Performance Logical Framework, How to
  Specify, Program and Verify Systems in Rewriting Logic}, volume 4350 of {\em
  Lecture Notes in Computer Science}. Springer, 2007.

\bibitem[CG05]{CR}
C.~Caleiro and R.~Gon{{\c{c}}}alves.
\newblock Equipollent logical systems.
\newblock In {\em Logica Universalis}, pages 99--111. Birkh\"auser, Basel,
  2005.

\bibitem[CHK{\etalchar{+}}11]{latin11}
M.~Codescu, F.~Horozal, M.~Kohlhase, T.~Mossakowski, and F.~ Rabe.
\newblock Project abstract: Logic atlas and integrator (latin).
\newblock In James Davenport, William Farmer, Josef Urban, and Florian Rabe,
  editors, {\em Intelligent Computer Mathematics}, volume 6824 of {\em Lecture
  Notes in Computer Science}, pages 289--291. Springer Berlin / Heidelberg,
  2011.

\bibitem[Cze01]{proto}
J.~Czelakowski.
\newblock {\em {Protoalgebraic Logics}}.
\newblock {Trends in logic, Studia Logica Library, Kluwer Academic Publishers},
  2001.

\bibitem[DD05]{DW05}
B.~Dolle and W.~Dosch.
\newblock Transforming functional signatures of algebraic specifications into
  object-oriented class signatures.
\newblock In {\em APSEC '05: Proceedings of the 12th Asia-Pacific Software
  Engineering Conference}, pages 323--332. IEEE Computer Society, 2005.

\bibitem[DF98]{cafeobjref}
R.~Diaconescu and K.~Futatsugi.
\newblock {\em CafeOBJ report: the language, proof techniques, and
  methodologies for object-oriented algebraic specification}.
\newblock AMAST series in computing. World Scientific, 1998.

\bibitem[Dia08]{Dia08}
R.~Diaconescu.
\newblock {\em Institution-independent Model Theory}.
\newblock Series in Universal Logic. Birkhauser, 2008.


\bibitem[DT11]{RT11}
R.~Diaconescu and I.~Tutu.
\newblock On the algebra of structured specifications.
\newblock {\em Theor. Comput. Sci.}, 412(28):3145--3174, 2011.

\bibitem[EM85]{EM85}
H.~Ehrig and B.~Mahr.
\newblock {\em Fundamentals of Algebraic Specification 1: Equations and Initial
  Semantics}.
\newblock Springer-Verlag, 1985.

\bibitem[Fav98]{Fav98}
L.~Favre.
\newblock Object oriented reuse through algebraic specifications.
\newblock In {\em Technology of Object-Oriented Languages and Systems
  (TOOLS'98, Melbourne, 23-26 Nov, 1998)}. IEEE Computer Society, 1998.

\bibitem[FD01]{traducoesconservativas}
H.~A. Feitosa and I.~M. D'Ottaviano.
\newblock Conservative translations.
\newblock {\em Ann. Pure Appl. Logic}, 108(1-3):205--227, 2001.

\bibitem[Fei97]{feitosatese}
H.~Feitosa.
\newblock {\em Tradu{\c{c}}\~oes Conservativas}.
\newblock PhD thesis, Universidade Federal de Campinas, Instituto de Filosofia
  e Ci\^encias Humanas, 1997.

\bibitem[FJP03]{FJP_survey}
J.~M. Font, R.~Jansana, and D.~Pigozzi.
\newblock {A survey of abstract algebraic logic.}
\newblock {\em Stud. Log.}, 74(1-2):13--97, 2003.

\bibitem[FM93]{Fiadeiro:1993:GIT:647322.721361}
J.~Fiadeiro and T.~S. Maibaum.
\newblock Generalising interpretations between theories in the context of
  ($\pi$-) institutions.
\newblock In {\em Proceedings of the First Imperial College Department of
  Computing Workshop on Theory and Formal Methods}, pages 126--147, London, UK,
  1993. Springer-Verlag.

\bibitem[FS88]{FS88}
J.~Fiadeiro and A.~Sernadas.
\newblock Structuring theories on consequence.
\newblock In D.~Sanella and A.~Tarlecki, editors, {\em Recent Trends in Data
  Type Specification. Specification of Abstract Data Types (Papers from the
  Fifth Workshop on Specification of Abstract Datac Types, Gullane, 1987)},
  volume 332 of {\em Lecture Notes in Computer Science}. Springer-Verlag,
  Berlin, 1988.

\bibitem[GB92]{instituicoes}
J.~Goguen and R.~Burstall.
\newblock Institutions: abstract model theory for specification and
  programming.
\newblock {\em J. ACM}, 39(1):95--146, 1992.

\bibitem[GH78]{GH78}
J.~V. Guttag and J.~J. Horning.
\newblock The algebraic specification of abstract data types.
\newblock {\em Acta Informatica}, 10:27--52, 1978.

\bibitem[Gli29]{G29}
V.~Glivenko.
\newblock {Sur quelques points de la logique de M. Brouwer}.
\newblock {\em Bulletins de la classe des sciences}, 15(5):183--188, 1929.

\bibitem[G{\"{o}}d86]{go33}
K.~G{\"{o}}del.
\newblock An interpretation of the intuitionistic proposicional calculus
  (1933).
\newblock In S.~Feferman~et al, editor, {\em Collected works of Kurt G\"{o}del
  (vol. I)}, pages 301--303. Oxford: Oxford University Press, 1986.

\bibitem[Gor98]{Gor98}
V.~A. Gorbunov.
\newblock {\em {Algebraic Theory of Quasivarieties}}.
\newblock Siberian School of Algebra and Logic. {Springer}, 1998.

\bibitem[GTW78]{GTW78}
J.~Goguen, J.~Thatcher, and E.~Wagner.
\newblock An initial algebra approach to the specification, correctness and
  implementation of abstract data types.
\newblock In R.~Yeh, editor, {\em Current Trends in Programming Methodology},
  pages 80--149. Prentice-Hall International, 1978.

\bibitem[GTWW77]{GTWW77}
J.~Goguen, J.~Thatcher, E.~Wagner, and J.~Wright.
\newblock Initial algebra semantics and continuous algebras.
\newblock {\em Jour. of the ACM}, 24(1):68--95, January 1977.

\bibitem[Gut75]{Gut75}
J.~V. Guttag.
\newblock {\em The Specification and Application to Programming of Abstract
  Data Types}.
\newblock PhD thesis, Dept. of Computer Science, University of Toronto, 1975.

\bibitem[GWM{\etalchar{+}}96]{GWMFJ96}
J.~Goguen, T.~Winkler, J.~Meseguer, K.~Futatsugi, and J.-P. Jouannaud.
\newblock Introducing {OBJ}.
\newblock In J.~Goguen and G.~Malcolm, editors, {\em Software Engineering with
  {OBJ}: Algebraic Specification in Practice}. Cambridge University Press,
  1996.

\bibitem[Hen97]{Hen97}
R.~Hennicker.
\newblock Structural specifications with behavioural operators: semantics,
  proof methods and applications, 1997.
\newblock Habilitationsschrift.

\bibitem[Hoa72]{Hoa72}
C.~A.~R. Hoare.
\newblock Proof of correctness of data representations.
\newblock {\em Acta Informatica}, 1:271--281, 1972.

\bibitem[HM13]{HMart}
D.~Hofmann and M.~A.~Martins.
\newblock On a coalgebraic view on Logic.
\newblock {\em Journal of logic and Computation}, 23(5):1097--1106, 2013 (doi: 10.1093/logcom/exs063).

\bibitem[HRD08]{HRD08}
J.~Henkel, C.~Reichenbach, and A.~Diwan.
\newblock Developing and debugging algebraic specifications for java classes.
\newblock {\em {ACM} Transactions on Software Engineering and Methodology},
  17(3):14--37, 2008.

\bibitem[Kol77]{Kol77}
A.~N. Kolmogorov.
\newblock {On the principle of excluded middle (1925).}
\newblock In J.~Hei-Jenoort, editor, {\em From Frege to G\"{o}del: a source
  book in mathematical logic 1879--1931}, pages 414--437. Cambridge: Harvard
  University Press, 1977.

\bibitem[LEW00]{lew00}
J.~Loeckx, H.-D. Ehrich, and M.~Wolf.
\newblock {Algebraic specification of abstract data types}.
\newblock In S.~Abramsky, D.~M. Gabbay, and T.~S.~E. Maibaum, editors, {\em
  {Handbook of Theoretical Computer Science, Volume 5}}. Oxford University
  Press, 2000.

\bibitem[LZ74]{LZ74}
B.~Liskov and S.~N. Zilles.
\newblock Programming with abstract data types.
\newblock In {\em Proceedings of ACM SIGPLAN Symposium on Very High Level
  Programming Languages}. SIGPLAN Notices, 9 (4), 1974.

\bibitem[Mad08]{madeira}
A.~Madeira.
\newblock Observational refinement process.
\newblock {\em Electr. Notes Theor. Comput. Sci.}, 214:103--129, 2008.

\bibitem[Mar06]{refinamentos_prof}
M.~A. Martins.
\newblock Behavioral institutions and refinements in generalized hidden logics.
\newblock {\em Journal of Universal Computer Science}, 12(8):1020--1049, 2006.

\bibitem[Mar07]{TCS_mart}
M.~A. Martins.
\newblock {Closure properties for the class of behavioral models.}
\newblock {\em Theor. Comput. Sci.}, 379(1-2):53--83, 2007.

\bibitem[Mar08]{Martins08}
M.~A. Martins.
\newblock On the behavioral equivalence between {\it k}-data structures.
\newblock {\em The Computer Journal}, 51(2):181--191, 2008.

\bibitem[MDT09]{LT}
T.~Mossakowski, R.~Diaconescu, and A.~Tarlecki.
\newblock What is a logic translation.
\newblock {\em Logica Universalis}, 3(1):95--124, 2009.

\bibitem[Mes89]{meseguer}
J.~Meseguer.
\newblock General logics.
\newblock In H.~D. Ebbinghaus, J.~Fernandez-Prida, M.~Garrido, D.~Lascar, and
  M.~Rodriguez-Artalejo, editors, {\em Logic Colloqium'87}, Studies in Logic
  and the Foundations of Mathematics (volume 129), pages 275--330. Elsevier,
  1989.

\bibitem[MHST03]{MHST03}
T.~Mossakowski, A.~Haxthausen, D.~Sannella, and A.~Tarlecki.
\newblock {CASL}: The common algebraic specification language: Semantics and
  proof theory.
\newblock {\em Computing and Informatics}, 22:285--321, 2003.

\bibitem[MMB09a]{MMB09}
M.~A. Martins, A.~Madeira, and L.~S. Barbosa.
\newblock Refinement by interpretation in a general setting.
\newblock {\em Electron. Notes Theor. Comput. Sci.}, 259:105--121, 2009.

\bibitem[MMB09b]{MMB09h}
M.~A. Martins, A.~Madeira, and L.~S. Barbosa.
\newblock Refinement via interpretation.
\newblock In D.~V. Hung and P.~Krishnan, editors, {\em Seventh IEEE
  International Conference on Software Engineering and Formal Methods (SEFM
  2009, Hanoi, Vietnam, 23-27 November 2009)}, pages 250--259. IEEE Computer
  Society, 2009.

\bibitem[MMB13]{studia}
M.~A. Martins, A.~Madeira, and L.~S. Barbosa.
\newblock A coalgebraic perspective on logical interpretations.
\newblock {\em Studia Logica - Special Issue on Abstract Algebraic Logic},
  101(4):783--825, 2013.

\bibitem[MML07]{MML07}
T.~Mossakowski, C.~Maeder, and K.~L\"{u}ttich.
\newblock The heterogeneous tool set, {Hets}.
\newblock In O.~Grumberg and M.~Huth, editors, {\em Tools and Algorithms for
  the Construction and Analysis of Systems (TACAS 2007 - Braga, Portugal, March
  24 - April 1, 2007)}, volume 4424 of {\em Lecture Notes in Computer Science},
  pages 519--522. Springer, 2007.

\bibitem[MML09]{manualhets}
T.~Mossakowski, C.~Maeder, and K.~L\"{u}ttich.
\newblock Hets user guide (version 0.85), 2009.

\bibitem[Mos05]{Mossakowski05Habil}
T.~Mossakowski.
\newblock {Heterogeneous specification and the heterogeneous tool set}, 2005.
\newblock Habilitation thesis.

\bibitem[MP07]{conditional_prof}
M.~A. Martins and D.~Pigozzi.
\newblock Behavioural reasoning for conditional equations.
\newblock {\em Mathematical Structures in Computer Science}, 17(5):1075--1113,
  2007.

\bibitem[MSV84]{Maibaum1}
T.~S. Maibaum, M.~R. Sadler, and P.~A. Veloso.
\newblock Logical specification and implementation.
\newblock In J.~Mathai and R.~K. Shyamasundar, editors, {\em Foundations of
  Software Technology and Theoretical Computer Science (FSTTCS, Bangalore,
  India, Dec.13-15, 1984)}, volume 4424 of {\em Lecture Notes in Computer
  Science}, pages 13--30, London, UK, 1984. Springer.

\bibitem[MVS85]{Maibaum2}
T.~S.~E. Maibaum, Paulo A.~S. Veloso, and M.~R. Sadler.
\newblock A theory of abstract data types for program development: Bridging the
  gap?
\newblock In Hartmut Ehrig, Christiane Floyd, Maurice Nivat, and James~W.
  Thatcher, editors, {\em Mathematical Foundations of Software Development
  (TAPSOFT, Berlin, Germany, March 25-29, 1985)}, volume 186 of {\em Lecture
  Notes in Computer Science}, pages 214--230. Springer, 1985.

\bibitem[Pal02]{Pal02}
A.~Palmigiano.
\newblock Abstract logics as dialgebras.
\newblock {\em Electr. Notes Theor. Comput. Sci.}, 65(1), 2002.

\bibitem[Par72]{Par72}
D.~Parnas.
\newblock Information distribution aspects of design methodology.
\newblock In {\em Information Processing '72}, pages 339--344. North-Holland,
  1972.

\bibitem[Pig91]{Pig91}
D.~Pigozzi.
\newblock Equality-test and if-then-else algebras: Axiomatization and
  specification.
\newblock {\em SIAM J. Comput.}, 20(4):766--805, 1991.

\bibitem[PM68]{PM68}
D.~Prawitz and P.-E. Malmn\"{a}s.
\newblock {A survey of some connections between classical, intuitionistic and
  minimal logic}.
\newblock In {\em Contributions to Mathematical Logic: Proc. Logic Colloq.
  (Hannover 1966)}, pages 215--229. North-Holland, 1968.

\bibitem[Rab08]{RabePhD08}
F.~Rabe.
\newblock {\em {Representing Logics and Logic Translations}}.
\newblock PhD thesis, Jacobs University Bremen, 2008.

\bibitem[RMMB11]{RMMB11}
C.~J. Rodrigues, M.~A. Martins, A.~Madeira, and L.~S. Barbosa.
\newblock Refinement by interpretation in $\pi$-institutions.
\newblock In John Derrick, Eerke~A. Boiten, and Steve Reeves, editors, {\em
  Proceedings 15th International Refinement Workshop}, volume~55 of {\em
  EPTCS}, pages 53--64, 2011.

\bibitem[Ro{\c{s}}00]{grigore_thesis}
G.~Ro{\c{s}}u.
\newblock {\em Hidden Logic}.
\newblock PhD thesis, University of California, San Diego, 2000.



\bibitem[San01]{COFI02}
D.~Sannella.
\newblock The common framework initiative for algebraic specification and
  development of software: Recent progress.
\newblock In M.~Cerioli and G.~Reggio, editors, {\em Recent Trends in Algebraic
  Development Techniques (Revised Selected Papers of WADT 2001,Genova, Italy,
  April 1-3, 2001)}, volume 2267 of {\em Lecture Notes in Computer Science},
  pages 328--344. Springer, 2001.

\bibitem[SDS99]{SOF99}
J.~Da Silva, I.~M. D'Ottaviano, and A.~M. Sette.
\newblock {Translations between logics.}
\newblock In {\em Models, algebras, and proofs: Selected papers of the X Latin
  American Symposium on Mathematical Logic, (Bogot\'{a}, 1995)}, pages
  435--448. Lect. Notes Pure Appl. Math. (203), 1999.

\bibitem[ST97]{ST97}
D.~Sannella and A.~Tarlecki.
\newblock Essential concepts of algebraic specification and program
  development.
\newblock {\em Formal Aspects of Computing}, (9):229--269, 1997.

\bibitem[ST06]{ST06}
D.~Sannella and A.~Tarlecki.
\newblock Horizontal composability revisited.
\newblock In K.~Futatsugi, J.-P. Jouannaud, and J.~Meseguer, editors, {\em
  Algebra, Meaning, and Computation, Essays Dedicated to Joseph A. Goguen on
  the Occasion of His 65th Birthday}, volume 4060 of {\em Lecture Notes in
  Computer Science}, pages 296--316. Springer, 2006.


\bibitem[ST11]{ST11}
D.~Sannella and A.~Tarlecki.
\newblock {\em {Foundations of algebraic specification and formal software
  development.}}
\newblock { Springer Verlag}, 2011.

\bibitem[Tar56]{Tarski}
A.~Tarski.
\newblock {\em {Logic, semantics, meta-mathematics. Papers from 1923 to 1938.
  Translated by J. H. Woodger.}}
\newblock {Oxford: Clarendon Press; London: G. Cumberlege XIV, 471 p. }, 1956.

\bibitem[Tar85]{Tar85}
A.~Tarlecki.
\newblock On the existence of free models in abstract algebraic institutuons.
\newblock {\em Theor. Comp. Science}, 37:269--304, 1985.

\bibitem[Tar96]{Tar95}
A.~Tarlecki.
\newblock Moving between logical systems.
\newblock In Magne Haveraaen, Olaf Owe, and Ole-Johan Dahl, editors, {\em
  Recent Trends in Data Type Specification (Selected Papers of 11th Workshop on
  Specification of Abstract Data Types Joint with the 8th COMPASS Workshop,
  Oslo, Norway, September 19-23, 1995)}, volume 1130 of {\em Lecture Notes in
  Computer Science}, pages 478--502. Springer, 1996.

\bibitem[Vou03]{Vou2003}
G.~Voutsadakis.
\newblock Categorical abstract algebraic logic: Equivalent institutions.
\newblock {\em Studia Logica}, 74:275--311, 2003.

\bibitem[Vou13]{Vou}
G.~Voutsadakis.
\newblock Categorical abstract algebraic logic: Algebraic semantics for
  $\pi$-institutions.
\newblock {\em Math. Log. Q.}, 59(3):177--200, 2013.

\bibitem[Wir90]{Wir90}
M.~Wirsing.
\newblock Algebraic specification.
\newblock In J.~van Leeuwen, editor, {\em Handbook of Theoretical Computer
  Science (volume B)}, pages 673--788. Elsevier - MIT Press, 1990.

\bibitem[W{\'{o}}j88]{Woj}
R.~W{\'{o}}jcicki.
\newblock {\em {Theory of logical caculi. Basic theory of consequence
  operations.}}
\newblock {Synthese Library, 199. Dordrecht etc.: Kluwer Academic Publishers.},
  1988.

\bibitem[YKZZ08]{YKZZ08}
B.~Yu, L.~Kong, Y.~Zhang, and H.~Zhu.
\newblock Testing java components based on algebraic specifications.
\newblock In {\em First International Conference on Software Testing,
  Verification, and Validation, ICST 2008}, pages 190--199. IEEE Computer
  Society, 2008.

\end{thebibliography}

\newcommand{\etalchar}[1]{$^{#1}$}

\end{document}